\numberwithin{equation}{section}
\newcommand{\Hvier}{$H^{4}$}
\newcommand{\Hsechs}{$H^{6}$}
\newcommand{\Z}{\mathbb{Z}}
\newcommand{\N}{\mathbb{N}}
\newcommand{\Cn}{\mathbb{C}}
\DeclareMathOperator{\Id}{Id}
\newcommand{\tHeq}[2][]{${}_{t}H^{\varepsilon #1}_{#2}$}
\newcommand{\Fp}[1]{F^{(+)}_{#1}}
\newcommand{\Fm}[1]{F^{(-)}_{#1}}
\newcommand{\Fppp}{\Fp{n}\Fp{m}}
\newcommand{\Fpmm}{\Fp{n}\Fm{m}}
\newcommand{\Fmpm}{\Fm{n}\Fp{m}}
\newcommand{\Fmmp}{\Fm{n}\Fm{m}}
\renewcommand{\epsilon}{\varepsilon}
\renewcommand{\imath}{\mathrm{i}}
\renewcommand{\pdv}[2]{\begingroup
 \@tempswafalse\toks@={}\count@=\z@
 \@for\next:=#2\do
 {\expandafter\check@var\next\@nil
 \advance\count@\der@exp
 \if@tempswa
 \toks@=\expandafter{\the\toks@\,}%
 \else
 \@tempswatrue
 \fi
 \toks@=\expandafter{\the\expandafter\toks@\expandafter\partial\der@var}}%
 \frac{\partial\ifnum\count@=\@ne\else^{\number\count@}\fi#1}{\the\toks@}%
 \endgroup}
\def\check@var{\@ifstar{\mult@var}{\one@var}}
\def\mult@var#1#2\@nil{\def\der@var{#2^{#1}}\def\der@exp{#1}}
\def\one@var#1\@nil{\def\der@var{#1}\chardef\der@exp\@ne}
\newtheorem{Theorem}{Theorem}[section]
\newtheorem{Proposition}[Theorem]{Proposition}
 { \theoremstyle{definition}
\newtheorem{Remark}[Theorem]{Remark} }
\begin{document}

\allowdisplaybreaks

\newcommand{\arXivNumber}{1704.05805}

\renewcommand{\thefootnote}{}

\renewcommand{\PaperNumber}{008}

\FirstPageHeading

\ShortArticleName{Darboux Integrability of Trapezoidal $H^{4}$ and $H^{6}$ Families of Lattice Equations II}

\ArticleName{Darboux Integrability of Trapezoidal $\boldsymbol{H^{4}}$ and $\boldsymbol{H^{6}}$\\ Families of Lattice Equations II: General Solutions\footnote{This paper is a~contribution to the Special Issue on Symmetries and Integrability of Dif\/ference Equations. The full collection is available at \href{http://www.emis.de/journals/SIGMA/SIDE12.html}{http://www.emis.de/journals/SIGMA/SIDE12.html}}}

\Author{Giorgio GUBBIOTTI~$^{\dag\ddag}$, Christian SCIMITERNA~$^{\ddag}$ and Ravil I.~YAMILOV~$^{\S}$}

\AuthorNameForHeading{G.~Gubbiotti, C.~Scimiterna and R.I.~Yamilov}

\Address{$^{\dag}$~School of Mathematics and Statistics, F07, The University of Sydney,\\
\hphantom{$^{\dag}$}~New South Wales 2006, Australia}
\EmailD{\href{mailto:giorgio.gubbiotti@sydney.edu.au}{giorgio.gubbiotti@sydney.edu.au}}

\Address{$^{\ddag}$~Dipartimento di Matematica e Fisica, Universit\`a degli Studi Roma Tre and Sezione INFN\\
\hphantom{$^{\ddag}$}~di Roma Tre, Via della Vasca Navale 84, 00146 Roma, Italy}
\EmailD{\href{mailto:gubbiotti@mat.uniroma3.it}{gubbiotti@mat.uniroma3.it}, \href{mailto:scimiterna@fis.uniroma3.it}{scimiterna@fis.uniroma3.it}}

\Address{$^{\S}$~Institute of Mathematics, Ufa Scientific Center, Russian Academy of Sciences,\\
\hphantom{$^{\S}$}~112 Chernyshevsky Str., Ufa 450008, Russia}
\EmailD{\href{mailto:RvlYamilov@matem.anrb.ru}{RvlYamilov@matem.anrb.ru}}

\ArticleDates{Received April 26, 2017, in f\/inal form January 16, 2018; Published online February 02, 2018}

\Abstract{In this paper we construct the general solutions of two families of quad-equations, namely the trapezoidal $H^{4}$ equations and the $H^{6}$ equations. These solutions are obtained exploiting the properties of the f\/irst integrals in the Darboux sense, which were derived in [Gubbiotti~G., Yamilov~R.I., \textit{J.~Phys.~A: Math. Theor.} \textbf{50} (2017), 345205, 26~pages]. These f\/irst integrals are used to reduce the problem to the solution of some linear or linearizable non-autonomous ordinary dif\/ference equations which can be formally solved.}

\Keywords{quad-equations; Darboux integrability; exact solutions; CAC}

\Classification{37K10; 37L60; 39A14}

\renewcommand{\thefootnote}{\arabic{footnote}}
\setcounter{footnote}{0}

\section{Introduction}

Since its introduction the integrability criterion denoted consistency around the cube (CAC) has been a source of many results in the classif\/ication of \emph{quad-equations}. We def\/ine a quad-equation to be a relation of the form:
\begin{gather}
 Q ( x,x_{1},x_{2},x_{12} )=0, \label{eq:quadgraphgen}
\end{gather}
where $Q\in \Cn [ x,x_{1},x_{2},x_{12} ]$ is an irreducible multi-af\/f\/ine polynomial. This equation is def\/ined on the four points displayed in Fig.~\ref{fig:geomquad} which form a \emph{square quad graph}.
\begin{figure}[htbp] \centering
 \begin{tikzpicture}
 \node (x1) at (0,0) [circle,fill,label=-135:{$x$}] {};
 \node (x4) at (0,2.5) [circle,fill,label=135:{$x_{1}$}] {};
 \node (x2) at (2.5,0) [circle,fill,label=-45:{$x_{2}$}] {};
 \node (x3) at (2.5,2.5) [circle,fill,label=45:{$x_{12}$}] {};
 \draw [thick] (x2) to (x1);
 \draw [thick] (x4) to (x3);
 \draw [thick] (x3) to (x2);
 \draw [thick] (x1) to (x4);
 \end{tikzpicture}
 \caption{Quad-equation on a square.}\label{fig:geomquad}
\end{figure}
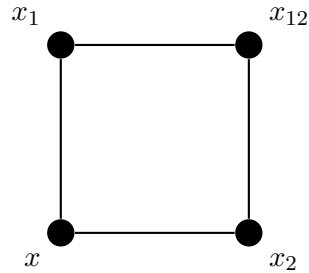

Roughly speaking the CAC approach consist in adding third direction, def\/ined by the label~3 to a quad-equation~\eqref{eq:quadgraphgen}
and extend it to a system of six equations living on the faces of a~cube,
usually labeled $A$, $\bar{A}$, $B$, $\bar{B}$, $C$ and $\bar{C}$, see Fig.~\ref{fig:cube2}.
We say that the system of six equations given by~$A$, $\bar{A}$, $B$, $\bar{B}$, $C$ and $\bar{C}$
possess the Consistency Around the Cube if three ways of computing~$x_{123}$ from $\bar{A}$, $\bar{B}$, and $\bar{C}$ coincide up to
the values of $x_{12}$, $x_{23}$ and $x_{13}$ obtained from~$A$,~$B$ and~$C$ respectively.

The CAC criterion has proved to be important in studying the integrability
properties of quad-equations since from the CAC it is possible to f\/ind
B\"acklund transformations
\cite{BobenkoSuris2002,Bridgman2013,DoliwaSantini1997,Nijhoff2002,Nijhoff2001}
and, as a consequence, Lax pairs.
It is well known~\cite{Yamilov2006} that Lax
pairs and B\"acklund transforms are associated with both linearizable
and integrable equations.
We point out that to be \emph{bona fide}
a Lax pair has to give rise to a genuine
spectral problem \cite{CalogeroDeGasperisIST_I},
otherwise the Lax pair is called \emph{fake Lax pair}
\cite{HayButler2013,HayButler2015, CalogeroNucci1991,Hay2009,Hay2011}.
A fake Lax pair is useless in proving (or disproving)
the integrability, since it can be equally found
for integrable and non-integrable equations.
In the linearizable case Lax pairs must be then fake ones, even though
proving this statement is usually a nontrivial task \cite{GSL_Gallipoli15}.
For a complete, pedagogical explanation of the CAC method
we refer to \cite{Bobenko2008book,HietarintaBook,HietarintaJoshiNijhoff2016}.

\begin{figure}[hbp]
 \centering
 \begin{tikzpicture}[auto,scale=1.0]
 \node (x) at (0,0) [circle,fill,label=-135:$x$] {};
 \node (x1) at (4,0) [circle,fill,label=-45:$x_{1}$] {};
 \node (x2) at (1.5,1.5) [circle,fill,label=-45:$x_{2}$] {};
 \node (x3) at (0,4) [circle,fill,label=135:$x_{3}$] {};
 \node (x12) at (5.5,1.5) [circle,fill,label=-45:$x_{12}$] {};
 \node (x13) at (4,4) [circle,fill,label=-45:$x_{13}$] {};
 \node (x23) at (1.5,5.5) [circle,fill,label=135:$x_{23}$] {};
 \node (x123) at (5.5,5.5) [circle,fill,label=45:$x_{123}$] {};
 \node (A) at (2.75,0.75) {$A$};
 \node (Aq) at (2.75,4.75) {$\bar A$};
 \node (B) at (0.75,2.75) {$B$};
 \node (Bq) at (4.75,2.75) {$\bar B$};
 \node (C) at (2,2) {$C$};
 \node (Cq) at (3.5,3.5) {$\bar C$};
 \draw (x) -- (x1) -- (x12) -- (x123) -- (x23) -- (x3) -- (x);
 \draw (x3) -- (x13) -- (x1);
 \draw (x13) -- (x123);
 \draw [dashed] (x) -- (x2) -- (x12);
 \draw [dashed] (x2) -- (x23);
 \draw [dotted,thick] (A) to (Aq);
 \draw [dotted,thick] (B) to (Bq);
 \draw [dotted,thick] (C) to (Cq);
 \end{tikzpicture}
 \caption{Equations on a cube.} \label{fig:cube2}
\end{figure}
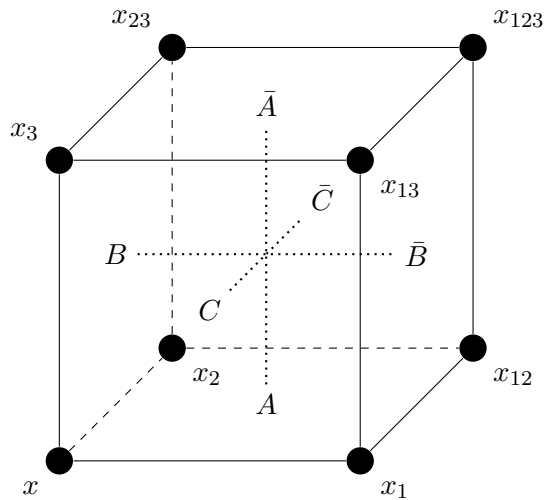

Being algorithmically applicable the CAC criterion proved to be a well suited method to f\/ind and classify integrable quad-equations. The f\/irst attempt to classify, with some additional assumptions, all the quad-equations possessing CAC was carried out in~\cite{ABS2003}. The result was the existence of three classes of discrete autonomous equations with this property: the $H$ equations, the $Q$ equations and the~$A$. The $A$ equations can be transformed in particular cases of the $Q$ equations through non-autonomous M\"obius transformation, therefore they are usually removed from the general classif\/ication. Releasing one of the technical hypothesis of \cite{ABS2003}, i.e., that face of the cube (Fig.~\ref{fig:cube2}) carries the same equation, the same authors in \cite{ABS2009} presented some new equations without classif\/ication purposes. A complete classif\/ication in this extended setting was then accomplished by R.~Boll in a series of papers culminating in his Ph.D.~Thesis \cite{Boll2011,Boll2012a,Boll2012b}. In these papers the classif\/ication of all the consistent sextuples of quad-equations. The only technical assumption used in \cite{Boll2011,Boll2012a,Boll2012b} is the tetrahedron property, i.e., the requirement that $x_{123}$ is independent from~$x$. The obtained equations may fall into three disjoint families depending on their bi-quadratics
\begin{gather*}
 h_{ij}=\pdv{Q}{y_{k}}\pdv{Q}{y_{l}}-Q\pdv{Q}{y_{k},y_{l}}, \qquad Q=Q ( y_{1},y_{2},y_{3},y_{4} ), 
\end{gather*}
where we use a special notation for variables of $Q$, and the pair $\{k,l\}$ is the complement of the pair $\{i,j\}$ in $\{ 1,2,3,4\}$. A~bi-quadratic is called \emph{degenerate} if it contains linear factors of the form $y_{i}-c$, where $c$ is a constant, otherwise a bi-quadratic is called \emph{non-degenerate}. The three families are classif\/ied depending on how many bi-quadratics are degenerate:
\begin{itemize}\itemsep=0pt
 \item $Q$-type equations: all the bi-quadratics are non-degenerate,
 \item $H^{4}$-type equations: four bi-quadratics are degenerate,
 \item $H^{6}$-type equations: all of the six bi-quadratics are degenerate.
\end{itemize}
Let us notice that the $Q$ family is the same as the one introduced in~\cite{ABS2003}. The $H^{4}$ equations are divided into two subclasses: \emph{rhombic} and \emph{trapezoidal}, depending on their discrete symmetries. We remark that all classif\/ication results hold locally in the sense that they relate to a single quadrilateral cell or a single cube. The extension on the whole lattice $\Z^{2}$ is obtained through ref\/lection considering an elementary cell of size $2\times2$. This implies that the $H^{4}$ and $H^{6}$ equations as lattice equations are non-autonomous equations with two-periodic coef\/f\/icients. For more details on the construction of equations on the lattice from the single cell equations, we refer to \cite{Boll2011,Boll2012a,Boll2012b, Xenitidis2009} and to the Appendix in~\cite{GSL_general}.

A detailed study of all the lattice equations derived from the \emph{rhombic} $H^{4}$ family, including the construction of their three-leg forms, Lax pairs, B\"acklund transformations and inf\/inite hierarchies of generalized symmetries, has been presented in \cite{Xenitidis2009}. So there was plenty of results about the $Q$ and the rhombic $H^{4}$ equations. On the contrary, besides the CAC property little was known about the integrability features of the \emph{trapezoidal} $H^{4}$ equations and of the $H^{6}$ equations. Therefore these equations where thoroughly studied in a series of papers \cite{GSL_general,GSL_Gallipoli15,GSL_Pavel,GSL_symmetries,GSL_QV} with some unexpected results. First in \cite{GSL_general} was presented their explicit non-autonomous form. Indeed it was shown that on the $\Z^{2}$ lattice with independent variables $(n,m)$ and dependent variable $u_{n,m}$ the trapezoidal $H^{4}$ equations had the following expression
\begin{subequations} \label{eq:trapezoidalH4}
 \begin{gather}
 {}_{t}H_{1}^{\varepsilon} \colon \ (u_{n,m}-u_{n+1,m} ) (u_{n,m+1}-u_{n+1,m+1} ) \nonumber\\
\hphantom{{}_{t}H_{1}^{\varepsilon} \colon \ }{} -\alpha_{2}\epsilon^2\big({F}_{m}^{(+)}u_{n,m+1}u_{n+1,m+1}
 +{F}_{m}^{(-)}u_{n,m}u_{n+1,m}\big) -\alpha_{2}=0, \label{eq:tH1e} \\
 {}_{t}H_{2}^{\varepsilon}\colon \
(u_{n,m}-u_{n+1,m})(u_{n,m+1}-u_{n+1,m+1}) +\alpha_{2} (u_{n,m}+u_{n+1,m}+u_{n,m+1}+u_{n+1,m+1} )\nonumber \\
\hphantom{ {}_{t}H_{2}^{\varepsilon}\colon \ }{} +\frac{\epsilon\alpha_{2}}{2} \big(2{F}_{m}^{(+)}u_{n,m+1}
 +2\alpha_{3}+\alpha_{2}\big)\big(2{F}_{m}^{(+)}u_{n+1,m+1}+2\alpha_{3}+\alpha_{2}\big)\nonumber \\
\hphantom{ {}_{t}H_{2}^{\varepsilon}\colon \ }{}
+\frac{\epsilon\alpha_{2}}{2} \big(2{F}_{m}^{(-)}u_{n,m}+2\alpha_{3}
 +\alpha_{2}\big)\big(2{F}_{m}^{(-)}u_{n+1,m}+2\alpha_{3}+\alpha_{2}\big)\nonumber \\
 \hphantom{ {}_{t}H_{2}^{\varepsilon}\colon \ }{}
 + (\alpha_{3}+\alpha_{2} )^2-\alpha_{3}^2-2\epsilon\alpha_{2}\alpha_{3} (\alpha_{3}+\alpha_{2} )=0, \label{eq:tH2e} \\
{} _{t}H_{3}^{\varepsilon}\colon \
\alpha_{2}(u_{n,m}u_{n+1,m+1}+u_{n+1,m}u_{n,m+1}) \nonumber \\
\hphantom{{} _{t}H_{3}^{\varepsilon}\colon \ }{} - (u_{n,m}u_{n,m+1}+u_{n+1,m}u_{n+1,m+1} ) -\alpha_{3}\big(\alpha_{2}^{2}-1\big)\delta^2\nonumber \\
\hphantom{{} _{t}H_{3}^{\varepsilon}\colon \ }{} -\frac{\epsilon^2(\alpha_{2}^{2}-1)}{\alpha_{3}\alpha_{2}}
 \big({{F}_{m}^{(+)}u_{n,m+1}u_{n+1,m+1} +{F}_{m}^{(-)}u_{n,m}u_{n+1,m}}\big)=0, \label{eq:tH3e}
 \end{gather}
\end{subequations}
and the \Hsechs\ equations had the following expression
\begin{subequations} \label{eq:h6}
 \begin{gather}
{}_{1}D_{2} \colon \ \big( F_{n+m}^{(-)}-\delta_{1} F_{n}^{(+)} F_{m}^{(-)}+\delta_{2} F_{n}^{(+)} F_{m}^{(+)}\big)u_{n,m}\nonumber\\
\hphantom{{}_{1}D_{2} \colon \ }{} +\big( F_{n+m}^{(+)}-\delta_{1} F_{n}^{(-)} F_{m}^{(-)}+\delta_{2} F_{n}^{(-)} F_{m}^{(+)}\big)u_{n+1,m}\nonumber \\
\hphantom{{}_{1}D_{2} \colon \ }{}+\big( F_{n+m}^{(+)}-\delta_{1} F_{n}^{(+)} F_{m}^{(+)}+\delta_{2} F_{n}^{(+)} F_{m}^{(-)}\big)u_{n,m+1}\nonumber \\
\hphantom{{}_{1}D_{2} \colon \ }{}+\big( F_{n+m}^{(-)}-\delta_{1} F_{n}^{(-)} F_{m}^{(+)}+\delta_{2} F_{n}^{(-)} F_{m}^{(-)}\big)u_{n+1,m+1}\nonumber \\
\hphantom{{}_{1}D_{2} \colon \ }{}+\delta_{1}\big( F_{m}^{(-)}u_{n,m}u_{n+1,m}+ F_{m}^{(+)}u_{n,m+1}u_{n+1,m+1}\big)\nonumber \\
 \hphantom{{}_{1}D_{2} \colon \ }{}+ F_{n+m}^{(+)}u_{n,m}u_{n+1,m+1} + F_{n+m}^{(-)}u_{n+1,m}u_{n,m+1}=0, \label{eq:1D2} \\
 {}_{2}D_{2} \colon \
\big(F_{m}^{(-)}-\delta_{1}F_{n}^{(+)}F_{m}^{(-)}+\delta_{2}F_{n}^{(+)}F_{m}^{(+)}-\delta_{1} \lambda F_{n}^{(-)}F_{m}^{(+)}\big)u_{n,m}\nonumber \\
\hphantom{{}_{2}D_{2} \colon \ }{}
+\big(F_{m}^{(-)}-\delta_{1}F_{n}^{(-)}F_{m}^{(-)}+\delta_{2}F_{n}^{(-)}F_{m}^{(+)}-\delta_{1} \lambda F_{n}^{(+)}F_{m}^{(+)}\big)u_{n+1,m}\nonumber \\
\hphantom{{}_{2}D_{2} \colon \ }{}
+\big(F_{m}^{(+)}-\delta_{1}F_{n}^{(+)}F_{m}^{(+)}+\delta_{2}F_{n}^{(+)}F_{m}^{(-)}-\delta_{1} \lambda F_{n}^{(-)}F_{m}^{(-)}\big)u_{n,m+1}\nonumber \\
\hphantom{{}_{2}D_{2} \colon \ }{}
+\big(F_{m}^{(+)}-\delta_{1}F_{n}^{(-)}F_{m}^{(+)}+\delta_{2}F_{n}^{(-)}F_{m}^{(-)}-\delta_{1} \lambda F_{n}^{(+)}F_{m}^{(-)}\big)u_{n+1,m+1}\nonumber \\
\hphantom{{}_{2}D_{2} \colon \ }{}
+\delta_{1}\big(F_{n+m}^{(-)}u_{n,m}u_{n+1,m+1}+F_{n+m}^{(+)}u_{n+1,m}u_{n,m+1}\big)\nonumber \\
\hphantom{{}_{2}D_{2} \colon \ }{}
+F_{m}^{(+)}u_{n,m}u_{n+1,m}+F_{m}^{(-)}u_{n,m+1}u_{n+1,m+1} -\delta_{1}\delta_{2}\lambda=0, \label{eq:2D2} \\
{}_{3}D_{2} \colon \
\big(F_{m}^{(-)}-\delta_{1}F_{n}^{(-)}F_{m}^{(-)}+\delta_{2}F_{n}^{(+)}F_{m}^{(+)}-\delta_{1} \lambda F_{n}^{(-)}F_{m}^{(+)}\big)u_{n,m}\nonumber \\
\hphantom{{}_{3}D_{2} \colon \ }{}
+\big(F_{m}^{(-)}-\delta_{1}F_{n}^{(+)}F_{m}^{(-)}+\delta_{2}F_{n}^{(-)}F_{m}^{(+)}-\delta_{1} \lambda F_{n}^{(+)}F_{m}^{(+)}\big)u_{n+1,m}\nonumber \\
\hphantom{{}_{3}D_{2} \colon \ }{}
+\big(F_{m}^{(+)}-\delta_{1}F_{n}^{(-)}F_{m}^{(+)}+\delta_{2}F_{n}^{(+)}F_{m}^{(-)}-\delta_{1} \lambda F_{n}^{(-)}F_{m}^{(-)}\big)u_{n,m+1}\nonumber \\
\hphantom{{}_{3}D_{2} \colon \ }{}
+\big(F_{m}^{(+)}-\delta_{1}F_{n}^{(+)}F_{m}^{(+)}+\delta_{2}F_{n}^{(-)}F_{m}^{(-)}-\delta_{1} \lambda F_{n}^{(+)}F_{m}^{(-)}\big)u_{n+1,m+1}\nonumber \\
\hphantom{{}_{3}D_{2} \colon \ }{}
+\delta_{1}\big(F_{n}^{(-)}u_{n,m}u_{n,m+1}+F_{n}^{(+)}u_{n+1,m}u_{n+1,m+1}\big)\nonumber \\
\hphantom{{}_{3}D_{2} \colon \ }{}
+F_{m}^{(-)}u_{n,m+1}u_{n+1,m+1} +F_{m}^{(+)}u_{n,m}u_{n+1,m}-\delta_{1}\delta_{2}\lambda=0, \label{eq:3D2} \\
 D_{3} \colon \
F_{n}^{(+)}F_{m}^{(+)}u_{n,m}+F_{n}^{(-)}F_{m}^{(+)}u_{n+1,m} +F_{n}^{(+)}F_{m}^{(-)}u_{n,m+1} \nonumber \\
\hphantom{D_{3} \colon \ }{} +F_{n}^{(-)}F_{m}^{(-)}u_{n+1,m+1} +F_{m}^{(-)}u_{n,m}u_{n+1,m} +F_{n}^{(-)}u_{n,m}u_{n,m+1}+F_{n+m}^{(-)}u_{n,m}u_{n+1,m+1}\nonumber \\
\hphantom{D_{3} \colon \ }{}+F_{n+m}^{(+)}u_{n+1,m}u_{n,m+1}+F_{n}^{(+)}u_{n+1,m}u_{n+1,m+1}+F_{m}^{(+)}u_{n,m+1}u_{n+1,m+1}=0, \label{eq:D3} \\
 {}_{1}D_{4} \colon \
\delta_{1}\big(F_{n}^{(-)}u_{n,m}u_{n,m+1}+F_{n}^{(+)}u_{n+1,m}u_{n+1,m+1}\big)\nonumber\\
\hphantom{{}_{1}D_{4} \colon \ }{} +\delta_{2}\big(F_{m}^{(-)}u_{n,m}u_{n+1,m}+F_{m}^{(+)}u_{n,m+1}u_{n+1,m+1}\big)\nonumber\\
\hphantom{{}_{1}D_{4} \colon \ }{} +u_{n,m}u_{n+1,m+1}+u_{n+1,m}u_{n,m+1}+\delta_{3}=0, \label{eq:1D4} \\
 {} _{2}D_{4} \colon \
\delta_{1}\big(F_{n}^{(-)}u_{n,m}u_{n,m+1} +F_{n}^{(+)}u_{n+1,m}u_{n+1,m+1}\big) \nonumber \\
\hphantom{{} _{2}D_{4} \colon \ }{} +\delta_{2}\big(F_{n+m}^{(-)}u_{n,m}u_{n+1,m+1} +F_{n+m}^{(+)}u_{n+1,m}u_{n,m+1}\big)\nonumber \\
\hphantom{{} _{2}D_{4} \colon \ }{} +u_{n,m}u_{n+1,m}+u_{n,m+1}u_{n+1,m+1}+\delta_{3}=0, \label{eq:2D4}
 \end{gather}
\end{subequations}
where the coef\/f\/icients $F_{k}^{(\pm)}$ are given by
\begin{gather}
 F_{k}^{(\pm)} = \frac{1\pm (-1)^{k}}{2}. \label{eq:fk}
\end{gather}
Then in \cite{GSL_general} the algebraic entropy \cite{BellonViallet1999,HietarintaViallet2007,Viallet2006,Viallet2009} of the trapezoidal \Hvier and the \Hsechs~equations was computed. The result of this computation showed that the \emph{rate of growth} of all the trapezoidal $H^{4}$ \eqref{eq:trapezoidalH4}
and of all $H^{6}$ equations \eqref{eq:h6} is \emph{linear}. This fact according to the \emph{algebraic entropy conjecture} \cite{FalquiViallet1993,HietarintaViallet2007} implies linearizability. To support this result two explicit examples of linearization were given.

\begin{Remark}
 In \cite{HietarintaViallet2012} it was shown that sometimes it is possible
 to construct dif\/ferent consistent embedding in the $\Z^{2}$
 and in $\Z^{3}$ lattices.
 However, in the same paper it was shown that these dif\/ferent
 embedding need not to be integrable.
 In this paper we will consider equations \eqref{eq:trapezoidalH4}
 and \eqref{eq:h6} which are given by the embedding procedure
 of \cite{ABS2009,Boll2011,Boll2012a,Boll2012b}.
 As we underlined above this procedure gives equations which,
 in the sense of the algebraic entropy, are only integrable
 or linearizable \cite{GSL_general,RobertsTran2017}.
 Clearly, it may exist a dif\/ferent embedding in the $\Z^{2}$ for
 which the results presented in this paper do not hold.
 For an example where two dif\/ferent embedding give both rise
 to linearizable equations, but with dif\/ferent properties, see~\cite{GubScimSIDE12}.
\end{Remark}

In \cite{GSL_Gallipoli15} the \tHeq{1} equation \eqref{eq:tH1e} was studied and it was found that it possessed three-point generalized symmetries depending on arbitrary functions. This property was then linked in \cite{GSL_Pavel} to the fact that the \tHeq{1} is \emph{Darboux integrable}~\cite{AdlerStartsev1999}. We say that a quad-equation on the $\Z^{2}$ lattice, possibly non-autonomous:
\begin{gather}
 Q_{n,m}\left( u_{n,m},u_{n+1,m},u_{n,m+1},u_{n+1,m+1} \right)=0,
 \label{eq:quadequana}
\end{gather}
is \emph{Darboux integrable} if there exist two independent \emph{first integrals}, one containing only shifts in the f\/irst direction and the other containing only shifts in the second direction. This means that there exist two functions
 \begin{gather*}
 W_1=W_{1,n,m}(u_{n+l_1,m},u_{n+l_1+1,m},\ldots,u_{n+k_1,m}),\\ 
 W_2=W_{2,n,m}(u_{n,m+l_2},u_{n,m+l_2+1},\ldots,u_{n,m+k_2}), 
 \end{gather*}
where $l_1<k_1$ and $l_2<k_2$ are integers, such that the relations
\begin{subequations} \label{eq:darbdef}
 \begin{gather}
 (T_n-\Id)W_2=0, \label{eq:darb1} \\
 (T_m-\Id)W_1=0 \label{eq:darb2}
 \end{gather}
\end{subequations}
hold true identically on the solutions of~\eqref{eq:quadequana}. By $T_n$, $T_m$ we denote the shift operators in the f\/irst and second
directions, i.e., $T_n h_{n,m}=h_{n+1,m}$, $T_m h_{n,m}=h_{n,m+1}$, and by $\Id$ we denote the identity operator $\Id h_{n,m}=h_{n,m}$. The number $k_{i}-l_{i}$, where $i=1,2$, is called the \emph{order} of the f\/irst integral $W_{i}$.

In addition to this result concerning the \tHeq{1} equation in \cite{GSL_Pavel} it was proved that other quad-equations consistent around the cube, which were known to be linearizable \cite{Hietarinta2004,Hietarinta2005}, were in fact Darboux integrable. These facts provide some evidence of an intimate connection between linearizable equations possessing CAC and Darboux integrability. Following these ideas in~\cite{GSY_DarbouxI} it was shown that \emph{all} the trapezoidal $H^{4}$ equations and \emph{all} the $H^{6}$ equations are Darboux integrable. This result was proved by explicitly constructing the f\/irst integrals with a new algorithm based on those proposed in \cite{GarifullinYamilov2012,GarifullinYamilov2015, Habibullin2005}. This new algorithm relies on the fact that in the case of non-autonomous quad-equations \eqref{eq:quadequana} with two-periodic coef\/f\/icients we can, in general, represent the f\/irst integrals in the form
\begin{gather*}
 W_{i} = \Fppp W_{i}^{(+,+)}+\Fmpm W_{i}^{(-,+)}+\Fpmm W_{i}^{(+,-)} + \Fmmp W_{i}^{(-,-)}, 
\end{gather*}
where $F_{k}^{(\pm)}$ are given by \eqref{eq:fk} and the $W_{i}^{(\pm,\pm)}$ are functions. The existence of the f\/irst integrals provides a rigorous
proof of the linearizability of the trapezoidal $H^{4}$ equation \eqref{eq:trapezoidalH4} and of the~$H^{6}$ equations~\eqref{eq:h6}. Indeed equation~\eqref{eq:darbdef} implies that the following two transformations
\begin{subequations}\label{eq:fintlin}
 \begin{gather}
 u_{n,m} \to \tilde u_{n,m} = W_{1,n,m}, \label{eq:fintlinA} \\
 u_{n,m} \to \hat u_{n,m} = W_{2,n,m}, \label{eq:fintlinB}
 \end{gather}
\end{subequations}
bring the quad-equation \eqref{eq:quadequana} into two \emph{trivial linear equations}
\begin{subequations} \label{eq:fintlin2}
 \begin{gather}
 \tilde u_{n,m+1} - \tilde u_{n,m} = 0 , \label{eq:fintlin2A} \\
 \hat u_{n+1,m} - \hat u_{n,m} = 0. \label{eq:fintlin2B}
 \end{gather}
\end{subequations}
Therefore any Darboux integrable equation is \emph{linearizable in two different ways}, i.e., using transformation \eqref{eq:fintlinA} bringing to~\eqref{eq:fintlin2A} or using the transformation~\eqref{eq:fintlinB} bringing to~\eqref{eq:fintlin2B}.

In the f\/inal section of \cite{GSY_DarbouxI} it was shown, in the case of the \tHeq{1} equation, how it is possible to f\/ind the general solution using the f\/irst integrals, applying a modif\/ication of the procedure presented in~\cite{GarifullinYamilov2012}. In particular we showed how it is possible to obtain a general solution using f\/irst integrals of order greater than one. We note that equations with f\/irst integrals of f\/irst order one are \emph{trivial}, since possessing a f\/irst integral of order one means that \emph{the equation itself is a first integral}.

In this paper we show that from the knowledge of the f\/irst integrals and from the properties of the equations it is possible to construct, maybe after some complicate algebra, the general solutions of all the remaining trapezoidal $H^4$ equations \eqref{eq:trapezoidalH4} and of the $H^6$ equations \eqref{eq:h6}.
By general solution we mean a representation of the solution of any of the equations in~\eqref{eq:trapezoidalH4} and~\eqref{eq:h6} in terms of the right number of arbitrary functions of one lattice variable $n$ or $m$. Since the trapezoidal $H^{4}$ equations \eqref{eq:trapezoidalH4} and the $H^{6}$ equations~\eqref{eq:h6} are quad-equations, i.e., the discrete analogue of second-order hyperbolic partial dif\/ferential equations, the general solution must contain an arbitrary function in the $n$ direction and another one in the $m$ direction, i.e., a general solution is an expression of the form
\begin{gather}
 u_{n,m} = F_{n,m} ( a_{n},\alpha_{m} ), \label{eq:gensoldef}
\end{gather}
where $a_{n}$ and $\alpha_{m}$ are arbitrary functions of their variable. Initial conditions are then imposed through substitution in equation~\eqref{eq:gensoldef}. Nonlinear equations usually possesses also other kinds of solutions, namely the singular solutions which satisfy only specif\/ic set of initial values. In this work we outlined when the existence of singular solutions is possible. Moreover we remark that general solutions, in the range of validity of their parameters, enclose also periodic solutions. Periodic initial values will ref\/lect into periodic solution which will arise by f\/ixing properly the arbitrary functions. Indeed let us consider as an example the $(N,-M)$ reduction of a quad-equation \eqref{eq:quadequana}, with $N,M\in\N^{*}$ coprime~\cite{PapageorgiouNijhoff1990,QuispelCapel1991}. This implies to make the following requirement
\begin{gather}
 u_{n+N,m-M} = u_{n,m}. \label{eq:periodicsol}
\end{gather}
If we possess the general solution of the quad-equation in the form~\eqref{eq:gensoldef} then the periodicity requirement~\eqref{eq:periodicsol} is equivalent to
\begin{gather}
 F_{n+N,m-M} ( a_{n+N},\alpha_{m-M} )= F_{n,m}( a_{n},\alpha_{m}). \label{eq:gensolper}
\end{gather}
The existence of the associated periodic solution is subject to the ability to invert formula \eqref{eq:gensolper}. When the integers $N$ and $M$ are not coprime a similar reasoning can be done: taking $K = \gcd(N,M)$ we have just to decompose the reduction condition into $K$ superimposed staircases and convert the scalar condition \eqref{eq:periodicsol} to a vector condition for $K$ f\/ields. The associated reduction will be possible if the associated system possesses a solution.

To obtain the desired solution we will need only the $W_{1}$ integrals derived in~\cite{GSY_DarbouxI} and the fact that the relation \eqref{eq:darb2} implies $W_{1}=\xi_{n}$ with $\xi_{n}$ an arbitrary function of $n$. The equation $W_{1}=\xi_{n}$ can be interpreted as an ordinary dif\/ference equation in the $n$ direction depending parametrically on $m$. Then from every $W_{1}$ integral we can derive two dif\/ferent ordinary dif\/ference equations, one corresponding to $m$ even and one corresponding to $m$ odd. In both the resulting equations we can get rid of the two-periodic terms by considering the cases $n$ even and $n$ odd and def\/ining
\begin{subequations} \label{eq:genautsub}
 \begin{alignat}{3}
 &u_{2k,2l} = v_{k,l},\qquad &&u_{2k+1,2l} = w_{k,l},& \label{eq:genautsubl} \\
 & u_{2k,2l+1} = y_{k,l},\qquad && u_{2k+1,2l+1} = z_{k,l}.& \label{eq:genautsublp}
 \end{alignat}
\end{subequations}
This transformation brings both equations to a \emph{system} of \emph{coupled difference equations}. This reduction to a system is the key ingredient in the construction of the general solutions for the trapezoidal \Hvier\ equations \eqref{eq:trapezoidalH4} and for the \Hsechs\ equations \eqref{eq:h6}.

We note that the transformation \eqref{eq:genautsub} can be applied to the trapezoidal \Hvier~equations\footnote{In fact, in the case of the trapezoidal $H^4$ equations~\eqref{eq:trapezoidalH4}, we use a simpler transformation instead of~\eqref{eq:genautsub}, see Section~\ref{sec:trapH4}.} and~\Hsechs equations themselves. This casts these \emph{non-autonomous equations with two-periodic coefficients} into \emph{autonomous systems of four equations}. We recall that in this way some examples of direct linearization (i.e., without the knowledge of the f\/irst integrals) were produced in~\cite{GSL_general}. Finally we note that if we apply the even/odd splitting of the lattice variables given by equation~\eqref{eq:genautsub} to describe a general solution we will need two arbitrary functions in both directions, i.e., we will need a total of four arbitrary functions.

In practice to construct these general solutions, we need to solve Riccati equations and non-autonomous linear equations which, in general, cannot be solved in closed form. Using the fact that these equations contain arbitrary functions we introduce new arbitrary functions so that we can solve these equations. This is usually done reducing to \emph{total difference}, i.e., to ordinary dif\/ference equations which can be \emph{trivially} solved. Let us assume we are given the dif\/ference equation
\begin{gather}
 u_{n+1,m} - u_{n,m} = f_{n}, \label{eq:tdifprot}
\end{gather}
depending parametrically on another discrete index $m$. Then if we can express the function $f_{n}$ as a \emph{discrete derivative}
\begin{gather*}
 f_{n} = g_{n+1}-g_{n}, 
\end{gather*}
then the solution of equation \eqref{eq:tdifprot} is simply
\begin{gather*}
 u_{n,m} = g_{n} + \gamma_{m}, 
\end{gather*}
where $\gamma_{m}$ is an arbitrary function of the discrete variable $m$. This is the simplest possible example of reduction to total dif\/ference. The general solutions will then be expressed in terms of these new arbitrary functions obtained reducing to total dif\/ferences and in terms of a f\/inite number of \emph{discrete integrations}, i.e., the solutions of the simple ordinary dif\/ference equation
\begin{gather}
 u_{n+1}-u_{n} = f_{n}, \label{eq:discrintdef}
\end{gather}
where $u_{n}$ is the unknown and $f_{n}$ is an assigned function. We note that the discrete integra\-tion~\eqref{eq:discrintdef} is the discrete analogue of the
dif\/ferential equation $u'(x) = f(x)$.

To give a very simple example of the method of solution we consider how it applies to the prototypical Darboux integrable equation: \emph{the discrete wave equation}
\begin{gather}
 u_{n+1,m+1}+u_{n,m}=u_{n+1,m}+u_{n,m+1}. \label{eq:dwave}
\end{gather}
It is easy to check that the discrete wave equation \eqref{eq:dwave} is Darboux integrable with two f\/irst-order f\/irst integrals
\begin{subequations} \label{eq:fintdwave}
 \begin{gather}
 W_{1} = u_{n+1,m}-u_{n,m}, \label{eq:W1dwave} \\
 W_{2} = u_{n,m+1}-u_{n,m}. \label{eq:W2dwave}
 \end{gather}
\end{subequations}
From the f\/irst integrals \eqref{eq:fintdwave} it is possible to construct the well known discrete d'Alembert solution as follows. From the $W_{1}$ f\/irst integral we can write $W_{1}=\xi_{n}$ with $\xi_{n}$ arbitrary function of its argument. Then we have
\begin{gather}
 u_{n+1,m}-u_{n,m} = \xi_{n}. \label{eq:dwavesol0}
\end{gather}
This means that choosing the arbitrary function as $\xi_{n} = a_{n+1}-a_{n}$, with $a_{n}$ arbitrary function of its argument, we transform \eqref{eq:dwavesol0} into the total dif\/ference
\begin{gather*}
 u_{n+1,m}+a_{n+1} =u_{n,m} + a_{n}, 
\end{gather*}
which readily implies
\begin{gather*}
 u_{n,m}=a_{n} +\alpha_{m}, 
\end{gather*}
where $\alpha_{m}$ is an arbitrary function of its argument. This is of course the discrete analog of the d'Alembert solution of the wave equation and it is the simplest example of solution through the f\/irst integrals of a Darboux integrable equation.

Now to summarize, in this paper we prove the following result:
\begin{Theorem} \label{thm:sol}
 The trapezoidal $H^4$ equations \eqref{eq:trapezoidalH4} and $H^6$ equations \eqref{eq:h6} are exactly solvable and we can represent the solution in terms of a finite number of discrete integration \eqref{eq:discrintdef}.
\end{Theorem}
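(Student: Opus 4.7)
The plan is to prove the theorem constructively, dealing with each of the trapezoidal $H^{4}$ equations \eqref{eq:trapezoidalH4} and $H^{6}$ equations \eqref{eq:h6} separately. The common strategy, already sketched for the discrete wave equation \eqref{eq:dwave} in the introduction, is to exploit the $W_{1}$ first integrals constructed in \cite{GSY_DarbouxI}. For each equation, I would first recall the explicit form of $W_{1}$ and then impose
\begin{gather*}
W_{1,n,m}(u_{n+l_{1},m},\dots,u_{n+k_{1},m}) = \xi_{n},
\end{gather*}
where $\xi_{n}$ is an arbitrary function of $n$; the existence of such a $\xi_{n}$ is guaranteed by \eqref{eq:darb2}. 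The result is an ordinary difference equation in the $n$ direction, with $m$ playing the role of a parameter and $\xi_{n}$ an arbitrary forcing term.

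Next, I would remove the two-periodic coefficients in two stages. First, splitting according to the parity of $m$ yields two separate $n$-directional equations, one for $m=2l$ and one for $m=2l+1$. Second, I would apply the substitution \eqref{eq:genautsub} (or, for the trapezoidal $H^{4}$ equations, the simpler variant alluded to in the footnote) to split according to the parity of $n$, turning each of these into an \emph{autonomous} system of coupled difference equations for the fields $v_{k,l},w_{k,l},y_{k,l},z_{k,l}$. The dependence on $m$ then survives only as a parametric dependence on $l$, and the arbitrary forcing $\xi_{n}$ decomposes into two independent arbitrary functions of the new variable $k$.

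The third step is to solve each autonomous system. In favourable cases the system is already linear and one reduces it directly to a chain of discrete integrations \eqref{eq:discrintdef}. More typically, one component of the system is a Riccati equation in a suitably chosen unknown; I would linearise it by the standard substitution that converts a discrete Riccati into a linear second order difference equation, after which the remaining members of the system become non-autonomous linear equations forced by $\xi_{n}$ and its shifts. At this point the essential trick, foreshadowed by the d'Alembert computation in the introduction, is to exploit the freedom to choose $\xi_{n}$: by setting $\xi_{n}=\eta_{n+1}-\eta_{n}$ (possibly in combination with further tailored choices) one makes the right-hand side a discrete derivative, so the equation collapses to a total difference and can be integrated, introducing one new arbitrary function at each step.

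The principal obstacle is this third step in its full generality: organising the algebra so that, for the more involved cases such as ${}_{1}D_{2}$, ${}_{2}D_{2}$, ${}_{3}D_{2}$ and ${}_{2}D_{4}$, the Riccati equations and the subsequent linear equations can actually be reduced to total differences by a clever choice of $\xi_{n}$ and of auxiliary arbitrary functions. The analysis will branch on the vanishing of the parameters $\delta_{1},\delta_{2},\lambda,\varepsilon$ appearing in \eqref{eq:trapezoidalH4}--\eqref{eq:h6}, and one must track the number of independent arbitrary functions introduced at every stage, so that the final expression has the structure \eqref{eq:gensoldef} with exactly two arbitrary functions in the $k$ direction and two in the $l$ direction after the splitting \eqref{eq:genautsub}. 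Finally, I would verify by direct substitution into each original equation that the constructed expression is indeed a solution, and note the parameter values for which additional singular solutions may arise.
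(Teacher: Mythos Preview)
Your plan matches the paper's treatment of the $_{i}D_{2}$ family almost exactly: there one does impose $W_{1}=\xi_{n}$, split on the parities of $m$ and $n$ via \eqref{eq:genautsub}, solve the resulting (mostly linear, first-order) systems, and then---this is the point you understate---substitute into the original equation not merely to \emph{verify} but to derive two compatibility relations that eliminate the excess arbitrary $l$-functions. Without that substitution the construction produces four arbitrary functions of $l$ instead of two, so that step is an essential reduction, not a check.

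For the remaining two families your plan has a genuine gap. The $W_{1}$ integrals for $D_{3}$, $_{1}D_{4}$, $_{2}D_{4}$ and for $_{t}H_{2}^{\varepsilon}$, $_{t}H_{3}^{\varepsilon}$ are four-point, third-order, so after the parity splitting the relation $W_{1}=\xi_{n}$ (say for $m$ odd) yields a genuinely coupled nonlinear system in two unknowns. The paper does not attempt to solve $W_{1}=\xi_{n}$ for both parities of $m$ and then reconcile them. Instead it starts from the \emph{original equation}, applies \eqref{eq:genautsub} (or the simpler \eqref{eq:h4aut} for the $H^{4}$ equations) to obtain a system of four coupled algebraic equations, and uses the internal compatibility of that system to express some fields in terms of one distinguished field: for $D_{3}$ one finds $w_{k,l}$ as a rational function of $y_{k,l}$ and its shifts, together with a linear first-order recursion for $v_{k,l}$ with $y$-dependent coefficients; for $_{t}H_{2}^{\varepsilon}$ and $_{t}H_{3}^{\varepsilon}$ the compatibility produces a six-point equation for $q_{n,l}$ alone and an explicit formula for $p_{n,l}$ in terms of $q$. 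Only \emph{then} does the first integral enter, and only for the odd-$m$ component: one eliminates the second unknown from the $W_{1}=\xi_{n}$ system to obtain a single linear second-order equation for $y_{k,l}$ (in the $D_{3}$, $_{i}D_{4}$ cases) or a discrete Riccati equation in a ratio of shifted $q$'s (in the $_{t}H^{\varepsilon}$ cases), which is then solved by the reparametrisation trick you describe. Incidentally, the Riccati structure appears in $_{3}D_{2}$ and in the trapezoidal $H^{4}$ equations, not in $_{1}D_{2}$, $_{2}D_{2}$ or $_{2}D_{4}$ as you suggest. The upshot is that the original equation is not a final check but the primary source of the algebraic relations that make the first-integral step tractable; a purely first-integral-driven argument as you outline would have to confront two coupled higher-order systems and then impose heavy compatibility conditions, which the paper avoids.
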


The rest of the paper is devoted to the proof of Theorem~\ref{thm:sol}. In Section~\ref{sec:solutions} we present the general solutions of all the $H^{4}$ and $H^{6}$ equations, except the $_{t}H_{1}^{\varepsilon}$ equation~\eqref{eq:tH1e} which was treated in \cite{GSY_DarbouxI}. In particular in Section~\ref{sec:iD2} we treat the $_{1}D_{2}$, $_{2}D_{2}$ and $_{3}D_{2}$ equations. In Section~\ref{sec:otherH6} we treat the $D_{3}$, $_{1}D_{4}$
and $_{2}D_{4}$ equations. In Section~\ref{sec:trapH4} we treat the \tHeq{2}~and the~\tHeq{3} equations. The partition in subsection is dictated by the procedure used to obtain the general solution, as we will explain below. Due to the technical nature of the procedures we will present only one example per type. The interested reader will f\/ind the remaining procedures of solution in Appendix~\ref{app:remsols}. In Section~\ref{sec:concl} we give some conclusions.

\begin{Remark} We remark that the $H$ equations of the ABS classif\/ication \cite{ABS2003} and their rhombic deformations \cite{ABS2009,Boll2011,Xenitidis2009} should not be Darboux integrable. This can be conf\/irmed directly excluding the existence of integrals up to a~certain order as it was done in \cite{GarifullinYamilov2012} for some other equations. Moreover it was proved rigorously in \cite{RobertsTran2017} using the $\gcd$-factorization method that all the equations of the ABS list \cite{ABS2003} possess quadratic growth of the degrees. At heuristic level a~similar result was presented in \cite{GSL_general} for the rhombic $H^{4}$ equations. According to the Algebraic entropy conjecture these result means that the ABS equations and the rhombic $H^{4}$ equations are integrable, but not linearizable. Since we have recalled the fact that Darboux integrability for lattice equations implies linearizability we expect that these equations will not possess f\/irst integrals of any order. So the results obtained in \cite{GSY_DarbouxI} and in this paper about the trapezoidal~$H^4$ and~$H^6$ equations do not imply anything for~$H$ equations and their rhombic deformations.
\end{Remark}

\section[General solutions of the $H^{4}$ and $H^{6}$ equations]{General solutions of the $\boldsymbol{H^{4}}$ and $\boldsymbol{H^{6}}$ equations}\label{sec:solutions}

In this section we present the general solutions of the $H^{4}$ equations \eqref{eq:trapezoidalH4} and of the $H^{6}$ equations~\eqref{eq:h6}. We choose to divide this section in three subsections since we have three main dif\/ferent kinds of procedures leading to three dif\/ferent representations of the solution.

First in Section~\ref{sec:iD2} we present the general solutions of the $_{1}D_{2}$, $_{2}D_{2}$ and $_{3}D_{2}$ equations (\ref{eq:1D2})--(\ref{eq:3D2}). In this case the construction of the general solution is carried out from the sole knowledge of the f\/irst integral and the equation acts only as a compatibility condition for the arbitrary functions obtained by solving the equations def\/ined by the f\/irst integral. The solution therein obtained is completely explicit and no discrete integration is required.

The in Section~\ref{sec:otherH6} we present the general solution of the $D_{3}$, $_{1}D_{4}$ and $_{2}D_{4}$ equations (\ref{eq:D3})--(\ref{eq:2D4}). In this case the construction of the general solution is carried out through a series of manipulations in the equation itself and from the knowledge of the f\/irst integral. The key point will be that the equations def\/ined by the f\/irst integrals can be reduced to a single linear equation. The solution is no longer completely explicit since it is obtained up to two discrete integrations, one in every direction.

Finally in Section~\ref{sec:trapH4} we present the general solutions of the~\tHeq{2}\ equation \eqref{eq:tH2e} and of the~\tHeq{3}\ equation~\eqref{eq:tH3e}. In this case the construction of the general solution is carried out reducing the equation to a partial dif\/ference equation def\/ined on six points and then using the equations def\/ined by the f\/irst integrals. The equations def\/ined by the f\/irst integrals are reduced to discrete Riccati equations. The solution is then given in terms of four discrete integrations.

Summing up these results and the fact that the general solution of the \tHeq{1} equation \eqref{eq:tH1e} was presented in \cite{GSY_DarbouxI}
we prove Theorem~\ref{thm:sol}.

\subsection[The $_{i}D_{2}$ equations $i=1,2,3$]{The $\boldsymbol{{}_{i}D_{2}}$ equations $\boldsymbol{i=1,2,3}$}\label{sec:iD2}

We have that the following propositions hold true:
\begin{Proposition} \label{prop:1D2}
 The $_{1}D_{2}$ equation \eqref{eq:1D2} is exactly solvable. If $\delta_{1}\neq0$ and $\delta\neq0$, where $\delta$ is defined by
 \begin{gather}
 \delta= 1-\delta_{1}( 1+\delta_{2} ) \label{eq:deltaconstdef}
 \end{gather}
 the general solution is given by
 \begin{subequations}\label{eq:1D2solfin}
 \begin{gather}
 v_{k,l}= \alpha_{l} - \frac{\delta_1}{\delta} \frac{b_{k} b_{k-1} }{b_{k}-b_{k-1}} (c_{k}-c_{k-1}), \label{eq:1D2vsolfin} \\
 w_{k,l} = b_{k}(\beta_{l} +c_{k}) + \frac{\delta}{\delta_{1}}\alpha_{l}, \label{eq:1D2wsolfin} \\
 z_{k,l} = 1-\frac{1}{\delta_{1}}- b_{k}\frac{\beta_{l+1}-\beta_{l}}{\alpha_{l+1}-\alpha_{l}}, \label{eq:1D2zklsolfin} \\
 y_{k,l} = \frac{1}{\delta_{1}} \frac{\beta_{l}\alpha_{l+1}-\beta_{l+1}\alpha_{l}}{\beta_{l+1}-\beta_{l}} +\frac{1}{\delta}
 \frac{b_{k} b_{k-1} \left(c_{k}-c_{k-1}\right) }{b_{k}-b_{k-1}}\nonumber \\
 \hphantom{y_{k,l} =}{} +\left[\frac{(c_{k}-c_{k-1})b_{k-1}}{(b_{k}-b_{k-1})\delta_{1}} +\frac{c_{k}}{\delta_{1}}\right]
 \frac{\alpha_{l+1}-\alpha_{l}}{\beta_{l+1}-\beta_{l}}, \label{eq:1D2yklsolfin}
 \end{gather}
 \end{subequations}
 where $b_{k}$, $c_{k}$, $\alpha_{l}$ and $\beta_{l}$ are arbitrary functions of their arguments. If $\delta=0$ its general solution is given by
 \begin{subequations} \label{eq:1D2d1sol}
 \begin{gather}
v_{k, l} = a_{k}+\beta_{l}, \label{eq:1D2vd1solfin} \\
 w_{k, l} = b_{k}\alpha_{l}, \label{eq:1D2wd1solfin} \\
 z_{k, l}= -b_{k}\frac{\alpha_{l+1}-\alpha_{l}}{\beta_{l+1}-\beta_{l}}-\delta_{2}, \label{eq:1D2zd1solfin} \\
 y_{k, l} = -( 1+\delta_{2} ) \left(\frac{\beta_{l}\alpha_{l+1}-\beta_{l+1}\alpha_{l}}{ \alpha_{l+1}-\alpha_{l}}+a_{k}\right), \label{eq:1D2yd1solfin}
 \end{gather}
 \end{subequations}
 where $a_{k}$, $b_{k}$, $\alpha_{l}$ and $\beta_{l}$ are arbitrary functions of their arguments. If $\delta_{1}=0$ then its general solution is given by
 \begin{subequations} \label{eq:1D20d2sol}
 \begin{gather}
 v_{k,l} = \frac{c_{k}-c_{k-1}}{b_{k}-b_{k-1}} +\alpha_{l}, \label{eq:1D20d2vsolfin} \\
 w_{k,l} = c_{k} + \alpha_{l}b_{k} + \beta_{l}, \label{eq:1D20d2wsolfin} \\
 z_{k,l} = -\delta_2-\frac{\beta_{l+1}-\beta_{l}}{\alpha_{l+1}-\alpha_{l}} - b_{k}, \label{eq:1D20d2zsolfin} \\
 y_{k,l} = \frac{\beta_{l+1} \alpha_{l} -\alpha_{l+1} \beta_{l}}{\alpha_{l+1}-\alpha_{l}} +\frac{\beta_{l+1}-\beta_{l}}{\alpha_{l+1}-\alpha_{l}}
 \frac{c_{k}-c_{k-1}}{b_{k}-b_{k-1}} +b_{k}\frac{c_{k}-c_{k-1}}{b_{k}-b_{k-1}}-c_{k},
 \label{eq:1D20d2ysolfin}
 \end{gather}
 \end{subequations}
 where $b_{k}$, $c_{k}$, $\alpha_{l}$ and $\beta_{l}$ are arbitrary functions of their arguments.
\end{Proposition}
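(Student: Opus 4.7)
The plan is to follow the general strategy outlined in the introduction: start from the first integral $W_1$ of the $_{1}D_{2}$ equation constructed in \cite{GSY_DarbouxI}, impose the Darboux condition \eqref{eq:darb2} which yields $W_1 = \xi_n$ for an arbitrary function $\xi_n$ of $n$ alone, and then interpret this relation as an ordinary difference equation in the $n$ direction depending parametrically on $m$. Because the coefficients of \eqref{eq:1D2} are two-periodic in both $n$ and $m$, I would apply the even/odd splitting \eqref{eq:genautsub} to remove the $F_{n}^{(\pm)}, F_{m}^{(\pm)}$ factors, obtaining an autonomous coupled system in the four fields $v_{k,l}, w_{k,l}, y_{k,l}, z_{k,l}$. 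In parallel, the quad-equation itself gets rewritten as a system of four autonomous equations in the same variables.

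First I would expand $W_{1}=\xi_{n}$ separately for $m$ even and $m$ odd; this produces two difference equations in $k$, one coupling only $(v_{k,l}, w_{k,l})$ and the other only $(y_{k,l}, z_{k,l})$. Since the left-hand sides are linear in the vertex variables (being built from a $H^{6}$-type integral of low order) and since $\xi_n$ is arbitrary, I would replace the parity-split pieces of $\xi_n$ by two fresh arbitrary functions of $k$. A total-difference trick of the form described around \eqref{eq:tdifprot}, together with the introduction of the further free functions $b_k$ and $c_k$, should resolve both equations explicitly for $v_{k,l}$ and $w_{k,l}$ (and similarly for $y_{k,l}$ and $z_{k,l}$) in terms of arbitrary sequences in $k$ together with an undetermined $l$-dependent integration data.

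At this stage the candidate expressions carry independent arbitrary $l$-dependent functions in each of the four components, so I would substitute them back into the autonomous system derived from \eqref{eq:1D2}. This step uses the equation purely as a compatibility condition: it will force linear/rational relations among the $l$-data, which, after renaming the surviving free parameters as $\alpha_l$ and $\beta_l$, collapse the result to the four-function family \eqref{eq:1D2solfin}. A count of arbitrary functions then matches the expected two functions of one variable in each direction, in agreement with \eqref{eq:gensoldef} (doubled by the even/odd splitting).

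The degenerate cases $\delta=0$ and $\delta_{1}=0$ appear naturally at the compatibility step, since $\delta$ and $\delta_{1}$ occur as denominators when inverting the linear relations imposed on the $l$-data; in each degenerate regime, a different combination of auxiliary sequences must be declared free (for instance $a_k$ in \eqref{eq:1D2d1sol} in place of a ratio of $b_k,c_k$ that becomes indeterminate), and the compatibility relations must be re-solved, producing \eqref{eq:1D2d1sol} and \eqref{eq:1D20d2sol} respectively. I expect the main obstacle to be precisely this bookkeeping: in each branch one must verify that the chosen free functions are genuinely independent, that no spurious solution branch is discarded, and that the full family solves the original non-autonomous equation \eqref{eq:1D2} for all parities of $(n,m)$. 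Computer algebra would be useful here to verify the final substitutions into all four parity components simultaneously.
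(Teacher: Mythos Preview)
Your overall strategy matches the paper's proof closely: reduce $W_1=\xi_n$ to a pair of ordinary difference systems via the even/odd splitting in $m$ and then in $n$, solve these by introducing auxiliary arbitrary sequences and reducing to total differences, and finally substitute into \eqref{eq:1D2} to eliminate the surplus $l$-functions. This is exactly what the paper does.

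However, your account of the degenerate cases is not quite right, and the error would lead you astray in practice. You say that $\delta=0$ and $\delta_1=0$ arise ``at the compatibility step, since $\delta$ and $\delta_1$ occur as denominators when inverting the linear relations imposed on the $l$-data.'' In fact both bifurcations occur earlier. The case $\delta=0$ already degenerates the $k$-direction system: the coupling term in the equation for $w_{k,l}$ (the analogue of \eqref{eq:1D2eQmk}) carries a factor $\delta/\delta_1$, so when $\delta=0$ the system decouples differently and the parametrisation via $b_k,c_k$ has to be rebuilt before any $l$-compatibility is imposed. More importantly, the case $\delta_1=0$ makes the first integral \eqref{eq:W11D2} itself singular (several of its summands have $\delta_1$ in a nontrivial position), so you cannot simply re-solve the same $k$-system with $\delta_1=0$; one must go back to \cite{GSY_DarbouxI} and use the \emph{different} first integral $W_1^{(0,\delta_2)}$ available in that regime, which produces structurally different $k$-equations (compare \eqref{eq:W11D20d2} and the systems \eqref{eq:eQm0d2sys}, \eqref{eq:ePm0d2sys}). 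So the degenerate branches are not merely a matter of choosing which auxiliary sequence to declare free at the end; each requires restarting the $k$-analysis from a modified (or replaced) first integral.
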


\begin{proof} From \cite{GSY_DarbouxI} we know that the $_{1}D_{2}$ equation \eqref{eq:1D2} is Darboux integrable, and that the form of the f\/irst integral depends on the value of the parameter $\delta_{1}$. We will begin with the general case when $\delta_{1}\neq0$ and $\delta\neq0$ and then consider
 the particular cases.

{\bf Case $\boldsymbol{\delta\neq0}$ and $\boldsymbol{\delta_{1}\neq0}$.} In this case the $W_{1}$ f\/irst integral of the $_{1}D_{2}$ equation \eqref{eq:1D2} is given by~\cite{GSY_DarbouxI}
 \begin{gather}
 W_{1} ={ \Fppp}{ \alpha} \frac {[ ( 1+{ \delta_{2}}) u_{{n,m}}+u_{{n+1,m}}] {
 \delta_{1}}-u_{{n,m}}}{{ [ ( 1+{ \delta_{2}} )u_{{n,m}}+u_{{n-1,m}} ] { \delta_{1}}-u_{{n,m}}}}\nonumber \\
 \hphantom{W_{1} =}{} +{ \Fpmm}{ \alpha} \frac{1+ ( u_{{n+1,m}}-1 ) { \delta_{1}}}{ 1+ ( u_{{n-1,m}}-1 ) { \delta_{1}}}
 +{ \Fmpm}{ \beta} ( u_{{n+1,m}} -u_{{n-1,m}})\nonumber \\
 \hphantom{W_{1} =}{}-{ \Fmmp}{ \beta}{\frac { ( u_{{n+1,m}} -u_{{n-1,m}})
[ 1- ( 1-u_{{n,m}} ) { \delta_{1}} ] }{{ \delta_{2}}+u_{{n,m}}}}. \label{eq:W11D2}
 \end{gather}
 As stated in the introduction, from the relation $W_{1}=\xi_{n}$ this f\/irst integral def\/ines a three-point, second-order ordinary dif\/ference equation in the $n$ direction which depends parametrically on~$m$. From the parametric dependence we f\/ind two dif\/ferent three-point non-autonomous ordinary dif\/ference equations corresponding to~$m$ even and~$m$ odd. We treat them separately.

{\bf Case $\boldsymbol{m=2l}$.} If $m=2l$ we have the following non-autonomous ordinary dif\/ference equation
 \begin{gather*}
{ \Fp{n}} {\frac { [ ( 1+{ \delta_{2}}) u_{{n,2l}}+u_{{n+1,2l}} ] {
 \delta_{1}}-u_{{n,2l}}}{[ ( 1+{ \delta_{2}}) u_{{n,2l}}+u_{{n-1,2l}} ] { \delta_{1}}-u_{{n,2l}} }}
+{ \Fm{n}} ( u_{{n+1,2l}} -u_{{n-1,2l}}) = \xi_{n}, 
 \end{gather*}
where without loss of generality we have chosen $\alpha=1$ and $\beta=1$. We can easily see, that once solved for $u_{n+1,2l}$ the equation is \emph{linear}
 \begin{gather}
 u_{{n+1,2l}} -{\frac {{ \Fp{n}} ( 1-\delta_{1}\delta_{2}-\delta_{1} ) ( 1 -\xi_{n}) u_{{n,2l}}}{{ \delta_1}}}
- \big(\Fp{n}\xi_{n}+{\Fm{n}}\big) u_{{n-1,2l}} -\Fm{n}\xi_{n}=0. \label{eq:1D2eQm2}
 \end{gather}
 Tackling this equation directly is very dif\/f\/icult, but we can separate again the cases when $n$ is even and odd and convert \eqref{eq:1D2eQm2} into a system using the standard transformation \eqref{eq:genautsubl}
 \begin{subequations} \label{eq:1D2eQmsys}
 \begin{gather}
 w_{{k,l}}- \xi_{2k} w_{k-1,l}= \frac{\delta}{\delta_{1}} \left(1-\xi_{2k}\right)v_{k,l}, \label{eq:1D2eQmk} \\
 v_{{k+1,l}}-v_{{k,l}}={ {\xi_{{2k+1}}}}, \label{eq:1D2eQmkp}
 \end{gather}
 \end{subequations}
where $\delta$ is given by \eqref{eq:deltaconstdef}. Now we have two \emph{first-order} ordinary dif\/ference equations. Equation~\eqref{eq:1D2eQmkp} is uncoupled from equation \eqref{eq:1D2eQmk}. Furthermore, since $\xi_{2k}$ and $\xi_{2k+1}$ are independent functions we can write $\xi_{2k+1} = a_{k+1} - a_{k}$. So the second equation possesses the trivial solution\footnote{From now on we use the convention of naming the arbitrary functions depending on $k$ with Latin letters and the functions depending on $l$ by Greek ones.}
 \begin{gather}
 v_{k,l} = \alpha_{l} + a_{k}. \label{eq:1D2vsol}
 \end{gather}
 Now introduce \eqref{eq:1D2vsol} into \eqref{eq:1D2eQmk} and solve the equation for $w_{k,l}$
 \begin{gather*}
 w_{{k,l}}- \xi_{2k} w_{k-1,l} = \frac{\delta}{\delta_{1}} (1-\xi_{2k})( \alpha_{l} + a_{k}). 
 \end{gather*}
 We def\/ine $\xi_{2k}=b_{k}/b_{k-1}$ and perform the change of dependent variable: $w_{k,l}=b_{k}W_{k,l}$. Then~$W_{k,l}$ solves the equation
 \begin{gather*}
 W_{k,l}-W_{k-1,l} = \frac{\delta}{\delta_{1}} \left( \frac{1}{b_{k}}-\frac{1}{b_{k-1}} \right)( \alpha_{l}+a_{k}). 
 \end{gather*}
 The solution of this dif\/ference equation is given by
 \begin{gather*}
 W_{k,l} = \beta_{l} + \frac{\delta}{\delta_{1}}\frac{\alpha_{l}}{b_{k}} + c_{k}, 
 \end{gather*}
 where $c_{k}$ is such that
 \begin{gather}
 c_{k}-c_{k-1} = \frac{\delta}{\delta_{1}} \left( \frac{1}{b_{k}}-\frac{1}{b_{k-1}} \right) a_{k}. \label{eq:1D2ckdef}
 \end{gather}
 Equation \eqref{eq:1D2ckdef} is not a total dif\/ference, but it can be used to def\/ine $a_{k}$ in terms of the arbitrary functions~$b_{k}$ and~$c_{k}$
 \begin{gather}
 a_{k} = -\frac{\delta_1}{\delta} \frac{b_{k} b_{k-1} }{b_{k}-b_{k-1}} (c_{k}-c_{k-1}). \label{eq:1D2akdef}
 \end{gather}
 This means that we have the following solution for the system \eqref{eq:1D2eQmsys}
 \begin{subequations} \label{eq:1D2eQmsol}
 \begin{gather}
 v_{k,l} = \alpha_{l} - \frac{\delta_1}{\delta} \frac{b_{k} b_{k-1} }{b_{k}-b_{k-1}} (c_{k}-c_{k-1}), \label{eq:1D2vsol2} \\
 w_{k,l} = b_{k}(\beta_{l} +c_{k}) + \frac{\delta}{\delta_{1}}\alpha_{l}, \label{eq:1D2wsol}
 \end{gather}
 \end{subequations}

{\bf Case $\boldsymbol{m=2l+1}$.} If $m=2l+1$ we have the following non-autonomous ordinary dif\/ference equation
 \begin{gather}
 { \Fp{n}}{\frac {1+ ( u_{{n+1,2l+1}}-1 ) { \delta_{1}}}{ 1+ ( u_{{n-1,2l+1}}-1 ) { \delta_{1}} }}
+{ \Fm{n}}{\frac { ( u_{{n-1,2l+1}}-u_{{n+1,2l+1}} ) [ 1 -{ \delta_{1}} ( 1-u_{{n,2l+1}} ) ] }{{ \delta_{2}}+u_{{n,2l+1}}}}
 =\xi_{n}, \label{eq:1D2ePm}
 \end{gather}
 We can easily see that the equation is genuinely nonlinear. However we can separate the cases when $n$ is even and odd and convert \eqref{eq:1D2ePm} into a system using the standard transforma\-tion~\eqref{eq:genautsublp}
 \begin{subequations} \label{eq:1D2ePmsys}
 \begin{gather}
 z_{{k,l}}-\xi_{2k} z_{{k-1,l}}= \left(1-\frac{1}{\delta_1}\right)(1-\xi_{2k}), \label{eq:1D2ePmk} \\
 y_{k+1,l}-y_{k,l}=\frac{\xi_{{2k+1}}}{\delta_{1}}{\frac { \delta-1+\delta_{1}(1 - z_{{k,l}}) }{ 1-\delta_{1}(1-z_{{k,l}})}}, \label{eq:1D2ePmkp}
 \end{gather}
 \end{subequations}
 where we used the def\/inition \eqref{eq:deltaconstdef}. This is a system of two \emph{first-order} dif\/ference equation, and equation \eqref{eq:1D2ePmk} is linear and uncoupled from~\eqref{eq:1D2ePmkp}. As $\xi_{2k}=b_{k}/b_{k-1}$ we have that~\eqref{eq:1D2ePmk} is a total dif\/ference
 \begin{gather}
 \frac{z_{k,l}}{b_{k}}-\frac{z_{k-1,l}}{b_{k-1}}= \left( 1-\frac{1}{\delta_{1}} \right)\left(\frac{1}{b_{k}}-\frac{1}{b_{k-1}}\right).
 \label{eq:1D2ePmk2}
 \end{gather}
 Hence the solution of \eqref{eq:1D2ePmk2} is given by
 \begin{gather}
 z_{k,l} = 1-\frac{1}{\delta_{1}}+b_{k}\gamma_{l}. \label{eq:zsol}
 \end{gather}

 Inserting \eqref{eq:zsol} into \eqref{eq:1D2ePmkp} and using the def\/inition of $\xi_{2k+1}$ in terms of $a_{k}$, i.e., $\xi_{2k+1}=a_{k+1}-a_{k}$ we obtain
 \begin{gather}
 y_{k+1,l}-y_{k,l}=-\left(\frac{1}{\delta_{1}} +\frac{\delta}{\delta_{1}^{2}b_{k}\gamma_{l}}\right) ( a_{k+1}-a_{k} ). \label{eq:1D2ePmkp2}
 \end{gather}
 We can then represent the solution of \eqref{eq:1D2ePmkp2} as
 \begin{gather*}
 y_{k,l} = \zeta_{l} + \frac{\delta d_{k}}{\delta_{1}^{2}\gamma_{l}} -\frac{a_{k}}{\delta_{1}}, 
 \end{gather*}
 where $d_{k}$ satisf\/ies the f\/irst-order linear dif\/ference equation
 \begin{gather}
 d_{k+1}-d_{k}=\frac{a_{k+1}-a_{k}}{b_{k}}. \label{eq:1D2dkeq}
 \end{gather}
 Inserting the value of $a_{k}$ given by \eqref{eq:1D2akdef} inside \eqref{eq:1D2dkeq} we obtain that this equation is a total dif\/ference. Then $d_{k}$ is given by
 \begin{gather*}
 d_{k}= -\frac{(c_{k}-c_{k-1})b_{k-1}\delta_1}{(b_{k}-b_{k-1})\delta} -\frac{\delta_1 c_{k}}{\delta}. 
 \end{gather*}
 This means that f\/inally we have the following solutions for the f\/ields $z_{k,l}$ and $y_{k,l}$
 \begin{subequations} \label{eq:1D2zyklsols}
 \begin{gather}
 z_{k,l} = 1-\frac{1}{\delta_{1}}+b_{k}\gamma_{l}, \label{eq:1D2zklsol2} \\
 y_{k,l} = \zeta_{l} -\frac{(c_{k}-c_{k-1})b_{k-1}}{(b_{k}-b_{k-1})\delta_{1}\gamma_{l}} -\frac{c_{k}}{\delta_{1}\gamma_{l}} +\frac{1}{\delta}
 \frac{b_{k} b_{k-1} \left(c_{k}-c_{k-1}\right) }{b_{k}-b_{k-1}}. \label{eq:1D2yklsol2}
 \end{gather}
 \end{subequations}

 Equations (\ref{eq:1D2eQmsol}), (\ref{eq:1D2zyklsols}) provide the value of the four f\/ields, but we have too many arbitrary functions in the $l$ direction, namely $\alpha_{l}$, $\beta_{l}$, $\gamma_{l}$ and $\zeta_{l}$. Introducing (\ref{eq:1D2eQmsol}), (\ref{eq:1D2zyklsols}) into~\eqref{eq:1D2} and separating the terms even and odd in $n$ and $m$ we obtain two independent equations
 \begin{gather}
 (\alpha_{l}+\delta_1\zeta_{l})\gamma_{l}+\beta_{l}=0, \qquad
 (\alpha_{l+1}+\delta_1\zeta_{l})\gamma_{l}+\beta_{l+1}=0, \label{eq:1D2constr}
 \end{gather}
 which allow us to reduce by two the number of independent functions in the $l$ direction. Solving equations \eqref{eq:1D2constr} with respect to $\gamma_{l}$ and $\zeta_{l}$ we obtain
 \begin{gather}
 \gamma_{l} = -\frac{\beta_{l+1}-\beta_{l}}{\alpha_{l+1}-\alpha_{l}},
 \qquad
 \zeta_{l} =\frac{1}{\delta_{1}} \frac{\beta_{l}\alpha_{l+1}-\beta_{l+1}\alpha_{l}}{\beta_{l+1}-\beta_{l}}. \label{eq:1D2gammadeltadef}
 \end{gather}

Inserting \eqref{eq:1D2gammadeltadef} into (\ref{eq:1D2eQmsol}), (\ref{eq:1D2zyklsols}) we obtain that the general solution of $_{1}D_{2}$ equation~\eqref{eq:1D2} is given by \eqref{eq:1D2solfin}, provided that $\delta_{1}\neq0$ and $\delta\neq0$. Indeed the solution~\eqref{eq:1D2solfin} is ill-def\/ined if $\delta_{1}=0$ or $\delta=0$ and we proceed to treat the relevant cases separately.

{\bf Case $\boldsymbol{\delta=0}$.} If $\delta=0$ we can solve \eqref{eq:deltaconstdef} with respect to $\delta_{1}$
 \begin{gather}
 \delta_{1} =\frac{1}{1+\delta_{2}}. \label{eq:deltasolved}
 \end{gather}
 The f\/irst integral \eqref{eq:W11D2} is not singular for $\delta_{1}$ given by \eqref{eq:deltasolved}. The procedure of solution becomes dif\/ferent only when we arrive to the systems of ordinary dif\/ference equations \eqref{eq:1D2eQmsys} and \eqref{eq:1D2ePmsys}. So we will present the solution of the systems in this case.

{\bf Case $\boldsymbol{m=2l}$.} If $\delta_{1}$ is given by equation~\eqref{eq:deltasolved} the system \eqref{eq:1D2eQmsys} becomes
 \begin{subequations} \label{eq:1D2eQmd1ssys}
 \begin{gather}
 w_{k,l}-\xi_{2k}w_{k-1,l}=0, \label{eq:1D2eQmd1sk} \\
 v_{k+1,l}-v_{k,l}=\xi_{2k+1}. \label{eq:1D2eQmd1skp}
 \end{gather}
\end{subequations}
 The system \eqref{eq:1D2eQmd1ssys} is uncoupled and imposing $\xi_{2k}=b_{k}/b_{k-1}$ and $\xi_{2k+1} = a_{k+1}-a_{k}$ it is readily solved to give{\samepage
 \begin{subequations} \label{eq:1D2wvd1sol}
 \begin{gather}
 v_{k, l} = a_{k}+\beta_{l}. \label{eq:1D2vd1sol}\\
 w_{k, l} = b_{k}\alpha_{l}, \label{eq:1D2wd1sol}
 \end{gather}
 \end{subequations}}

{\bf Case $\boldsymbol{m=2l+1}$.} If $\delta_{1}$ is given by equation~\eqref{eq:deltasolved} the system \eqref{eq:1D2ePmsys} becomes
 \begin{gather*}
 \frac{\delta_2+z_{k, l}}{b_{k}}= \frac{\delta_2+z_{k-1, m}}{b_{k-1}}, \qquad 
 -\frac{y_{k+1, l}-y_{k, l}}{1+\delta_2} = a_{k+1}-a_{k}, 
 \end{gather*}
 where we used the fact that $\xi_{2k}=b_{k}/b_{k-1}$ and $\xi_{2k+1} = a_{k+1}-a_{k}$. The solution to this system is immediate and it is given by
 \begin{subequations} \label{eq:1D2zyd1sol}
 \begin{gather}
 z_{k, l} = b_{k}\gamma_{l}-\delta_{2}, \label{eq:1D2zd1sol} \\
 y_{k, l}= ( 1+\delta_{2}) (\zeta_{l}-a_{k}). \label{eq:1D2yd1sol}
 \end{gather}
 \end{subequations}

 As in the general case we obtained the expressions of the four f\/ields, but we have too many arbitrary functions in the $l$ direction, namely $\alpha_{l}$,
 $\beta_{l}$, $\gamma_{l}$ and $\zeta_{l}$. Substituting the obtained expressions (\ref{eq:1D2wvd1sol}), (\ref{eq:1D2zyd1sol}) in the equation $_{1}D_{2}$ \eqref{eq:1D2} with $\delta_{1}$ given by equation \eqref{eq:deltasolved} separating the even and odd terms we obtain two compatibility conditions
 \begin{gather*}
 \alpha_{l}+\gamma_{l} \beta_{l}+\gamma_{l}\zeta_{l}=0, \qquad
 \alpha_{l+1}+\gamma_{l}\beta_{l+1}+\gamma_{l}\zeta_{l}=0. 
 \end{gather*}
 We can solve this equation with respect to $\gamma_{l}$ and $\zeta_{l}$ and we obtain
 \begin{gather}
 \gamma_{l} = -\frac{\alpha_{l+1}-\alpha_{l}}{\beta_{l+1}-\beta_{l}}, \qquad
 \zeta_{l} = -\frac{\beta_{l}\alpha_{l+1}-\beta_{l+1}\alpha_{l}}{\alpha_{l+1}-\alpha_{l}}. \label{eq:1D2gamdeld1s}
 \end{gather}

 Inserting \eqref{eq:1D2gamdeld1s} into (\ref{eq:1D2wvd1sol}), (\ref{eq:1D2zyd1sol}) we obtain that the general solution of $_{1}D_{2}$ equation~\eqref{eq:1D2} when $\delta=0$ is given by~\eqref{eq:1D2d1sol}.

{\bf Case $\boldsymbol{\delta_{1}=0}$.} If $\delta_{1}=0$ the f\/irst integral \eqref{eq:W11D2} is singular. Then following \cite{GSY_DarbouxI} the $_{1}D_{2}$ equation~\eqref{eq:1D2} with $\delta_{1}=0$ possesses in the direction~$n$ the following three-point, second-order integral
 \begin{gather}
 W_{1}^{(0,\delta_{2})} =\Fppp \alpha \frac {u_{n+1,m}-u_{n-1,m}}{u_{n,m}}
 -\Fpmm \alpha( u_{{n+1,m}}-u_{{n-1,m}})\nonumber \\
 \hphantom{W_{1}^{(0,\delta_{2})} =}{} +\Fmpm \beta ( u_{{n+1,m}}-u_{{n-1,m}} ) + \Fmmp\beta{\frac {u_{{n-1,m}}-u_{{n+1,m}}}{{ \delta_2}+u_{{n,m}}}}.
 \label{eq:W11D20d2}
 \end{gather}
 In order to solve the $_{1}D_{2}$ equation \eqref{eq:1D2} in this case we use the f\/irst integral~\eqref{eq:W11D20d2}. We start separating the cases even and odd in~$m$.

{\bf Case $\boldsymbol{m=2l}$.} If $m=2l$ we obtain from the f\/irst integral \eqref{eq:W11D20d2}
 \begin{gather}
 \Fp{n}\frac {u_{{n+1,2l}}-u_{{n-1,2l}} }{u_{{n,2l}}} +\Fm{n} ( u_{{n+1,2l}}-u_{{n-1,2l}} ) =\xi_{{n}}, \label{eq:eQm0d2}
 \end{gather}
 where we have chosen without loss of generality $\alpha=\beta=1$. This equation is nonlinear. Applying the transformation \eqref{eq:genautsubl} we transform equation \eqref{eq:eQm0d2} into the system
 \begin{subequations}\label{eq:eQm0d2sys}
 \begin{gather}
 w_{k,l} - w_{k-1,l} = \xi_{2k}v_{k,l}, \label{eq:eQm0d2k} \\
 v_{k+1,l}-v_{k,l} = \xi_{2k+1}. \label{eq:eQm0d2kp}
 \end{gather}
 \end{subequations}
 The system \eqref{eq:eQm0d2sys} is linear and equation \eqref{eq:eQm0d2kp} is uncoupled from equation \eqref{eq:eQm0d2k}. If we put $\xi_{2k+1} = a_{k+1}-a_{k}$ then equation \eqref{eq:eQm0d2kp} has the solution
 \begin{gather*}
 v_{k,l} = a_{k} + \alpha_{l}. 
 \end{gather*}
 Substituting into \eqref{eq:eQm0d2k} we obtain
 \begin{gather}
 w_{k, l}-w_{k-1, l}=\xi_{2k} (a_{k}+\alpha_{l}). \label{eq:eQm0d2k2}
 \end{gather}
 Equation \eqref{eq:eQm0d2k2} becomes a total dif\/ference if we set
 \begin{gather}
 \xi_{2k} = b_{k}-b_{k-1}, \qquad a_{k} = \frac{c_{k}-c_{k-1}}{b_{k}-b_{k-1}} \label{eq:1D20d2lkadef}
 \end{gather}
 and then the solution of the system \eqref{eq:1D2eQmd1ssys} is given by
 \begin{subequations}\label{eq:1D20d2wvsol}
 \begin{gather}
 v_{k,l} = \frac{c_{k}-c_{k-1}}{b_{k}-b_{k-1}} +\alpha_{l}, \label{eq:1D20d2wsol2} \\
 w_{k,l} = c_{k} + \alpha_{l}b_{k} + \beta_{l}. \label{eq:1D20d2wsol}
 \end{gather}
 \end{subequations}

{\bf Case $\boldsymbol{m=2l+1}$.} If $m=2l+1$ we obtain from the f\/irst integral \eqref{eq:W11D20d2}
 \begin{gather}
 {\Fp{n}} \left( u_{{n-1,2l+1}}-u_{{n+1,2l+1}} \right) -{ \Fm{n}}{\frac { \left( u_{{n+1,2l+1}}-u_{{n-1,2l+1}} \right) }{{ \delta_2}+u_{{n,2l+1}}}}= \xi_{{n}}, \label{eq:ePm0d2}
 \end{gather}
where we have chosen without loss of generality $\alpha=\beta=1$. This equation is nonlinear. Applying the transformation \eqref{eq:genautsublp} we transform equation \eqref{eq:ePm0d2} into the system
 \begin{subequations} \label{eq:ePm0d2sys}
 \begin{gather}
 z_{k-1,l}-z_{k,l}=b_{k}-b_{k-1}, \label{eq:ePm0d2k} \\
 y_{k,l}-y_{k+1,l}=(a_{k+1}-a_{k})(\delta_2+z_{k,l}), \label{eq:ePm0d2kp}
 \end{gather}
 \end{subequations}
where we used the values of $\xi_{2k}$ and $\xi_{2k+1}$. The system is now linear and equation \eqref{eq:ePm0d2k} is solved by
 \begin{gather*}
 z_{k,l} = \gamma_{l} - b_{k}. 
 \end{gather*}
 Substituting into \eqref{eq:ePm0d2kp} we obtain
 \begin{gather*}
 y_{k,l}-y_{k+1,l}=(a_{k+1}-a_{k})(\delta_2+\gamma_{l}-b_{k}).
 \end{gather*}
 Then we have that $y_{k,l}$ is given by
 \begin{gather*}
 y_{k,l} = \zeta_{l} - ( \gamma_{l}+\delta_{2} )a_{k} +d_{k}, 
 \end{gather*}
 where $d_{k}$ solves the ordinary dif\/ference equation
 \begin{gather}
 d_{k+1}-d_{k} = b_{k+1}\frac{c_{k+1}-c_{k}}{b_{k+1}-b_{k}}-c_{k+1} -b_{k}\frac{c_{k}-c_{k-1}}{b_{k}-b_{k-1}}+c_{k}. \label{eq:1D20d2dkeq}
 \end{gather}
In \eqref{eq:1D20d2dkeq} we inserted the value of $a_{k}$ according to \eqref{eq:1D20d2lkadef}. Equation \eqref{eq:1D20d2dkeq} is a total dif\/ference and then $d_{k}$ is given by
 \begin{gather*}
 d_{k} = b_{k}\frac{c_{k}-c_{k-1}}{b_{k}-b_{k-1}}-c_{k}. 
 \end{gather*}
 Then the solution of the system \eqref{eq:ePm0d2sys} is
 \begin{subequations} \label{eq:1D20d2zysol}
 \begin{gather}
 z_{k,l} = \gamma_{l} - b_{k}, \label{eq:1D20d2zsol2} \\
 y_{k,l} = \zeta_{l} - ( \gamma_{l}+\delta_{2} ) \frac{c_{k}-c_{k-1}}{b_{k}-b_{k-1}} +b_{k}\frac{c_{k}-c_{k-1}}{b_{k}-b_{k-1}}-c_{k}.
 \end{gather}
 \end{subequations}

As in the general case we obtained the expressions of the four f\/ields, but we have too many arbitrary functions in the~$l$ direction, namely~$\alpha_{l}$,~$\beta_{l}$,~$\gamma_{l}$ and~$\zeta_{l}$. Substituting the obtained expressions~(\ref{eq:1D20d2wvsol}), (\ref{eq:1D20d2zysol}) in the equation $_{1}D_{2}$~\eqref{eq:1D2} with $\delta_{1}=0$ separating the even and odd terms we obtain two compatibility conditions
 \begin{gather}
 (\gamma_{l}+\delta_2) \alpha_{l}+\zeta_{l}+\beta_{l}=0, \qquad (\gamma_{l}+\delta_2) \alpha_{l+1}+\beta_{l+1}+\zeta_{l}=0. \label{eq:1D20d2cc}
 \end{gather}
 We can solve equation \eqref{eq:1D20d2cc} with respect to $\gamma_{l}$ and $\zeta_{l}$ and to obtain
 \begin{gather}
 \gamma_{l} = -\delta_2-\frac{\beta_{l+1}-\beta_{l}}{\alpha_{l+1}-\alpha_{l}}, \qquad
 \zeta_{l} = \frac{\beta_{l+1} \alpha_{l} -\alpha_{l+1} \beta_{l}}{\alpha_{l+1}-\alpha_{l}}. \label{eq:1D20d2gamdel}
 \end{gather}

Inserting \eqref{eq:1D20d2gamdel} into (\ref{eq:1D20d2wvsol}), (\ref{eq:1D20d2zysol}) we obtain that the general solution of $_{1}D_{2}$ equation~\eqref{eq:1D2} when $\delta=0$ is given by \eqref{eq:1D20d2sol}.

This discussion exhausts the possible cases. For any value of the parameters we have the general solution of the $_{1}D_{2}$ equation \eqref{eq:1D2} and this concludes the proof.
\end{proof}

\begin{Proposition} The $_{2}D_{2}$ equation \eqref{eq:2D2} is exactly solvable. If $\delta_{1}\neq0$ its general solution is given by
 \begin{subequations} \label{eq:2D2solfin}
 \begin{gather}
 v_{k,l} = b_{k}+\beta_{l} + \delta\frac{c_{k}}{\alpha_{l}} +\frac{1}{\delta_{1}}-1-\delta, \label{eq:2D2vsolfin} \\
 w_{k,l} = \alpha_{l}\frac{b_{k+1}-b_{k}}{c_{k+1}-c_{k}} + \frac{1}{\delta_{1}}-1-\delta, \label{eq:2D2wsolfin} \\
 z_{k,l} = \delta b_{k} +\frac{b_{k+1}-b_{k}}{c_{k+1}-c_{k}} \left[\alpha_{{l}} ( \beta_{{l+1}}-\beta_{{l}} ) +
 {\frac {\alpha_{l}^{2} ( \beta_{{l+1}}-\beta_{{l}} )}{ \alpha_{{l+1}}-\alpha_{{l}} }}-\delta c_{k}\right]\nonumber \\
\hphantom{z_{k,l} =}{} +\delta\frac {\beta_{{l+1}}{ \delta_{1}}-1-\delta_{1}^{2}
 \lambda+\delta { \delta_{1}}+{ \delta_{1}}}{\delta_{1}} +\frac {\delta \alpha_{{l}} ( \beta_{{l+1}}-\beta_{{l}}) }{ \alpha_{{l+1}} -\alpha_{{l}}},
 \label{eq:2D2zklsolfin} \\
 y_{k,l} =-{\frac {\beta_{{l+1}}{ \delta_{1}}-1-\delta_{1}^{2} \lambda+\delta { \delta_{1}}+{ \delta_{1}}}{\delta_{1}^{2}}} -{\frac {\alpha_{{l}} ( \beta_{{l+1}}-\beta_{{l}} ) }{ ( \alpha_{{l+1}} -\alpha_{{l}}) { \delta_{1}}}} - \frac{b_{k}}{\delta_{1}},
 \label{eq:2D2yklsolfin}
 \end{gather}
 \end{subequations}
 where $b_{k}$, $c_{k}$, $\alpha_{l}$ and $\beta_{l}$ are arbitrary functions of their arguments. If $\delta_{1}=0$ then its general solution is given by
 \begin{subequations}\label{eq:2D20d2sol}
 \begin{gather}
 v_{k,l} = a_{k}\alpha_{l}, \label{eq:2D20d2vsolfin} \\
 w_{k,l} = -\delta_{2} -\frac{1}{\alpha_{l}}\frac{b_{k+1}-b_{k}}{a_{k+1}-a_{k}}, \label{eq:2D20d2wsolfin} \\
 y_{k,l} = b_{k} + \beta_{l}, \label{eq:2D20d2ysolfin} \\
 z_{k,l} = a_{k}\frac{b_{k+1}-b_{k}}{a_{k+1} -a_{k}}-b_{k} - \beta_{l}, \label{eq:2D20d2zsolfin}
 \end{gather}
 \end{subequations}
 where $a_{k}$, $b_{k}$, $\alpha_{l}$ and $\beta_{l}$ are arbitrary functions of their arguments.
\end{Proposition}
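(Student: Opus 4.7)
The plan is to mirror the strategy used for the $_{1}D_{2}$ equation: start from the $W_{1}$ first integral of $_{2}D_{2}$ constructed in \cite{GSY_DarbouxI}, impose $W_{1} = \xi_{n}$ (which is forced by the compatibility relation $(T_{m}-\Id)W_{1}=0$ together with the arbitrariness guaranteed by Darboux integrability), and reduce the resulting three-point equation in $n$ to two decoupled first-order systems by the even/odd splitting \eqref{eq:genautsub}. Because the first integral in \cite{GSY_DarbouxI} has a singular dependence on $\delta_{1}$, the two branches $\delta_{1}\neq0$ and $\delta_{1}=0$ must be handled separately, matching the two statements of the proposition.

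In the generic branch $\delta_{1}\neq0$, after splitting on the parity of $m$, I would obtain two non-autonomous three-point equations; applying \eqref{eq:genautsubl} for $m=2l$ and \eqref{eq:genautsublp} for $m=2l+1$ will, as in the $_{1}D_{2}$ case, give a pair of first-order systems in which one equation is linear and decoupled from the other. The decoupled equation provides $v_{k,l}$ (respectively $z_{k,l}$) in terms of one free function of $k$ and one of $l$; the remaining equation, once the two-periodic data $\xi_{2k}$ and $\xi_{2k+1}$ are written as suitable combinations of auxiliary functions (I expect $b_{k}$ and $c_{k}$ playing the roles that $b_{k},c_{k}$ played in the $_{1}D_{2}$ computation, with the substitutions chosen so that the right-hand side becomes a total difference) can be integrated in closed form. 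This produces the four fields $v_{k,l},w_{k,l},y_{k,l},z_{k,l}$ in terms of four arbitrary functions of $k$ and four of $l$.

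At this stage the number of arbitrary functions in the $l$-direction is twice what the quad-equation allows, so the final step is to substitute the candidate solution into $_{2}D_{2}$ itself and separate even and odd terms in $n,m$. This produces two algebraic compatibility conditions (exactly as in \eqref{eq:1D2constr}) which can be solved for the two surplus functions of $l$ in terms of $\alpha_{l}$ and $\beta_{l}$; this is the step where the parameter $\lambda$ enters the final formulas \eqref{eq:2D2zklsolfin}--\eqref{eq:2D2yklsolfin}. The case $\delta_{1}=0$ is handled identically, but starting from the non-singular first integral valid in that limit (analogous to \eqref{eq:W11D20d2} for $_{1}D_{2}$): the resulting system is now linear with no Riccati step, the auxiliary function $b_{k}$ is introduced through $\xi_{2k}=b_{k+1}-b_{k}$ to turn one equation into a total difference, and a final compatibility condition on $\alpha_{l},\beta_{l}$ yields \eqref{eq:2D20d2sol}.

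The main obstacle I anticipate is bookkeeping rather than conceptual: choosing the parametrisation of $\xi_{2k}$ and $\xi_{2k+1}$ via auxiliary functions $b_{k},c_{k}$ so that the inhomogeneous first-order equations reduce to total differences, and then carrying out the compatibility substitution into $_{2}D_{2}$ in a way that cleanly isolates $\gamma_{l},\zeta_{l}$ despite the extra parameter $\lambda$. Once this is done, matching terms even/odd in both indices yields the explicit formulas stated in \eqref{eq:2D2solfin} and \eqref{eq:2D20d2sol}, and since every intermediate equation was solved without residual freedom, the representation obtained is genuinely the general solution.
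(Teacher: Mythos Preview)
Your plan matches the paper's approach in the case $\delta_{1}\neq0$: the paper starts from the three-point $W_{1}$ integral for $_{2}D_{2}$, sets $W_{1}=\xi_{n}$, splits on the parity of $m$, applies a preliminary shift $u_{n,2l}=U_{n,2l}-\delta_{2}$ (which you do not mention but would likely discover), uses \eqref{eq:genautsub} to obtain first-order systems, parametrises $\xi_{2k}=a_{k}/a_{k-1}$, $\xi_{2k+1}=b_{k+1}-b_{k}$, $a_{k}=(b_{k+1}-b_{k})/(c_{k+1}-c_{k})$ to reduce to total differences, and finally substitutes into $_{2}D_{2}$ to obtain two compatibility relations that fix $\gamma_{l},\zeta_{l}$ in terms of $\alpha_{l},\beta_{l}$ and $\lambda$.

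Your description of the $\delta_{1}=0$ branch is slightly off, however. The first integral available when $\delta_{1}=0$ is not a three-point second-order integral analogous to \eqref{eq:W11D20d2}; it is a \emph{two-point, first-order} integral. Consequently, after the even/odd splitting the resulting ``systems'' contain one relation that is purely algebraic (not a difference equation): for $m=2l$ one equation gives $w_{k,l}$ directly in terms of $v_{k,l}$, and similarly for $m=2l+1$ one equation gives $z_{k,l}$ directly in terms of $y_{k,l}$. Because the integral is first order, the counting of arbitrary functions already comes out right (two in $k$, two in $l$), and there is \emph{no} final compatibility step to perform---the paper simply verifies by substitution. So in this branch you should expect the mechanism to be simpler than you outlined, not harder.
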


\begin{proof}
 The proof of the two solution \eqref{eq:2D2solfin} and \eqref{eq:2D20d2sol} proceeds as the one outlined in Proposition~\ref{prop:1D2}. The interested reader can f\/ind it in Appendix~\ref{app:remsols}.
\end{proof}

\begin{Proposition}
 The $_{3}D_{2}$ equation \eqref{eq:3D2} is exactly solvable. If $\delta_{1}\neq0$ and $\delta\neq0$, where $\delta$ is give by~\eqref{eq:deltaconstdef}, its general solution is given by
 \begin{subequations}\label{eq:3D2solfin}
 \begin{gather}
 v_{k,l} = -\frac{\delta_{1}}{\delta}b_{k} + c_{k}\alpha_{l}, \label{eq:3D2vsolfin} \\
 w_{k,l}= \frac{\delta_{1}-1+\delta}{\delta_{1}}+ \frac{\delta(b_{k+1}-b_{k})}{ \delta\alpha_{l}(c_{k+1}-c_{k})-(b_{k+1}-b_{k})\delta_1},\label{eq:3D2wsolfin}
 \\
 z_{k,l} =\frac{b_{k}}{\delta} -\frac{1}{\delta_{1}}\left( \beta_{l}-\lambda\delta_{1} - \alpha_{l}\frac{ \beta_{l+1}-\beta_{l} }{\alpha_{l+1}-\alpha_{l} }\right) +\frac{1}{\delta}\frac{b_{k+1}-b_{k}}{c_{k+1}-c_{k}}
 \left(\frac{ \beta_{l+1}-\beta_{l} }{\alpha_{l+1}-\alpha_{l} }+c_{k} \right), \label{eq:3D2zklsolfin} \\
 y_{k,l} =-b_{k}+\frac{\delta}{\delta_{1}}\left( \beta_{l}-\lambda\delta_{1} -
 \alpha_{l}\frac{ \beta_{l+1}-\beta_{l} }{\alpha_{l+1}-\alpha_{l} }\right), \label{eq:3D2yklsolfin}
 \end{gather}
 \end{subequations}
 where $b_{k}$, $c_{k}$, $\alpha_{l}$ and $\beta_{l}$ are arbitrary functions of their arguments. If $\delta=0$ then its general solution is given by
 \begin{subequations}\label{eq:3D2d1d1sol}
 \begin{gather}
 v_{k,l} = c_{k}+ b_{k}\alpha_{l} + \beta_{l}, \label{eq:3D2d1d2vsolfin} \\
 w_{k,l}= -\delta_{2} + \frac{b_{k+1}-b_{k}}{ c_{k+1}-c_{k} + \alpha_{l}( b_{k+1} -b_{k} )}, \label{eq:3D2d1d2wsolfin} \\
 y_{k,l} = -\frac{\beta_{l+1}-\beta_{l}}{\alpha_{l+1}-\alpha_{l}}-b_{k}, \label{eq:3D2d1d2ysolfin} \\
 z_{k,l} = -(\delta_2+1)\left[ \left( b_{k+1} -\frac{\beta_{l+1}-\beta_{l}}{\alpha_{l+1}-\alpha_{l}}\right) \frac{c_{k+1}-c_{k}}{b_{k+1}-b_{k}}
 -c_{k+1}\right]\nonumber \\
 \hphantom{z_{k,l} =}{} + ( \delta_{2}+1 ) \frac{\beta_{l+1}\alpha_{l}-\alpha_{l+1}\beta_{l}}{\alpha_{l+1}-\alpha_{l}} +\lambda,
 \label{eq:3D2d1d2zsolfin}
 \end{gather}
 \end{subequations}
 where $b_{k}$, $c_{k}$, $\alpha_{l}$ and $\beta_{l}$ are arbitrary functions of their arguments. If $\delta_{1}=0$ then its general solution is given by
 \begin{subequations}\label{eq:3D20d2sol}
 \begin{gather}
 v_{k,l} = \frac{b_{k}+\zeta_{0}}{\alpha_{l}}, \label{eq:3D20d2vsolfin} \\
 w_{k,l} = -\delta_{2} - \alpha_{l}\frac{c_{k+1}-c_{k}}{b_{k+1}-b_{k}}, \label{eq:3D20d2wsolfin} \\
 y_{k,l} = c_{k}+\gamma_{l}, \label{eq:3D20d2ysolfin} \\
 z_{k,l} = \frac{c_{k+1}-c_{k}}{b_{k+1}-b_{k}}\zeta_{0}-\gamma_{l} +\frac{b_{k}c_{k+1}-c_{k}b_{k+1}}{b_{k+1}-b_{k}}, \label{eq:3D20d2zsolfin}
 \end{gather}
\end{subequations}
 where $b_{k}$, $c_{k}$, $\alpha_{l}$ and $\gamma_{l}$ are arbitrary functions of their arguments and $\zeta_{0}$ is a~constant.
\end{Proposition}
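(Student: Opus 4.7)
The plan is to mirror the strategy that worked for Proposition~\ref{prop:1D2}: start from the $W_1$ first integral of $_{3}D_{2}$ derived in \cite{GSY_DarbouxI}, set $W_1 = \xi_n$ to obtain a three-point second-order non-autonomous difference equation in $n$ that depends parametrically on~$m$, split the cases $m=2l$ and $m=2l+1$, and in each case apply the even/odd substitution \eqref{eq:genautsub} to reduce to a system of first-order equations in~$k$ for the four fields $v$, $w$, $y$, $z$. The resulting system should then be solved by introducing free functions $b_k$, $c_k$ parametrising $\xi_{2k}$, $\xi_{2k+1}$ so that the right-hand sides collapse to total differences of the form~\eqref{eq:tdifprot}. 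Finally, plug the four resulting expressions back into \eqref{eq:3D2} and use the equation itself as a compatibility condition to eliminate the redundant $l$-dependent functions, thus producing the quoted formulas~\eqref{eq:3D2solfin}.

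Concretely, in the generic case $\delta_1 \ne 0$, $\delta \ne 0$ I would first solve the $m=2l$ equation, which by the structure of~\eqref{eq:3D2} should produce one linear and one nonlinear first-order equation coupled as in~\eqref{eq:1D2eQmsys}; the linear one yields $v_{k,l}$ directly after setting $\xi_{2k} = c_k \alpha_l$-compatible combinations, and substitution into the remaining equation should, after the change of variable suggested by the denominator structure of~\eqref{eq:3D2wsolfin}, reduce to another total difference producing $w_{k,l}$. The same procedure applied to the $m=2l+1$ case produces $y_{k,l}$ and $z_{k,l}$ up to two extra arbitrary functions $\gamma_l$, $\zeta_l$. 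Substituting all four fields into \eqref{eq:3D2} and separating the coefficients with respect to $\Fppp$, $\Fpmm$, $\Fmpm$, $\Fmmp$ yields algebraic constraints in $l$ alone that determine $\gamma_l$, $\zeta_l$ in terms of $\alpha_l$, $\beta_l$ via the same kind of linear system as~\eqref{eq:1D2gammadeltadef}.

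For the two degenerate cases, one has to proceed separately because the first integral~$W_1$ used in the generic argument becomes either non-singular but structurally different ($\delta=0$) or singular ($\delta_1=0$), in which latter case \cite{GSY_DarbouxI} provides an alternative integral analogous to~\eqref{eq:W11D20d2}. For $\delta=0$ one solves \eqref{eq:deltaconstdef} for $\delta_1$ and repeats the even/odd reduction; the systems \eqref{eq:1D2eQmsys}, \eqref{eq:1D2ePmsys} simplify and admit the elementary solution~\eqref{eq:3D2d1d1sol} after imposing the compatibility conditions obtained by substituting back in~\eqref{eq:3D2}. For $\delta_1=0$ the alternative integral gives a linear system in which one of the fields decouples as a total difference; integrating and again enforcing compatibility with~\eqref{eq:3D2} produces \eqref{eq:3D20d2sol}, where the single remaining constant $\zeta_0$ reflects a leftover gauge that the compatibility equations do not fix.

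The main obstacle is the middle step in the generic case: unlike the $_{1}D_{2}$ situation, where the nonlinear first-order equation could be linearised by a simple multiplicative change of variable, here the denominator $\delta\alpha_l(c_{k+1}-c_k)-(b_{k+1}-b_k)\delta_1$ appearing in~\eqref{eq:3D2wsolfin} signals that the auxiliary equation for $w_{k,l}$ is a \emph{discrete Riccati} in~$k$. The correct linearising substitution must be found, and the two free $k$-functions $b_k$, $c_k$ need to be introduced with enough flexibility to absorb both the parametric dependence on~$\alpha_l$ and the requirement that the resulting equation reduce to a total difference; verifying that the compatibility system in~$l$ is consistent (i.e., not overdetermined) is the delicate part that the computation~\eqref{eq:1D2constr}--\eqref{eq:1D2gammadeltadef} has to replicate with the more complicated coefficient structure of~$_{3}D_{2}$.
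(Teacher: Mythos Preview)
Your plan for the generic case $\delta_1\neq0$, $\delta\neq0$ is essentially right and matches the paper: the $m=2l$ system does produce a discrete Riccati for $w_{k,l}$ (linearised via the M\"obius substitution $w_{k,l}=-\delta_2+1/W_{k,l}$), after which $v_{k,l}$ follows from a linear first-order equation; the $m=2l+1$ system is already linear and gives $y_{k,l}$, $z_{k,l}$; and the compatibility step on~\eqref{eq:3D2} eliminates the two redundant $l$-functions exactly as in~\eqref{eq:1D2gammadeltadef}.

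However, you have the two degenerate cases swapped. For $_3D_2$ the situation is \emph{opposite} to $_1D_2$: when $\delta=0$ the generic integral~$W_1$ of~\cite{GSY_DarbouxI} becomes singular (the $\Fpmm$ coefficient collapses to a constant), so you cannot simply let the systems~\eqref{eq:1D2eQmsys},~\eqref{eq:1D2ePmsys} degenerate --- you must start over with the alternative three-point integral that~\cite{GSY_DarbouxI} supplies for $\delta_1=(1+\delta_2)^{-1}$. Conversely, when $\delta_1=0$ the generic integral remains \emph{non-singular}, so no alternative integral is needed; one just reruns the same reduction and finds that the resulting first-order systems simplify (the Riccati for $w_{k,l}$ becomes a pure linear equation). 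Your description assigns the alternative integral to $\delta_1=0$ and the ``simplified systems'' route to $\delta=0$, which would send you looking for an integral that does not exist in one case and missing the one you actually need in the other.
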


\begin{proof} The proof of the three solution \eqref{eq:3D2solfin}, \eqref{eq:3D2d1d1sol} and \eqref{eq:3D20d2sol} proceeds as the one outlined in Proposition~\ref{prop:1D2}. The interested reader can f\/ind it in Appendix~\ref{app:remsols}.
\end{proof}

\subsection[The $D_{3}$ and the $_{i}D_{4}$ equations, $i=1,2$]{The $\boldsymbol{D_{3}}$ and the $\boldsymbol{{}_{i}D_{4}}$ equations, $\boldsymbol{i=1,2}$}\label{sec:otherH6}

We have the following propositions:

\begin{Proposition} \label{prop:D3}
 The $D_{3}$ equation \eqref{eq:D3} is exactly solvable. We have that the expression of the fields $y_{k,l}$ and $v_{k,l}$ is given by
 \begin{subequations} \label{eq:D3yvsolfin}
 \begin{align}
 y_{k,l} &= \alpha_{l}c_{k}+d_{k}+\beta_{l}
 \label{eq:D3ysolfin}
 \\
 v_{k,l} &
 \begin{aligned}[t]
 &=\left( \alpha_{l}c_{k}+d_{k}+\beta_{l} \right)^{2}
 \\
 &+\left[
 \left(\alpha_{l}-\alpha_{l-1}\right)c_{k}+\beta_{l}-\beta_{l-1}\right]
 \left[
 \gamma_{l}+
 \frac{e_{k}-\alpha_{l-1}\left(\alpha_{l-1} c_{k}+2 d_{k}\right)}{ \alpha_{l}-\alpha_{l-1}}\right],
 \end{aligned}
 \label{eq:D3vsolfin}
 \end{align}
\end{subequations}
 where the function functions $c_{k}$, $d_{k}$, $\alpha_{l}$
 and $\beta_{l}$ are arbitrary functions of their arguments,
 whereas the function $e_{k}$ and $\gamma_{l}$ are given through
 the discrete integrations
 \begin{subequations} \label{eq:D3discrint}
 \begin{gather}
 e_{k+1}=e_{k}
 -\frac{\left( d_{k+1}-d_{k} \right)^{2}}{c_{k+1}-c_{k}},
 \label{eq:D3ekeqdef0}
 \\
 \left(\alpha_{{l+1}} -\alpha_{{l}}\right) \gamma_{{l+1}}
 - \left(\alpha_{{l}}-\alpha_{{l-1}} \right) \gamma_{{l}}
 -\alpha_{{l-1}}\beta_{{l-1}}+\alpha_{{l}}\beta_{{l}}
 +\alpha_{{l}}\beta_{{l-1}}-\alpha_{{l-1}}\beta_{{l}}=0.
 \label{eq:D3gammleq0}
 \end{gather}
 \end{subequations}
 The fields $z_{k,l}$ and $w_{k,l}$ are then given in terms of $y_{k,l}$ and $v_{k,l}$ as
 \begin{subequations} \label{eq:D3zwsolfin}
 \begin{align}
 z_{k,l} &= \frac{\left( y_{k+1,l} y_{k,l-1}-y_{k+1,l-1} y_{k,l} \right)y_{k,l}
 -\left( y_{k+1,l}+y_{k,l-1}-y_{k+1,l-1}-y_{k,l} \right)v_{k,l}}{%
 \left( y_{k+1,l}+y_{k,l-1}-y_{k+1,l-1}-y_{k,l} \right) y_{k,l}
 -y_{k+1,l} y_{k,l-1}+y_{k+1,l-1} y_{k,l}}
 \label{eq:D3zsolfin}
 \\
 w_{k,l} &=
 -\frac{y_{k+1,l} y_{k,l-1}-y_{k+1,l-1} y_{k,l}}{y_{k+1,l}+y_{k,l-1}-y_{k+1,l-1}-y_{k,l}}.
 \label{eq:D3wsolfin}
 \end{align}
 \end{subequations}
\end{Proposition}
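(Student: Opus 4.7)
The plan is to follow the scheme outlined at the start of Section~\ref{sec:otherH6}: combine the $W_1$ first integral from \cite{GSY_DarbouxI} with algebraic manipulations of the equation itself, and reduce the problem to a pair of linear non-autonomous difference equations whose solutions involve two discrete integrations, one in each lattice direction. The starting point is to recall the $W_1$ first integral for the $D_3$ equation and to set $W_1=\xi_n$ with $\xi_n$ an arbitrary function of $n$. Splitting into the cases $m=2l$ and $m=2l+1$, and then applying the even/odd substitution~\eqref{eq:genautsub} in the $n$ direction, converts the resulting two-periodic second-order ordinary difference equations into a pair of coupled systems in $(v_{k,l},w_{k,l})$ and in $(y_{k,l},z_{k,l})$.

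The next step is to exploit the multi-affine structure of the $D_3$ equation~\eqref{eq:D3}. Separating~\eqref{eq:D3} according to the four parity classes of $(n,m)$ and substituting~\eqref{eq:genautsub} produces four purely algebraic quadratic relations among the fields $v,w,y,z$ at neighbouring sites of the doubled lattice. By combining suitably shifted instances of these relations, and using the fact that each relation is linear in every individual field, I would eliminate $w_{k,l}$ and $z_{k,l}$ and obtain the rational expressions~\eqref{eq:D3zwsolfin}. This elimination is purely algebraic and does not invoke the first integral.

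With $w_{k,l}$ and $z_{k,l}$ so expressed, the systems produced by $W_1=\xi_n$ collapse into reduced equations involving only $y_{k,l}$ and $v_{k,l}$, which I expect to be linear (or linearisable) after the standard change of dependent variable used in the proof of Proposition~\ref{prop:1D2}. Applying the reduction-to-total-difference trick to the equation for $y_{k,l}$ produces the form~\eqref{eq:D3ysolfin}, parametrised by the four arbitrary functions $c_k, d_k, \alpha_l, \beta_l$. Substituting this ansatz into the equation for $v_{k,l}$ yields a first-order inhomogeneous linear equation in each direction whose right-hand side is not, in general, a total difference; introducing the auxiliary sequences $e_k$ and $\gamma_l$ through the two discrete integrations~\eqref{eq:D3ekeqdef0}--\eqref{eq:D3gammleq0} absorbs the non-total-difference contributions and yields expression~\eqref{eq:D3vsolfin}.

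The main obstacle will be verifying that the two discrete integration conditions~\eqref{eq:D3discrint} are precisely the compatibility conditions needed for the construction to produce a bona fide solution of the full $D_3$ system. Once $y_{k,l}$ and $v_{k,l}$ are given by~\eqref{eq:D3yvsolfin} and $w_{k,l},z_{k,l}$ by~\eqref{eq:D3zwsolfin}, one must check by direct substitution that every parity version of~\eqref{eq:D3} is satisfied identically on the whole lattice. Since $D_3$ is multi-affine and the expressions are rational in the arbitrary functions, this reduces to polynomial identities that I expect to factor exactly through~\eqref{eq:D3ekeqdef0}--\eqref{eq:D3gammleq0}, possibly with the help of computer algebra. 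Finally, a parameter count (four arbitrary functions in the $k$ direction and four in the $l$ direction, with the two compatibility conditions absorbing effectively two of them) confirms that the representation carries the correct number of independent arbitrary functions of one discrete variable expected of a general solution of a quad-equation on the doubled lattice.
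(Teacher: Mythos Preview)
Your overall strategy---combine the algebraic structure of the equation with the $W_1$ first integral, pin down $y$ and $v$, then read off $w$ and $z$---matches the paper's. But several concrete steps differ from, or are missing relative to, the paper's actual proof, and one stated fact is wrong.

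First, the $W_1$ integral for $D_3$ (equation~\eqref{eq:W1D3}) is four-point, third-order, not second-order as you write. After the even/odd split in $n$ at $m=2l+1$ it yields the coupled \emph{nonlinear} system~\eqref{eq:D3ePmsys} in $y$ and $z$, not a pair of first-order equations as in the ${}_iD_2$ cases.

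Second, the paper never uses the $m=2l$ branch of the first integral. The fields $w$ and $v$ are dealt with entirely through the equation: solving each of the four parity equations~\eqref{eq:D3sys} for $z_{k,l}$ and comparing the two resulting expressions for $v_{k+1,l}$ and $v_{k+1,l+1}$ gives a compatibility condition that \emph{factors}; discarding the trivial factor $v_{k,l}=w_{k,l}^2$ is what produces $w_{k,l}$ as a function of $y$ alone, formula~\eqref{eq:D3wkldef}. This factorisation is the actual mechanism behind~\eqref{eq:D3wsolfin}; it is not a generic elimination, and you should expect to have to detect and discard a spurious branch.

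Third, the linearisation for $y$ is a specific manoeuvre: solving~\eqref{eq:D3ePmkp} for $z_{k+1,l}$ and substituting into~\eqref{eq:D3ePmk} collapses the nonlinear system to the genuinely \emph{linear} second-order equation~\eqref{eq:D3ykleq} in $y$ alone. This is the key step that makes the $D_3$ case tractable, and it does not follow from a generic ``reduction to total difference''.

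Finally, the constraint~\eqref{eq:D3gammleq0} on $\gamma_l$ does not come from the equation for $v$; it appears only at the very end, when one substitutes the already-obtained $y,v,w$ back into~\eqref{eq:D3sys} to compute $z_{k,l}$ and demands that all four equations agree.
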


\begin{Remark} We remark that we can say that equation \eqref{eq:D3gammleq0} def\/ines a discrete integration for the function $\gamma_{l}$ since it can be expressed in the form \eqref{eq:discrintdef}. Def\/ining a the new function $\zeta_{l}$ by
 \begin{gather*}
 \gamma_{l} =-\frac{\alpha_{l-1}\beta_{l-1}+\zeta_{l}}{ \alpha_{l}-\alpha_{l-1}}
 \end{gather*}
 we have that $\zeta_{l}$ satisf\/ies the following dif\/ference equation
 \begin{gather}
 \zeta_{l+1} - \zeta_{l}= \alpha_{l} \beta_{l-1}-\alpha_{l-1} \beta_{l}, \label{eq:D3deltaeqdef}
 \end{gather}
 i.e., it is given by a discrete integration.
\end{Remark}

\begin{proof} To f\/ind general of the $D_{3}$ equation \eqref{eq:D3} we start from the equation itself. Applying the general transformation \eqref{eq:genautsub} to the $D_{3}$ equation \eqref{eq:D3} we obtain the following system of four equations:
 \begin{subequations}\label{eq:D3sys}
 \begin{gather}
 v_{k,l}+w_{k,l} y_{k,l}+w_{k,l} z_{k,l}+y_{k,l} z_{k,l}=0, \label{eq:D3a}
 \\
 v_{k,l+1}+y_{k,l} w_{k,l+1}+z_{k,l} w_{k,l+1}+y_{k,l} z_{k,l}=0, \label{eq:D3b}
 \\
 v_{k+1,l}+w_{k,l} y_{k+1,l}+w_{k,l} z_{k,l}+z_{k,l} y_{k+1,l}=0, \label{eq:D3c}
 \\
 v_{k+1,l+1}+y_{k+1,l} w_{k,l+1}+z_{k,l} w_{k,l+1}+z_{k,l} y_{k+1,l}=0. \label{eq:D3d}
 \end{gather}
 \end{subequations}

 From the system \eqref{eq:D3sys} we have four dif\/ferent way for calculating $z_{k,l}$. This means that we have some compatibility conditions. Indeed from \eqref{eq:D3a} and \eqref{eq:D3c} we obtain the following equation for $v_{k+1,l}$:
 \begin{gather}
 v_{k+1,l} = \frac{(w_{k,l}+y_{k+1,l}) v_{k,l}}{w_{k,l}+y_{k,l}} +\frac{(y_{k,l}-y_{k+1,l}) w_{k,l}^2}{w_{k,l}+y_{k,l}}, \label{eq:evk}
 \end{gather}
 while from \eqref{eq:D3b} and \eqref{eq:D3d} we obtain the following equation for $v_{k+1,l+1}$:
 \begin{gather}
 v_{k+1,l+1} = \frac{(w_{k,l+1}+y_{k+1,l}) v_{k,l+1}}{w_{k,l+1}+y_{k,l}} +\frac{(y_{k,l}-y_{k+1,l}) w_{k,l+1}^2}{w_{k,l+1}+y_{k,l}}. \label{eq:evkp}
 \end{gather}

Equations \eqref{eq:evk} and \eqref{eq:evkp} give rise to a compatibility condition between $v_{k+1,l}$ and its shift in the $l$ direction $v_{k+1,l+1}$ which is given by
 \begin{gather*}
 \left(
 \begin{matrix}
y_{k,l} w_{k,l+1}+y_{k+1,l+1} w_{k,l+1}+y_{k+1,l+1} y_{k,l} \\
 -y_{k,l+1} w_{k,l+1}-y_{k+1,l} w_{k,l+1}-y_{k+1,l} y_{k,l+1}
 \end{matrix}
 \right) \big(v_{k,l+1}-w_{k,l+1}^2\big)=0. 
 \end{gather*}
 Discarding the trivial solution $v_{k,l}=w_{k,l}^{2}$ we obtain the following value for the f\/ield $w_{k,l}$
 \begin{gather}
 w_{k,l} = -\frac{y_{k+1,l} y_{k,l-1}-y_{k+1,l-1} y_{k,l}}{y_{k+1,l}+y_{k,l-1}-y_{k+1,l-1}-y_{k,l}}, \label{eq:D3wkldef}
 \end{gather}
 which makes \eqref{eq:evk} and \eqref{eq:evkp} compatible. Equation \eqref{eq:D3wkldef} gives $w_{k,l}$ in terms of $y_{k,l}$ alone and therefore it is the f\/irst part of the solution represented by~\eqref{eq:D3wsolfin}. Inserting equation~\eqref{eq:D3wkldef} into~\eqref{eq:evk} we are left with the following equation for~$v_{k,l}$
 \begin{gather}
 v_{k+1,l} = \frac{y_{k+1,l}-y_{k+1,l-1} }{y_{k,l}-y_{k,l-1}}v_{k,l} +\frac{(y_{k+1,l-1} y_{k,l}-y_{k+1,l} y_{k,l-1})^2}{ (y_{k+1,l-1}+y_{k,l}-y_{k+1,l}-y_{k,l-1}) (y_{k,l}-y_{k,l-1})}. \label{eq:evk2}
 \end{gather}
Applying the transformation
 \begin{gather}
 v_{k,l} = ( y_{k,l}-y_{k,l-1} )V_{k,l} + y_{k,l-1}^{2} \label{eq:D3Vkldef}
 \end{gather}
 we can simplify \eqref{eq:evk2} to the equation
 \begin{gather}
 V_{k+1,l}=V_{k,l}+ \frac{(y_{k,l-1}-y_{k+1,l-1})^2}{y_{k+1,l-1}+y_{k,l}-y_{k+1,l}-y_{k,l-1}}. \label{eq:evk3}
 \end{gather}

To go further we need to specify the form of the f\/ield~$y_{k,l}$. This can be obtained from the Darboux integrability of the $D_{3}$ equation~\eqref{eq:D3}. From \cite{GSY_DarbouxI} we know that the $D_{3}$ equation \eqref{eq:D3} possesses the following $W_{1}$ four-point, third-order integral
 \begin{gather}
 W_{1}={ \Fppp}{ \alpha} {\frac { ( u_{{n+1,m}}-u_{{n-1,m}}) ( u_{{n+2,m}}-u_{{n,m}}) }{u_{n+1,m}^{2}-u_{{n,m}}}}\nonumber \\
\hphantom{W_{1}=}{} +{ \Fpmm}{ \alpha} {\frac { ( u_{{n+1,m}}-u_{{n-1,m}} ) ( u_{n+2,m}-u_{{n,m}}) }{u_{{n,m}}+u_{{n-1,m}}}}\nonumber\\
 \hphantom{W_{1}=}{}-{ \Fmpm}{ \beta} \frac { ( u_{{n+1,m}}-u_{{n-1,m}} )(u_{{n+2,m}} -u_{{n,m}}) }{u_{{n+1,m}}- u_{n,m}^{2}}\nonumber \\
\hphantom{W_{1}=}{}+{ \Fmmp}{ \beta}\frac {( u_{{n+1,m}}-u_{{n-1,m}})( u_{{n+2,m}}-u_{{n,m}} ) }{u_{{n+1,m}}+u_{{n+2,m}}}. \label{eq:W1D3}
 \end{gather}
 Consider now the equation $W_{1}=\xi_{n}$ with $W_{1}$ given as in~\eqref{eq:W1D3}. This relation def\/ines a third-order, four-point ordinary dif\/ference equation in the $n$ direction depending parametrically on~$m$. In particular if we choose the case when $m=2l+1$ we have the equation
 \begin{gather}
 \Fp{n} \frac { ( u_{{n+1,2l+1}}-u_{{n-1,2l+1}})
 ( u_{{n+2,2l+1}}-u_{{n,2l+1}} ) }{u_{{n,2l+1}}+u_{{n-1,2l+1}}}\nonumber \\
 \qquad{} +\Fm{n}\frac { ( u_{{n+1,2l+1}}-u_{{n-1,2l+1}} ) ( u_{{n+2,2l+1}}-u_{{n,2l+1}} ) }{u_{{n+1,2l+1}}+u_{{n+2,2l+1}}} =\xi_{n}, \label{eq:D3ePm}
 \end{gather}
where we have taken without loss of generality $\alpha=\beta=1$. Using the transformation \eqref{eq:genautsublp} equation~\eqref{eq:D3ePm} is converted into the system
 \begin{subequations}\label{eq:D3ePmsys}
 \begin{gather}
 (y_{k+1,l}-y_{k,l})(z_{k,l}-z_{k-1,l}) =\xi_{2 k} (y_{k,l}+z_{k-1,l}), \label{eq:D3ePmk} \\
 (y_{k+1,l}-y_{k,l}) (z_{k+1,l}-z_{k,l})=\xi_{2 k+1}(y_{k+1,l}+z_{k+1,l}). \label{eq:D3ePmkp}
 \end{gather}
 \end{subequations}
 This system is nonlinear, but if we solve \eqref{eq:D3ePmkp} with respect to $z_{k+1,l}$ and we substitute it along with its shift in the $k$ direction into \eqref{eq:D3ePmk} we obtain a \emph{linear second-order ordinary difference equation} involving \emph{only} the f\/ield $y_{k,l}$
 \begin{gather}
 \xi_{{2 k-1}}y_{{k+1,m}} - ( \xi_{{2 k}}+ \xi_{{2 k-1}} ) y_{{k,m}} +\xi_{{2 k}}y_{{k-1,m}} +\xi_{{2 k}}\xi_{{2 k-1}} =0. \label{eq:D3ykleq}
 \end{gather}
 We can lower the order of equation \eqref{eq:D3ykleq} by one using the potential transformation
 \begin{gather}
 Y_{k,l} = y_{k+1,l}-y_{k,l}. \label{eq:D3Ykldef}
 \end{gather}
 Then $Y_{k,l}$ solves the equation
 \begin{gather*}
 Y_{k,l}-\frac{\xi_{2k}}{\xi_{2k-1}}Y_{k-1,l}+\xi_{2k}=0. 
 \end{gather*}
 Imposing that
 \begin{gather*}
 \xi_{2k}= -a_{k}( b_{k}-b_{k-1} ), \qquad \xi_{2k-1} = -a_{k-1} ( b_{k}-b_{k-1} ) 
 \end{gather*}
 we obtain that $Y_{k,l}$ can be expressed as
 \begin{gather*}
 Y_{k,l} = a_{k} ( b_{k}+\alpha_{l}). 
 \end{gather*}
 From \eqref{eq:D3Ykldef} we have that
 \begin{gather*}
 y_{k+1,l}-y_{k,l}= a_{k}( b_{k}+\alpha_{l}). 
 \end{gather*}
 Setting
 \begin{gather*}
 a_{k} =c_{k+1}-c_{k}, \qquad b_{k} = \frac{d_{k+1}-d_{k}}{c_{k+1}-c_{k}}, 
 \end{gather*}
 we have that the solution of is given by
 \begin{gather}
 y_{k,l} = \alpha_{l} c_{k} + d_{k} + \beta_{l}, \label{eq:D3yklsol}
 \end{gather}
 i.e., by equation \eqref{eq:D3ysolfin}.

 Inserting now the obtained value of $y_{k,l}$ from \eqref{eq:D3yklsol} into the equation \eqref{eq:evk3} we obtain
 \begin{gather*}
 V_{k+1,l}=V_{k,l} -\frac{(d_{k}+\alpha_{l-1} c_{k} -d_{k+1}-\alpha_{l-1} c_{k+1})^2}{ (\alpha_{l}-\alpha_{l-1}) (c_{k+1}-c_{k})}. 
 \end{gather*}
 This means that we can write down the following solution for $V_{k,l}$
 \begin{gather}
 V_{k,l} = \gamma_{l} -\alpha_{l-1} \frac{\alpha_{l-1} c_{k}+2 d_{k}}{ \alpha_{l}-\alpha_{l-1}} +\frac{e_{k}}{\alpha_{l}-\alpha_{l-1}}, \label{eq:D3Vklsol}
 \end{gather}
 up to a discrete integration for the function $e_{k}$
 \begin{gather*}
 e_{k+1}=e_{k} -\frac{( d_{k+1}-d_{k})^{2}}{c_{k+1}-c_{k}}, 
 \end{gather*}
i.e., up to the condition \eqref{eq:D3ekeqdef0}. Substituting the value of $V_{k,l}$ from~\eqref{eq:D3Vklsol} and of $y_{k,l}$ from~\eqref{eq:D3yklsol} into equation \eqref{eq:D3Vkldef} we have that $v_{k,l}$ is given by equation~\eqref{eq:D3vsolfin}. Plugging the obtained value of~$v_{k,l}$ we can compute $w_{k,l}$ from~\eqref{eq:D3wkldef}. Finally we can compute $z_{k,l}$ from the original system~\eqref{eq:D3sys} and we obtain a~\emph{single} compatibility condition given by
 \begin{gather}
 (\alpha_{{l+1}} -\alpha_{{l}} ) \gamma_{{l+1}} - (\alpha_{{l}}-\alpha_{{l-1}} ) \gamma_{{l}} -\alpha_{{l-1}}\beta_{{l-1}}+\alpha_{{l}}\beta_{{l}}
 +\alpha_{{l}}\beta_{{l-1}}-\alpha_{{l-1}}\beta_{{l}}=0, \label{eq:D3gammleq}
 \end{gather}
i.e., just by \eqref{eq:D3gammleq0}. Given this conditions all the equations in \eqref{eq:D3sys} are compatible and $z_{k,l}$ is indif\/ferently given by solving one of the equation. E.g., solving~\eqref{eq:D3a} we can say that $z_{k,l}$ is given by equation~\eqref{eq:D3zsolfin}. This ends the procedure of solution of the $D_{3}$ equation~\eqref{eq:D3}.
\end{proof}

\begin{Proposition} The $_{1}D_{4}$ equation \eqref{eq:1D4} is exactly solvable. We have that the expression of the fields $y_{k,l}$ and $v_{k,l}$ is given by
 \begin{subequations} \label{eq:1D4yvsolfin}
 \begin{gather}
 y_{k,l} = c_{k}( \alpha_{l}d_{k}+\beta_{l} ), \label{eq:1D4ysolfin} \\
 v_{k,l} = c_{k}[ (\alpha_{l}-\alpha_{l-1})d_{k}
 +\beta_{l}-\beta_{l-1} ] \left\{ \gamma_{l} +\frac{ \delta_{1} \delta_{3}}{ \beta_{l-1} \alpha_{l}-\beta_{l} \alpha_{l-1} }
 \left[\frac{\alpha_{l-1}}{c_{k}^2 (\alpha_{l-1} d_{k}+\beta_{l-1})} +e_{k}\right] \right\}\nonumber \\
\hphantom{v_{k,l} =}{} +\frac{\delta_{1}\delta_{3}}{c_{k}\left( \alpha_{l-1}d_{k}+\beta_{l-1} \right)}
 -\delta_{2}c_{k}\left( \alpha_{l-1}d_{k}+\beta_{l-1} \right), \label{eq:1D4vsolfin}
 \end{gather}
 \end{subequations}
 where the function functions $c_{k}$, $d_{k}$, $\alpha_{l}$ and $\beta_{l}$ are arbitrary functions of their arguments, whereas the function~$e_{k}$ and~$\gamma_{l}$ are given through the discrete integrations
 \begin{subequations} \label{eq:1D4discrint}
 \begin{gather}
 e_{k+1}-e_{k} =-\frac{(c_{k}-c_{k+1})^2}{ c_{k}^2 c_{k+1}^2 (d_{k+1}-d_{k})}, \label{eq:1D4ekeqdef0} \\
 ( \beta_{{l}}\alpha_{{l+1}}-\beta_{{l+1}}\alpha_{{l}} ) \gamma _{{l+1}} - ( \beta_{{l-1}}\alpha_{{l}}-\beta_{{l}}\alpha_{{l-1}} )
 \gamma _{{l}} = ( \beta_{{l-1}}\alpha_{{l}}-\beta_{{l}}\alpha_{{l-1}} ) {\delta_2}. \label{eq:1D4gammleq0}
 \end{gather}
 \end{subequations}
 The fields $z_{k,l}$ and $w_{k,l}$ are then given in terms of $y_{k,l}$ and $v_{k,l}$ as
 \begin{subequations} \label{eq:1D4zwsolfin}
 \begin{gather}
 z_{k,l} = - \frac{\left[
 \begin{gathered}
 \delta_{3}\left( y_{k+1,l} y_{k,l-1}-y_{k+1,l-1} y_{k,l} \right) \\
 + \delta_{1}\delta_{3}y_{k,l}( y_{k+1,l-1}-y_{k+1,l}-y_{k,l-1}+y_{k,l} ) \end{gathered} \right] }{ \left[
 \begin{gathered}
 ( v_{k,l}+\delta_{2}y_{k,l} ) ( y_{k+1,l} y_{k,l-1}-y_{k+1,l-1} y_{k,l}) \\
 + \delta_{1}\delta_{3}( y_{k+1,l-1}-y_{k+1,l}-y_{k,l-1}+y_{k,l}) \end{gathered} \right] },
 \label{eq:1D4zsolfin} \\
 w_{k,l} = \delta_{3}\frac{y_{k+1,l-1}-y_{k+1,l}-y_{k,l-1}+y_{k,l}}{ y_{k+1,l} y_{k,l-1}-y_{k+1,l-1} y_{k,l}}.
 \label{eq:1D4wsolfin}
 \end{gather}
 \end{subequations}
\end{Proposition}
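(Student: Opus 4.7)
The proof would follow the same architecture as the proof of Proposition~\ref{prop:D3} for the $D_{3}$ equation. First, I would apply the even/odd splitting \eqref{eq:genautsub} directly to the $_{1}D_{4}$ equation~\eqref{eq:1D4} to obtain an autonomous system of four coupled polynomial equations for the fields $v_{k,l}$, $w_{k,l}$, $y_{k,l}$, $z_{k,l}$, analogous to the system \eqref{eq:D3sys}. From this system I would derive two expressions for $v_{k+1,l}$ and $v_{k+1,l+1}$ by eliminating $z_{k,l}$, and impose the compatibility condition arising from shifting in the $l$ direction. After discarding a spurious branch (as was done with $v_{k,l}=w_{k,l}^{2}$ in the $D_{3}$ case), this step should produce the explicit formula \eqref{eq:1D4wsolfin} for $w_{k,l}$ purely in terms of $y_{k,l}$, together with a first-order difference equation for $v_{k,l}$ depending only on $y_{k,l}$.

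Second, I would bring in the Darboux integrability content: starting from the $W_{1}$ first integral of the $_{1}D_{4}$ equation constructed in~\cite{GSY_DarbouxI}, the relation $W_{1}=\xi_{n}$ is an ordinary difference equation in~$n$ depending parametrically on~$m$. Splitting according to the parity of~$m$ and applying the transformation \eqref{eq:genautsub}, I expect to obtain a system analogous to \eqref{eq:D3ePmsys} which, after elimination of the companion field, reduces to a single linear second-order equation for $y_{k,l}$ in the $k$ variable. Lowering the order by a potential substitution of the form $Y_{k,l}=y_{k+1,l}-y_{k,l}$ and then suitably parametrising the arbitrary function $\xi_{n}$ via new arbitrary functions $a_{k}$, $b_{k}$ (and subsequently $c_{k}$, $d_{k}$), I expect the solution to take the multiplicative/affine form \eqref{eq:1D4ysolfin}. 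A parallel analysis on the sector $m=2l$ must yield a compatible solution so that the two sectors glue consistently.

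Third, with $y_{k,l}$ explicit, I would substitute \eqref{eq:1D4ysolfin} into the first-order equation for $v_{k,l}$ obtained in the first step. After an affine change of variable in $v_{k,l}$ tailored to absorb the polynomial inhomogeneity (the analogue of \eqref{eq:D3Vkldef}), the equation should reduce to a telescoping form that requires a single discrete integration in $k$; this is precisely what the auxiliary function $e_{k}$ encodes, producing \eqref{eq:1D4ekeqdef0}. Undoing the substitution gives \eqref{eq:1D4vsolfin}. Finally, $z_{k,l}$ is determined algebraically by any one of the four equations in the split system, yielding \eqref{eq:1D4zsolfin}; requiring that all four equations be simultaneously satisfied imposes a single remaining compatibility condition on the $l$-dependent arbitrary functions, which I expect to rearrange precisely to~\eqref{eq:1D4gammleq0}, hence defining $\gamma_{l}$ as a discrete integration in~$l$.

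I anticipate that the main obstacle will be the algebraic elimination in the first step: the $_{1}D_{4}$ equation carries more parameters ($\delta_{1},\delta_{2},\delta_{3}$) and richer quadratic structure than $D_{3}$, so the compatibility analysis that produced~\eqref{eq:D3wkldef} will be computationally heavier and it will be crucial to identify and discard the correct spurious factor in order to recover the clean expression \eqref{eq:1D4wsolfin}. A secondary delicate point will be verifying that the $\delta_{2}$-term appearing on the right-hand side of \eqref{eq:1D4gammleq0}, absent in the analogous $D_{3}$ condition \eqref{eq:D3gammleq0}, emerges naturally from the final consistency check and not from an ill-chosen branch earlier in the reduction.
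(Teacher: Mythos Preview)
Your overall architecture matches the paper's proof closely: split the equation via \eqref{eq:genautsub} into a four-equation system, eliminate $z_{k,l}$ to obtain two expressions for shifts of $v_{k,l}$, impose compatibility to get \eqref{eq:1D4wsolfin} (discarding the spurious branch $v_{k,l}=-\delta_{1}w_{k,l}+\delta_{2}\delta_{3}/w_{k,l}$), transform $v_{k,l}$ by an affine substitution to get a telescoping equation for an auxiliary $V_{k,l}$, determine $y_{k,l}$ from the $m=2l+1$ sector of $W_{1}=\xi_{n}$, and finish by reading off $z_{k,l}$ and extracting the $\gamma_{l}$ compatibility \eqref{eq:1D4gammleq0}. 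Two points deserve correction.

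First, the potential substitution $Y_{k,l}=y_{k+1,l}-y_{k,l}$ that worked for $D_{3}$ will \emph{not} lower the order here. After eliminating $z_{k,l}$ from the system analogous to \eqref{eq:D3ePmsys}, the linear equation for $y_{k,l}$ reads
\[
\xi_{2k-1}\,y_{k+1,l}+(1-\xi_{2k}-\xi_{2k}\xi_{2k-1})\,y_{k,l}-(1-\xi_{2k})\,y_{k-1,l}=0,
\]
and the middle coefficient is not $(1-\xi_{2k})-\xi_{2k-1}$, so the simple first difference does not telescope. The paper instead takes $Y_{k,l}=a_{k}y_{k,l}+b_{k}y_{k-1,l}$ with $k$-dependent coefficients and \emph{reparametrises} the two arbitrary functions $\xi_{2k},\xi_{2k-1}$ in terms of $a_{k},b_{k}$ so that $Y_{k+1,l}-Y_{k,l}$ equals the left-hand side identically; this yields the first-order equation $a_{k}y_{k,l}+b_{k}y_{k-1,l}=\alpha_{l}$, whose solution (after a further reparametrisation $a_{k},b_{k}\to c_{k},d_{k}$) is exactly \eqref{eq:1D4ysolfin}. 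You should expect to need this more general ansatz.

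Second, there is no separate ``$m=2l$ sector'' analysis to perform: the fields $v_{k,l}$ and $w_{k,l}$ are already determined from the split system and the compatibility step, not from the first integral. Only the $m=2l+1$ sector of $W_{1}=\xi_{n}$ is used, and solely to pin down $y_{k,l}$.
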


\begin{Remark} We remark that we can say that equation \eqref{eq:1D4gammleq0} def\/ines a discrete integration for the function $\gamma_{l}$ since it can be expressed in the form \eqref{eq:discrintdef}. Def\/ining a new function $\zeta_{l}$ by
 \begin{gather*}
 \gamma_{l} = \frac{\zeta_{l} \delta_{2}}{ \beta_{l-1} \alpha_{l}-\beta_{l} \alpha_{l-1}} 
 \end{gather*}
 we have that $\zeta_{l}$ is given by the following dif\/ference equation
 \begin{gather}
 \zeta_{l+1} - \zeta_{l}= \alpha_{l} \beta_{l-1}-\alpha_{l-1} \beta_{l}, \label{eq:1D4deltaeqdef}
 \end{gather}
 i.e., it is given by a discrete integration. Note that \eqref{eq:1D4deltaeqdef} is exactly the same as \eqref{eq:D3deltaeqdef}.
\end{Remark}

\begin{proof} The proof of the solution of the $_{1}D_{2}$ equation \eqref{eq:1D4} proceeds as the one outlined in Proposition~\ref{prop:D3}. The interested reader can f\/ind the details in Appendix~\ref{app:remsols}.
\end{proof}

\begin{Proposition} The $_{2}D_{4}$ equation \eqref{eq:2D4} is exactly solvable. We have that the expression of the fields $y_{k,l}$ and $v_{k,l}$ is given by
 \begin{subequations} \label{eq:2D4yvsolfin}
 \begin{gather}
 y_{k,l}= c_{k}( \alpha_{l}d_{k}+\beta_{l} ), \label{eq:2D4ysolfin} \\
 v_{k,l} = c_{k}[ (\alpha_{l}-\alpha_{l-1})d_{k} +\beta_{l}-\beta_{l-1} ] \nonumber \\
 \hphantom{v_{k,l} =}{}\times
 \left\{\gamma_{l} -\frac{\alpha_{l} \delta_{1} \delta_{3}}{ (\alpha_{l} d_{k}+\beta_{l}) c_{k}^2 (\beta_{l} \alpha_{l-1}-\beta_{l-1} \alpha_{l})}
 +\frac{\delta_{3} \delta_{1} e_{k}}{ \beta_{l} \alpha_{l-1}-\beta_{l-1} \alpha_{l}} \right\}\nonumber \\
\hphantom{v_{k,l} =}{} +\frac{\delta_{1}\delta_{3}}{c_{k} ( \alpha_{l}d_{k}+\beta_{l} )} -\delta_{2}c_{k}( \alpha_{l}d_{k}+\beta_{l} ),\label{eq:2D4vsolfin}
 \end{gather}
 \end{subequations}
 where the function functions $c_{k}$, $d_{k}$, $\alpha_{l}$ and $\beta_{l}$ are arbitrary functions of their arguments, whereas the function $e_{k}$ and $\gamma_{l}$ are given through the discrete integrations
 \begin{subequations} \label{eq:2D4discrint}
 \begin{gather}
 e_{k+1}-e_{k} =\frac{(c_{k+1}-c_{k})^2}{ (d_{k+1}-d_{k}) c_{k}^2 c_{k+1}^2}, \label{eq:2D4ekeqdef0} \\
 (\beta_{l} \alpha_{l+1}-\beta_{l+1} \alpha_{l}) \gamma_{l+1} -(\beta_{l-1} \alpha_{l}-\beta_{l} \alpha_{l-1}) \gamma_{l}
 =(\beta_{l} \alpha_{l+1}-\beta_{l+1} \alpha_{l}) \delta_{2}. \label{eq:2D4gammleq0}
 \end{gather}
 \end{subequations}
 The fields $z_{k,l}$ and $w_{k,l}$ are then given in terms of $y_{k,l}$ and $v_{k,l}$ as
 \begin{subequations} \label{eq:2D4zwsolfin}
 \begin{gather}
 z_{k,l} = -\frac{1}{\delta_{1}}
 \frac{\left[ \begin{gathered}
 \delta_{1}\delta_{3}( y_{k,l-1}+y_{k+1,l}-y_{k,l}-y_{k+1,l-1} )
 \\
 + ( v_{k,l}+\delta_{2}y_{k,l} )
( y_{k+1,l-1} y_{k,l}-y_{k+1,l} y_{k,l-1} )
 \end{gathered}
 \right]}{ \left[
 \begin{gathered} y_{k+1,l-1} y_{k,l}-y_{k+1,l} y_{k,l-1} \\
 +y_{k,l}( y_{k,l-1}+y_{k+1,l}-y_{k,l}-y_{k+1,l-1})
 \end{gathered}
 \right]}, \label{eq:2D4zsolfin} \\
 w_{k,l} = \frac{1}{\delta_{1}}
 \frac{y_{k+1,l-1} y_{k,l}-y_{k+1,l} y_{k,l-1}}{ y_{k,l-1}+y_{k+1,l}-y_{k,l}-y_{k+1,l-1}}. \label{eq:2D4wsolfin}
 \end{gather}
 \end{subequations}
\end{Proposition}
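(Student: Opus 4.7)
My plan is to follow the same scheme that worked for $D_3$ in Proposition~\ref{prop:D3} and for $_{1}D_{4}$, exploiting both the equation itself and the $W_1$ first integral derived in \cite{GSY_DarbouxI}. First I would apply the even/odd splitting \eqref{eq:genautsub} to the $_{2}D_{4}$ equation \eqref{eq:2D4}, producing a system of four coupled algebraic relations among $v_{k,l}$, $w_{k,l}$, $y_{k,l}$, $z_{k,l}$, analogous to \eqref{eq:D3sys}. Eliminating $z_{k,l}$ between pairs of these relations gives two distinct expressions for $v_{k+1,l}$ and $v_{k+1,l+1}$; imposing their consistency (after discarding a trivial branch analogous to $v_{k,l}=w_{k,l}^2$) should yield a closed formula for $w_{k,l}$ as a rational function of four neighbouring values of $y_{k,l}$, giving \eqref{eq:2D4wsolfin}. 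Substituting this formula back into the expression for $v_{k+1,l}$ should produce a first-order linear inhomogeneous equation for $v_{k,l}$ in the $k$ direction, with coefficients depending only on $y_{k,l}$.

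Next I would extract the form of $y_{k,l}$ from the $W_1$ first integral of $_{2}D_{4}$. Setting $W_1=\xi_n$ and selecting the parity $m=2l+1$ (or whichever parity isolates $y$ and $z$) gives, via \eqref{eq:genautsublp}, a system of two coupled nonlinear difference equations for $y_{k,l}$ and $z_{k,l}$. By eliminating $z_{k,l}$ in the manner of \eqref{eq:D3ykleq}, I expect to obtain a second-order \emph{linear} equation for $y_{k,l}$ alone. Lowering its order through a potential substitution $Y_{k,l}=y_{k+1,l}-y_{k,l}$ (or a multiplicative variant adapted to the multiplicative structure of $_{2}D_{4}$), and then absorbing the arbitrary functions of $n$ into new arbitrary functions of $k$ (as in the passages leading to $a_k,b_k,c_k,d_k$ in the $D_3$ proof), should produce the multiplicative ansatz $y_{k,l}=c_k(\alpha_l d_k+\beta_l)$, i.e.\ \eqref{eq:2D4ysolfin}.

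With $y_{k,l}$ and $w_{k,l}$ fixed, I would insert them into the linear equation for $v_{k,l}$ obtained in the first step. This should reduce, after a substitution analogous to \eqref{eq:D3Vkldef}, to a telescoping equation $V_{k+1,l}-V_{k,l}=R_{k,l}$ whose inhomogeneity separates into a term computable from a single discrete integration in $k$, plus an additive function of $l$. The $k$-integration will produce the auxiliary function $e_k$ governed by \eqref{eq:2D4ekeqdef0}, and the additive function of $l$ will be $\gamma_l$; reconstructing $v_{k,l}$ from $V_{k,l}$ and $y_{k,l}$ should yield \eqref{eq:2D4vsolfin}. Finally $z_{k,l}$ is computed by solving any one of the four equations in the system produced in step~1; consistency among the four choices imposes a single constraint on $\gamma_l$ and $\alpha_l$, $\beta_l$, which I expect to match \eqref{eq:2D4gammleq0} exactly.

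The part I expect to be the main obstacle is the algebra of step~2: the $_{2}D_{4}$ equation differs from $_{1}D_{4}$ precisely through the appearance of $F^{(\pm)}_{n+m}$ in place of $F^{(\pm)}_m$ in the bilinear cross-terms, so the nonlinear system produced by the $W_1$ integral for $_{2}D_{4}$ has a different coupling pattern between $y_{k,l}$ and $z_{k,l}$. Showing that elimination still collapses to a second-order \emph{linear} equation for $y_{k,l}$ (rather than a genuinely Riccati-type one) is the key technical point, and the compatibility condition that the sign on the right-hand side of \eqref{eq:2D4gammleq0} differs from that of \eqref{eq:1D4gammleq0} is exactly the trace of this twist; verifying it by direct substitution into \eqref{eq:2D4} is the routine but necessary final check.
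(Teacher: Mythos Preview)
Your outline matches the paper's argument almost step for step: split via \eqref{eq:genautsub} into a four-equation system, eliminate $z_{k,l}$ pairwise to get two expressions for $v_{k+1,l}$ and $v_{k+1,l+1}$, impose compatibility (discarding the trivial branch $v_{k,l}=\delta_1\delta_2 w_{k,l}-\delta_3/w_{k,l}$) to obtain \eqref{eq:2D4wsolfin}, then reduce the $v$-equation by a substitution of the form $v_{k,l}=(y_{k,l}-y_{k,l-1})V_{k,l}+\delta_1\delta_3/y_{k,l}-\delta_2 y_{k,l}$, solve for $y_{k,l}$ from the $W_1$ integral at $m=2l+1$, integrate the telescoping $V$-equation to produce $e_k$ and $\gamma_l$, and finally read off $z_{k,l}$ with the compatibility \eqref{eq:2D4gammleq0}.

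One technical point where your plan diverges slightly: the potential substitution $Y_{k,l}=y_{k+1,l}-y_{k,l}$ that worked for $D_3$ does \emph{not} lower the order here, because the coefficients of the second-order linear equation for $y_{k,l}$ (the analogue of \eqref{eq:D3ykleq}) do not sum to zero. The paper instead sets $Y_{k,l}=a_k y_{k,l}+b_k y_{k-1,l}$ and chooses $\xi_{2k},\xi_{2k-1}$ in terms of $a_k,b_k$ so that $Y_{k+1,l}-Y_{k,l}$ reproduces the left-hand side; this collapses the problem to the first-order relation $a_k y_{k,l}+b_k y_{k-1,l}=\alpha_l$, which is then made telescoping by a further reparametrization $a_k=1/(c_k(d_k-d_{k-1}))$, $b_k=-1/(c_{k-1}(d_k-d_{k-1}))$, yielding \eqref{eq:2D4ysolfin}. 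This is the same trick used for $_{1}D_{4}$, and it is the only place your sketch would need adjusting.
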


\begin{Remark} We remark that we can say that equation \eqref{eq:1D4gammleq0} def\/ines a discrete integration for the function $\gamma_{l}$ since it can be expressed in the form \eqref{eq:discrintdef}. Def\/ining a new function $\zeta_{l}$ by
 \begin{gather*}
 \gamma_{l} = \frac{\zeta_{l} \delta_{2}}{ \beta_{l-1} \alpha_{l}-\beta_{l} \alpha_{l-1}} 
 \end{gather*}
 we have that $\zeta_{l}$ is given by the following dif\/ference equation
 \begin{gather*}
 \zeta_{l+1} -\zeta_{l}= \alpha_{l+1} \beta_{l}-\alpha_{l} \beta_{l+1}, 
 \end{gather*}
 i.e., it is given by a discrete integration.
\end{Remark}

\begin{proof} The proof of the solution of the $_{2}D_{2}$ equation \eqref{eq:2D4} proceeds as the one outlined in Proposition~\ref{prop:D3}. The interested reader can f\/ind the details in Appendix~\ref{app:remsols}.
\end{proof}

\subsection[The ${}_{t}H^{\varepsilon}_{2}$ and the ${}_{t}H^{\varepsilon}_{3}$ equations]{The $\boldsymbol{{}_{t}H^{\varepsilon}_{2}}$ and the $\boldsymbol{{}_{t}H^{\varepsilon}_{3}}$ equations}\label{sec:trapH4}

In this subsection we construct a general solution of the \tHeq{2} and the \tHeq{3} equations. As we recalled in the introduction, the solution of the \tHeq{1} through the f\/irst integrals was already presented in \cite{GSY_DarbouxI}, so we will not discuss it again. Moreover we also recall that the general solution of the \tHeq{1} equation was f\/irst found in \cite{GSL_general,GSL_Gallipoli15} without the knowledge of the f\/irst integrals. The f\/irst integrals of the \tHeq{1} equation were f\/irst presented in~\cite{GSL_Pavel}.

The procedure we will follow will make use of the f\/irst integrals, in a similar way than in the cases presented in Section~\ref{sec:otherH6}. The main dif\/ference is in the fact that the $H^{4}$ are non-autonomous only in the direction $m$, i.e., they depend only on the non-autonomous factors~$F_{m}^{(\pm)}$ as given by \eqref{eq:fk}. Therefore instead of the general transformation~\eqref{eq:genautsub} we can use the simplif\/ied transformation
\begin{gather}
 u_{n, 2l} = p_{n, l},\qquad u_{n, 2l+1} = q_{n, l}. \label{eq:h4aut}
\end{gather}
Then to describe the general solution of a $H^{4}$ we only need three arbitrary functions: one in the~$n$ direction and two in the $m$ direction.

We have then that the following propositions hold true:

\begin{Proposition} \label{prop:tH2e} The \tHeq{2}\ equation \eqref{eq:tH2e} is exactly solvable. If $\varepsilon\neq0$ and the field $q_{n,l}$ do not satisfy the discrete wave equation
 \begin{gather}
 q_{{n+1,l+1}}+q_{{n,l}}=q_{{n+1,l}}+q_{{n,l+1}}, \label{eq:tH2qsing}
 \end{gather}
 then the solution of the \tHeq{2}\ equation \eqref{eq:tH2e} is given by
 \begin{subequations} \label{eq:tH2soldefgen}
 \begin{gather}
 q_{n,l} = \beta_{l} + \frac{\gamma_{l}+\zeta_{l}e_{n} + f_{n}}{c_{n}+\zeta_{l}}, \label{eq:tH2qnlsoldefgen} \\
 p_{{n,l}}=\frac { \left\{
 \begin{gathered}( q_{{n,l}}-{ \alpha_3} ) q_{{n+1,l-1}} -( q_{{n,l-1}}-{ \alpha_3} ) q_{{n+1,l}} \\
 {}-( { \alpha_2}+{ \alpha_3}) ( q_{{n,l}}-q_{{n,l-1}}) -\varepsilon\alpha_3^{2}( q_{{n,l}}-q_{{n,l-1}} ) \\
 {}+\varepsilon \big[ {{ \alpha_3}}^{2}
 + 2\alpha_3( q_{{n,l-1}}+ { \alpha_2}) -( q_{{n,l}}-q_{{n,l-1}}) q_{{n+1,l-1}} \big]q_{{n+1,l}} \\
 {}+\varepsilon ( { \alpha_2}+q_{{n,l}}) ( { \alpha_2}+q_{{n,l-1}} ) q_{{n+1,l}} \\
 -\varepsilon q_{{n+1,l-1}}
 \big[ {{ \alpha_3}}^{2}-2( q_{{n,l}}+ { \alpha_2}){\alpha_3}
 + ( { \alpha_2}+q_{{n,l}} )
( { \alpha_2}+q_{{n,l-1}} ) \big]
 \end{gathered}\right\}
 }{ q_{{n+1,l}}-q_{{n,l}}+q_{{n,l-1}}-q_{{n+1,l-1}}}, \label{eq:tH2pnlsoldefgen}
 \end{gather}
 \end{subequations}
 where $c_{n}$, $\zeta_{l}$ and $\beta_{l}$ are arbitrary functions of their arguments and $e_{n}$ is a solution of the equation
 \begin{gather}
 e_{n+1} -\frac{c_{n+1}-c_{n-1}}{c_{n}-c_{n-1}} e_{n} +\frac{c_{n+1}-c_{n}}{c_{n}-c_{n-1}}e_{n-1} - \alpha_2 \frac{c_{n+1}-c_{n-1}}{c_{n}-c_{n-1}} =0, \label{eq:tH2eeq0}
 \end{gather}
 while $f_{n}$ and $\gamma_{l}$ are given by the discrete integrations
 \begin{subequations}\label{eq:tH2discrint0}
 \begin{gather}
 f_{n}-f_{n-1}=e_{n}c_{n-1}-c_{n} e_{n-1}, \label{eq:tH2feq0} \\
 \gamma_{l+1}-\gamma_{l}= -(\zeta_{l+1}-\zeta_{l}) \left(\alpha_2+\beta_{l+1}+\beta_{l}+2\alpha_3-\frac{1}{\epsilon}\right). \label{eq:tH2gameq0}
 \end{gather}
 \end{subequations}
 If $\varepsilon=0$, but the field $q_{n,l}$ do not satisfy the discrete wave equation \eqref{eq:tH2singcc} then the solution of the \tHeq{2}\ equation~\eqref{eq:tH2e} is given by
 \begin{subequations} \label{eq:tH2soldef0}
 \begin{gather}
 q_{n,l} = \beta_{l} + \frac{\gamma_{l}+\zeta_{0}e_{n} + f_{n}}{c_{n}+\zeta_{0}}, \label{eq:tH2qnlsoldef0} \\
 p_{{n,l}}=\frac { \left[
 \begin{gathered}
( q_{{n,l}}-{ \alpha_3} ) q_{{n+1,l-1}} -( q_{{n,l-1}}-{ \alpha_3} ) q_{{n+1,l}} \\
 {}-( { \alpha_2}+{ \alpha_3}) ( q_{{n,l}}-q_{{n,l-1}}) \end{gathered}\right]
 }{ q_{{n+1,l}}-q_{{n,l}}+q_{{n,l-1}}-q_{{n+1,l-1}}}, \label{eq:tH2pnlsoldef0}
 \end{gather}
 \end{subequations}
where $c_{n}$, $\beta_{l}$ and $\gamma_{l}$ are arbitrary functions of their arguments, $\zeta_{0}$ is a constant and $e_{n}$ is a~solution of~\eqref{eq:tH2eeq0} and $f_{n}$ is a solution of \eqref{eq:tH2feq0}. If the field $q_{n,l}$ satisfies the discrete wave equation~\eqref{eq:tH2qsing}
 regardless of the value of the parameter $\varepsilon$ the solution of the \tHeq{2}\ equation \eqref{eq:tH2e} is given by
 \begin{subequations} \label{eq:tH2singdef}
 \begin{gather}
 q_{n,l} = a_{n} +\zeta_{0}, \label{eq:tH2qnlsingdef} \\
 p_{n,l} = b_{n} ( \beta_{l}+c_{n} ), \label{eq:tH2pnlsingdef}
 \end{gather}
 \end{subequations}
 where $b_{n}$ and $\beta_{l}$ are arbitrary functions of their arguments, $\zeta_{0}$ is a constant and $a_{n}$ and $c_{n}$ are given by the discrete
 integration
 \begin{subequations} \label{eq:tH2discrintsing}
 \begin{gather}
 {\frac {a_{{n+1}}-a_{{n}} +{ \alpha_2}}{a_{{n+1}}-a_{{n}}-{ \alpha_2}}}= \frac{b_{n+1}}{b_{n}}, \label{eq:tH2aeq0} \\
 c_{n+1} - c_{n} = {\frac { ( a_{{n}}+\zeta_{0}-{ \alpha_3} ) b_{{n+1}} +b_{{n}} ( { \alpha_2}+{ \alpha_3}-\zeta_{0}-a_{{n}} ) }{ b_{{n}}b_{{n+1}}}}\nonumber \\
\hphantom{c_{n+1} - c_{n} =}{} -\epsilon{\frac { \big[ ( { \alpha_2}+\zeta_{0}+a_{{n}}+{ \alpha_3} ) ^{2}b_{{n+1}}-b_{{n}} ( { \alpha_3}+a_{{n}}+\zeta_{0} ) ^{2} \big] }{b_{{n}}b_{{n+1}}}}. \label{eq:tH2ceq0}
\end{gather}
 \end{subequations}
\end{Proposition}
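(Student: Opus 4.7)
The plan is to follow the strategy used in Proposition~\ref{prop:D3}, adapted to the trapezoidal $H^{4}$ setting. Because the \tHeq{2} equation \eqref{eq:tH2e} is non-autonomous only in the $m$ direction, I would first apply the simpler splitting \eqref{eq:h4aut}, which resolves the two-periodic coefficients without having to split in $n$ as well. This converts \eqref{eq:tH2e} into a coupled system of two autonomous equations for the fields $p_{n,l}$ and $q_{n,l}$, and the goal becomes the construction of its general solution in terms of three arbitrary functions (one in $n$, two in $l$).

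I would then invoke the $W_{1}$ first integral of the \tHeq{2} equation from~\cite{GSY_DarbouxI} and impose $W_{1}=\xi_{n}$ with $\xi_{n}$ an arbitrary function. Since this first integral involves only the odd-row field $q_{n,l}$, the relation $W_{1}=\xi_{n}$ is a three-point nonlinear ordinary difference equation in $n$, parametrized by $l$. Experience from the $_{i}D_{2}$ and $D_{3}$ cases suggests that, after a potential-type substitution, this equation will turn into a discrete Riccati equation rather than a linear one. I would then linearize it by the standard Riccati trick, renaming $\xi_n$ in terms of a new arbitrary function $c_n$; this should yield the linear second-order non-autonomous difference equation~\eqref{eq:tH2eeq0} for an auxiliary function $e_n$, together with a total difference that defines $f_n$ via~\eqref{eq:tH2feq0}. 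Combining $c_n$, $e_n$, $f_n$ with two functions $\beta_{l}$, $\zeta_{l}$ arising as $l$-integration constants yields the claimed closed form~\eqref{eq:tH2qnlsoldefgen} for $q_{n,l}$.

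Once $q_{n,l}$ is known, the \tHeq{2} equation is affine in the four corner values of $p$, so that four neighbouring copies of \eqref{eq:tH2e} provide a linear system whose elimination gives $p_{n,l}$ as the explicit rational expression~\eqref{eq:tH2pnlsoldefgen} in the $q$'s. Substituting this back into~\eqref{eq:tH2e} produces a compatibility condition involving only the $l$-dependent functions $\beta_l,\zeta_l,\gamma_l$; separating terms according to their $n$-dependence should collapse this to the single discrete integration~\eqref{eq:tH2gameq0} defining $\gamma_l$. The elimination step is legitimate exactly when the denominator $q_{n+1,l}-q_{n,l}+q_{n,l-1}-q_{n+1,l-1}$ in~\eqref{eq:tH2pnlsoldefgen} does not vanish identically, i.e.\ when $q_{n,l}$ does not solve the discrete wave equation~\eqref{eq:tH2qsing}, so the generic branch of the proposition is covered.

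The main obstacle I foresee lies in the careful treatment of the degenerate branches rather than in the generic derivation. For $\varepsilon=0$ the $l$-compatibility condition degenerates: the coefficient multiplying $\zeta_{l+1}-\zeta_l$ drops out, forcing $\zeta_l$ to collapse to a constant $\zeta_0$, and one must re-verify that the resulting expression~\eqref{eq:tH2soldef0} still contains the right number of arbitrary functions to parametrize all non-singular solutions. The singular branch where $q_{n,l}$ solves \eqref{eq:tH2qsing} is more delicate: the discrete d'Alembert argument recalled after \eqref{eq:dwave} forces $q_{n,l}$ to be additively separable, so that~\eqref{eq:tH2e} reduces to two coupled first-order relations between an $n$-dependent function $a_n$ and an auxiliary $b_n$, and the $p_{n,l}$ field must be reconstructed directly from these. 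Reducing the resulting relations to the discrete integrations \eqref{eq:tH2aeq0}--\eqref{eq:tH2ceq0} and verifying that no further hidden identities appear among the arbitrary functions will be the most technical portion of the proof.
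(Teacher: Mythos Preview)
Your plan is essentially the same route the paper takes, and it would work. Two small corrections are worth flagging. First, the $W_{1}$ integral of the \tHeq{2} equation from~\cite{GSY_DarbouxI} is a \emph{four}-point, third-order integral, not three-point; on the odd rows it yields a four-point relation in $q_{n,l}$, and the reduction to a discrete Riccati equation requires an intermediate substitution $Q_{n,l}=(q_{n,l}-q_{n-1,l}+\alpha_{2})/(q_{n+1,l}-q_{n-1,l})$ before the standard linearisation can be applied. Second, the paper makes the ``compatibility condition'' you allude to completely explicit: eliminating $p_{n,l+1}$ and $p_{n+1,l+1}$ from two copies of the split system produces a genuine six-point partial difference equation for $q_{n,l}$ alone, and it is into \emph{this} six-point equation that the Riccati-derived ansatz for $q_{n,l}$ is substituted to obtain the constraint \eqref{eq:tH2gameq0} (or, when $\varepsilon=0$, the degeneration $\zeta_{l}=\zeta_{0}$). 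Apart from these points your outline of the generic, $\varepsilon=0$, and singular branches matches the paper's argument.
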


\begin{Remark} We remark that the function $e_{n}$ can be obtained from \eqref{eq:tH2eeq0} as the result of \emph{two} discrete integrations. Indeed def\/ining
 \begin{gather}
 E_{n} = \frac{e_{n+1}-e_{n}}{c_{n+1}-c_{n}}, \label{eq:tH2Dndef}
 \end{gather}
and substituting in \eqref{eq:tH2eeq0} we obtain that $E_{n}$ must solve the equation
 \begin{gather}
 E_{n}-E_{n-1}=\alpha_2 \left(\frac{1}{c_{n+1}-c_{n}}+\frac{1}{c_{n}-c_{n-1}} \right). \label{eq:tH2Dneq}
 \end{gather}
Note that the right-hand side of \eqref{eq:tH2Dneq} is not a total dif\/ference. So the function $e_{n}$ can be obtained by integrating \eqref{eq:tH2Dneq} and subsequently integrating \eqref{eq:tH2Dndef}. This provides the value of $e_{n}$. The obtained value can be plugged in \eqref{eq:tH2feq0} to give $f_{n}$ after discrete integration. This reasoning shows that we can obtain the non-arbitrary functions $e_{n}$ and $f_{n}$ as result of a f\/inite number of discrete integrations. Therefore we can conclude that the solution of the \tHeq{2}\ equation~\eqref{eq:tH2e} in the general case is given in terms of four discrete integrations. If $\varepsilon=0$ then the general solution is given in terms of three discrete integrations and f\/inally in the singular case, when~$q_{n,l}$ solves the discrete wave equation \eqref{eq:tH2qsing}, we need only two discrete integrations.
\end{Remark}

\begin{proof} We start the procedure of solution of the \tHeq{2}\ equation \eqref{eq:tH2e} by looking at the equation itself. We apply the transformation \eqref{eq:h4aut} to the \tHeq{2} equation \eqref{eq:tH2e} and we obtain the following system of two coupled equations
 \begin{subequations} \label{eq:tH2sys}
 \begin{gather}
 (p_{n,l}-p_{n+1,l}) (q_{n,l}-q_{n+1,l})-\alpha_{2} (p_{n,l}+p_{n+1,l}+q_{n,l}+q_{n+1,l})\nonumber \\
 \qquad{} +\frac{\epsilon \alpha_{2}}{2} (2 q_{n,l}+2 \alpha_{3}+\alpha_{2}) (2 q_{n+1,l}+2 \alpha_{3}+\alpha_{2})
 +\frac{\epsilon \alpha_{2}}{2}(2 \alpha_{3}+\alpha_{2})^2+(\alpha_{2}+\alpha_{3})^2\nonumber \\
 \qquad{} -\alpha_{3}^2-2 \epsilon \alpha_{2} \alpha_{3} (\alpha_{2}+\alpha_{3})=0, \label{eq:tH2l} \\
 (q_{n,l}-q_{n+1,l}) (p_{n,l+1}-p_{n+1,l+1})-\alpha_{2} (q_{n,l}+q_{n+1,l}+p_{n,l+1}+p_{n+1,l+1})\nonumber \\
 \qquad{} +\frac{\epsilon \alpha_{2}}{2} (2 q_{n,l}+2 \alpha_{3}+\alpha_{2}) (2 q_{n+1,l}+2 \alpha_{3}+\alpha_{2})+\frac{\epsilon \alpha_{2}}{2} (2 \alpha_{3}+\alpha_{2})^2 +(\alpha_{2}+\alpha_{3})^2-\alpha_{3}^2\nonumber \\
 \qquad{} -2 \epsilon \alpha_{2} \alpha_{3} (\alpha_{2}+\alpha_{3})=0. \label{eq:tH2lp}
 \end{gather}
 \end{subequations}
 We have that equation \eqref{eq:tH2l} depends on $p_{n,l}$ and $p_{n+1,l}$ and that equation \eqref{eq:tH2lp} depends on~$p_{n,l+1}$ and $p_{n+1,l+1}$.
 So we apply the translation operator $T_{l}$ to \eqref{eq:tH2l} to obtain two equations in terms of $p_{n,l+1}$ and $p_{n+1,l+1}$
 \begin{subequations} \label{eq:tH2sys2}
 \begin{gather}
 (p_{n,l+1}-p_{n+1,l+1}) (q_{n,l+1}-q_{n+1,l+1})-\alpha_{2} (p_{n,l+1}+p_{n+1,l+1}+q_{n,l+1}+q_{n+1,l+1})\nonumber \\
 \qquad{} +\frac{\epsilon \alpha_{2}}{2} (2 q_{n,l+1}+2 \alpha_{3}+\alpha_{2}) (2 q_{n+1,l+1}+2 \alpha_{3}+\alpha_{2})+\frac{\epsilon \alpha_{2}}{2}(2 \alpha_{3}+\alpha_{2})^2+(\alpha_{2}+\alpha_{3})^2\nonumber \\
 \qquad{} -\alpha_{3}^2-2 \epsilon \alpha_{2} \alpha_{3} (\alpha_{2}+\alpha_{3})=0, \label{eq:tH2l2}\\
 (q_{n,l}-q_{n+1,l}) (p_{n,l+1}-p_{n+1,l+1})-\alpha_{2} (q_{n,l}+q_{n+1,l}+p_{n,l+1}+p_{n+1,l+1})\nonumber \\
\qquad{} +\frac{\epsilon \alpha_{2}}{2} (2 q_{n,l}+2 \alpha_{3}+\alpha_{2}) (2 q_{n+1,l}+2 \alpha_{3}+\alpha_{2})+\frac{\epsilon \alpha_{2}}{2} (2 \alpha_{3}+\alpha_{2})^2 +(\alpha_{2}+\alpha_{3})^2-\alpha_{3}^2\nonumber \\
\qquad{} -2 \epsilon \alpha_{2} \alpha_{3} (\alpha_{2}+\alpha_{3})=0. \label{eq:tH2lp2}
 \end{gather}
 \end{subequations}
 The system \eqref{eq:tH2sys2} is equivalent to the original system \eqref{eq:tH2sys}. We can solve \eqref{eq:tH2sys2} with respect to $p_{n,l+1}$ and $p_{n+1,l+1}$
 \begin{subequations} \label{eq:tH2pnlppnplp}
 \begin{gather}
 p_{{n,l+1}}=\frac { \left\{
 \begin{gathered}
( q_{{n,l+1}}-{ \alpha_3}) q_{{n+1,l}} -( q_{{n,l}}-{ \alpha_3}) q_{{n+1,l+1}} \\
 {} - ( { \alpha_2}+{ \alpha_3} )( q_{{n,l+1}}-q_{{n,l}}) -\varepsilon\alpha_3^{2} ( q_{{n,l+1}}-q_{{n,l}} ) \\
 {}+\varepsilon \big[ {{ \alpha_3}}^{2} + 2\alpha_3 ( q_{{n,l}}+ { \alpha_2}) -( q_{{n,l+1}}-q_{{n,l}}) q_{{n+1,l}}\big]q_{{n+1,l+1}} \\
 {}+\varepsilon ( { \alpha_2}+q_{{n,l+1}} ) ( { \alpha_2}+q_{{n,l}} ) q_{{n+1,l+1}} \\
 -\varepsilon q_{{n+1,l}} \big[ {{ \alpha_3}}^{2}-2 ( q_{{n,l+1}}+ { \alpha_2} ){\alpha_3} + ( { \alpha_2}+q_{{n,l+1}})( { \alpha_2}+q_{{n,l}} ) \big]
 \end{gathered}\right\}
 }{ q_{{n+1,l+1}}-q_{{n,l+1}}+q_{{n,l}}-q_{{n+1,l}}}, \label{eq:tH2pnlp}
 \\
 p_{{n+1,l+1}}={\frac {\left\{
 \begin{gathered}
( q_{{n+1,l}}-{ \alpha_3} ) q_{{n,l+1}} + ( { \alpha_3}-q_{{n+1,l+1}} ) q_{{n,l}} \\
 {} - ( { \alpha_2}+{ \alpha_3} ) ( q_{{n+1,l}}-q_{{n+1,l+1}} ) +\varepsilon\alpha_3^{2} (q_{{n+1,l+1}}-q_{{n+1,l}}) \\
 {}+\varepsilon q_{{n,l+1}} \big[ ( q_{{n+1,l+1}}-q_{{n+1,l}} ) q_{{n,l}} -\alpha_3^{2} - 2 {\alpha_3}(q_{{n+1,l}}+ {\alpha_2} ) \big] \\
 {}-\varepsilon q_{{n,l+1}} ( { \alpha_2}+q_{{n+1,l+1}} )( {\alpha_2}+q_{{n+1,l}} ) \\
 {}+\varepsilon q_{{n,l}}\big[ \alpha_3^{2}
 + 2( { \alpha_2}+ q_{{n+1,l+1}} ) { \alpha_3} + ( {\alpha_2}+q_{{n+1,l+1}}) ( { \alpha_2}+q_{{n+1,l}} ) \big]
 \end{gathered}\right\}
 }{ q_{{n+1,l+1}}-q_{{n,l+1}}+q_{{n,l}}-q_{{n+1,l}}}}. \label{eq:tH2pnplp}
 \end{gather}
 \end{subequations}
 We see that the right-hand sides of \eqref{eq:tH2pnlppnplp} are functions only of $q_{n,l}$, $q_{n+1,l}$, $q_{n,l+1}$ and $q_{n+1,l+1}$ and well def\/ined unless $q_{n,l}$ solves the discrete wave equation \eqref{eq:tH2qsing}, which is therefore a \emph{singular case}. Therefore at this point the procedure of solution bifurcates into two cases. We treat them separately.

{\bf Singular case: $\boldsymbol{q_{n,l}}$ solves \eqref{eq:tH2qsing}.} Let us assume that the f\/ield $q_{n,l}$ satisf\/ies the discrete wave equation in the form \eqref{eq:tH2qsing}. Then as discussed in the introduction the discrete wave equation is a simplest example of Darboux integrable equation and its solution is given by the discrete d'Alembert formula
 \begin{gather}
 q_{n,l} = a_{n}+\zeta_{l}, \label{eq:tH2qsingsol}
 \end{gather}
 where both $a_{n}$ and $\zeta_{l}$ are arbitrary functions of their argument. Substituting \eqref{eq:tH2qsingsol} into \eqref{eq:tH2sys2} we obtain the compatibility condition
 \begin{gather}
 \zeta_{l+1} - \zeta_{l} =0, \label{eq:tH2singcc}
 \end{gather}
 i.e., $\zeta_{l}=\zeta_{0}=\text{const}$ and the system \eqref{eq:tH2sys} is now consistent. This yield us the f\/irst part of the solution of this case \eqref{eq:tH2qnlsingdef}. We are therefore left with one equation for $p_{n,l}$, e.g.,~\eqref{eq:tH2l}. Inserting \eqref{eq:tH2qsingsol} with $\zeta_{l}=\zeta_{0}$ in \eqref{eq:tH2l} and solving with respect to $p_{n+1,l}$ we obtain
 \begin{gather*}
 p_{n+1, l} = {\frac {a_{{n+1}}-a_{{n}} +{ \alpha_2}}{a_{{n+1}}-a_{{n}}-{ \alpha_2}}} p_{n, l} +{\frac {{ \alpha_2}
 ( { \alpha_2}-a_{{n}}+2 { \alpha_3}-2 \zeta_{0}-a_{{n+1}} ) }{{ \alpha_2}+a_{{n}}-a_{{n+1}}}} \\
 \hphantom{ p_{n+1, l} =}{} +{\frac {{ \alpha_2} \epsilon \big[ \alpha_2^{2}+ ( 2 \zeta_{0}+a_{{n
+1}}+2 { \alpha_3}+a_{{n}} ) { \alpha_2}+2 ( a_{{n+1}}+\alpha_
{{0}}+{ \alpha_3} ) ( { \alpha_3}+a_{{n}}+\zeta_{0})\big] }{{ \alpha_2}+a_{{n}}-a_{{n+1}}}} .
 \end{gather*}
 We can introduce a new function $b_{n}$ through discrete integration
 \begin{gather*}
 {\frac {a_{{n+1}}-a_{{n}} +{ \alpha_2}}{a_{{n+1}}-a_{{n}}-{ \alpha_2}}} = \frac{b_{n+1}}{b_{n}}, 
 \end{gather*}
 which is just formula \eqref{eq:tH2aeq0}. Then we have that $p_{n,l}$ must solve the equation
 \begin{gather}
 \frac{p_{n+1, l}}{b_{n+1}} = \frac{p_{n, l}}{b_{n}}
 +{\frac { ( a_{{n}}+\zeta_{0}-{ \alpha_3} ) b_{{n+1}}
 +b_{{n}} ( { \alpha_2}+{ \alpha_3}-\zeta_{0}-a_{{n}} ) }{ b_{{n}}b_{{n+1}}}}\nonumber \\
\hphantom{\frac{p_{n+1, l}}{b_{n+1}} =}{} -\epsilon
{\frac { \big[ ( { \alpha_2}+\zeta_{0}+a_{{n}}+{ \alpha_3}
) ^{2}b_{{n+1}}-b_{{n}} ( { \alpha_3}+a_{{n}}+\zeta_{0}
) ^{2} \big] }{b_{{n}}b_{{n+1}}}}. \label{eq:tH2psingeq2}
 \end{gather}
 The solution of equation \eqref{eq:tH2psingeq2} is given by
 \begin{gather*}
 p_{n,l} = b_{n} ( \beta_{l}+c_{n}), 
 \end{gather*}
 where $c_{n}$ is given by the discrete integration
 \begin{gather*}
 c_{n+1} - c_{n} = {\frac {( a_{{n}}+\zeta_{0}-{ \alpha_3}) b_{{n+1}}
 +b_{{n}} ( { \alpha_2}+{ \alpha_3}-\zeta_{0}-a_{{n}} ) }{ b_{{n}}b_{{n+1}}}}\nonumber \\
\hphantom{c_{n+1} - c_{n} =}{} -\epsilon
{\frac { \big[ ( { \alpha_2}+\zeta_{0}+a_{{n}}+{ \alpha_3}
) ^{2}b_{{n+1}}-b_{{n}} ( { \alpha_3}+a_{{n}}+\zeta_{0}) ^{2} \big] }{b_{{n}}b_{{n+1}}}},
 \end{gather*}
 i.e., through formula \eqref{eq:tH2ceq0}. This yields the solution of the \tHeq{2} equation \eqref{eq:tH2e} when $q_{n,l}$ satisfy the discrete wave equation \eqref{eq:tH2qsing}.

{\bf General case: $\boldsymbol{q_{n,l}}$ do not solve \eqref{eq:tH2qsing}.} When $q_{n,l}$ is not a solution of the discrete wave equation
 \eqref{eq:tH2qsing} the equations~\eqref{eq:tH2pnlppnplp} are well def\/ined. Moreover we have that \eqref{eq:tH2pnlp} and \eqref{eq:tH2pnplp} must be compatible. To impose the compatibility condition we apply $T_{l}^{-1}$ to \eqref{eq:tH2pnplp} and we impose to the obtained expression to be equal to \eqref{eq:tH2pnlp}. We f\/ind that $q_{n,l}$ must solve the following equation
 \begin{gather}
 { \alpha_2}
( q_{{n-1,l+1}}-q_{{n-1,l}}-q_{{n+1,l+1}}+q_{{n+1,l}}) + ( q_{{n,l}}-q_{{n+1,l}} ) q_{{n-1,l+1}}\nonumber\\
\qquad{} - ( q_{{n,l}}-q_{{n-1,l}} ) q_{{n+1,l+1}} +q_{{n,l+1}}( q_{{n+1,l}}-q_{{n-1,l}} )\nonumber\\
\qquad{} +\varepsilon\alpha_2^{2} (q_{{n+1,l+1}} -q_{{n+1,l}}+q_{{n-1,l}}-q_{{n-1,l+1}}) \nonumber\\
 \qquad{} +\varepsilon\alpha_2( q_{{n+1,l+1}}-q_{{n+1,l}}+q_{{n-1,l}}-q_{{n-1,l+1}} ) ( q_{{n,l+1}}+2{ \alpha_3}+q_{{n,l}} )\nonumber \\
 \qquad{} +\varepsilon ( q_{{n+1,l}}-q_{{n-1,l}} ) q_{{n+1,l+1}} q_{{n-1,l+1}} +\varepsilon ( q_{{n,l+1}}-q_{{n,l}}+q_{{n+1,l}} ) q_{{n-1,l}}q_{{n-1,l+1}} \nonumber\\
 \qquad{} +\varepsilon \big[2 { \alpha_3} q_{{n+1,l}}-( q_{{n,l+1}}+2 \alpha_3) q_{{n,l}} \big] q_{{n-1,l+1}} \nonumber\\
 \qquad{} + \varepsilon\big[ ( q_{{n,l+1}}+2\alpha_3 ) q_{{n,l}}- ( 2 { \alpha_3}+q_{{n+1,l}} ) q_{{n-1,l}}
 -( q_{{n,l+1}}-q_{{n,l}}) q_{{n+1,l}} \big] q_{{n+1,l+1}} \nonumber\\
 \qquad{} -\varepsilon ( 2 \alpha_3+ q_{{n,l}}) ( q_{{n+1,l}}-q_{{n-1,l}} ) q_{{n,l+1}}=0. \label{eq:tH2qnleq}
 \end{gather}
 This partial dif\/ference equation for $q_{n,l}$ is not def\/ined on the square quad graph of Fig.~\ref{fig:geomquad}, but it is def\/ined on the six-point lattice shown in Fig.~\ref{fig:6pointslattice}.

 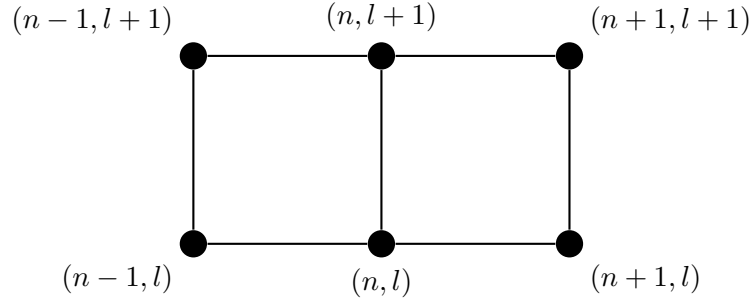
\begin{figure}[hbt]
 \centering
 \begin{tikzpicture}
 \node (nml) at (0,0) [circle,fill,label=-135:{$(n-1,l)$}] {};
 \node (nmlp) at (0,2.5) [circle,fill,label=135:{$(n-1,l+1)$}] {};
 \node (nl) at (2.5,0) [circle,fill,label=-90:{$(n,l)$}] {};
 \node (nlp) at (2.5,2.5) [circle,fill,label=90:{$(n,l+1)$}] {};
 \node (nplp) at (5,2.5) [circle,fill,label=45:{$(n+1,l+1)$}] {};
 \node (npl) at (5,0) [circle,fill,label=-45:{$(n+1,l)$}] {};
 \draw[thick] (nlp)--(nmlp)--(nml)--(nl)--(npl)--(nplp)--(nlp)--(nl);
 \end{tikzpicture}
 \caption{The six-point lattice.} \label{fig:6pointslattice}
 \end{figure}

In the general case we have proved that the \tHeq{2} equation \eqref{eq:tH2e} is equivalent to the system~\eqref{eq:tH2sys} which in turn is equivalent to the solution of equations \eqref{eq:tH2pnlp} and~\eqref{eq:tH2qnleq}. However~\eqref{eq:tH2pnlp} merely def\/ines $p_{n,l+1}$ in terms of $q_{n,l}$ and its shifts. Therefore if we f\/ind the general solution of equation \eqref{eq:tH2qnleq} the value of $p_{n,l}$ will follow. Applying $T_{l}^{-1}$ to equation \eqref{eq:tH2l2} we obtain then the value of $p_{n,l}$ as displayed in \eqref{eq:tH2pnlsoldefgen}. To f\/ind the solution for $q_{n,l}$ solution we turn to the f\/irst integrals. Like in the case of the $H^{6}$ equations \eqref{eq:h6} we will f\/ind an expression for $q_{n,l}$ using the f\/irst integrals, and then we will insert it into \eqref{eq:tH2qnleq} to reduce the number of arbitrary functions to the right one. From \cite{GSY_DarbouxI} we know that \tHeq{2} equation \eqref{eq:tH2e} possesses the following four-point, third-order integral in the $n$ direction
 \begin{gather}
 W_{1}=\Fp{m}{\frac {( u_{{n+1,m}}-u_{{n-1,m}})( u_{{n+2,m}}-u_{{n,m}}) }{ {\epsilon}^{2}\alpha_{2}^{4}+4\epsilon\alpha_{2}^{3}+
 \big[ (8{ \alpha_{3}} -2u_{{n,m}}-2u_{{n+1,m}} ) \epsilon-1 \big]
 \alpha_{2}^{2}+( u_{{n,m}}-u_{{n+1,m}} ) ^{2}}}\nonumber \\
\hphantom{W_{1}=}{} -\Fm{m}{\frac {( -u_{{n+1,m}}+u_{{n-1,m}} )
( u_{{n,m}}-u_{{n+2,m}} ) }{ ( -u_{{n-1,m}}+u_{{n,m}}+{ \alpha_{2}}) ( u_{{n+1,m}}+{ \alpha_{2}}-u_{{n+2,m}} )}}.
 \label{eq:W1tH2e}
 \end{gather}
 We consider the equation $W_{1}=\xi_{n}$, where $W_{1}$ is given by \eqref{eq:W1tH2e}, with $m=2l+1$
 \begin{gather*}
 {\frac { ( u_{{n-1,2l+1}}-u_{{n+1,2l+1}} )
( u_{{n+2,2l+1}}-u_{{n,2l+1}} ) }{ (u_{{n,2l+1}} -u_{{n-1,2l+1}}+{ \alpha_{2}} )
 ( u_{{n+1,2l+1}}-u_{{n+2,2l+1}}+\alpha_{2} )}}=\xi_{n}.
 \end{gather*}
 Using the substitutions \eqref{eq:h4aut} we have
 \begin{gather}
 {\frac { \left( q_{{n-1,l}}-q_{{n+1,l}} \right)
 \left( q_{{n+2,l}}-q_{{n,l}} \right) }{ \left( q_{{n,l}}-q_{{n-1,l}}+{ \alpha_{2}} \right)
 \left( q_{{n+1,l}}-q_{{n+2,l}}+\alpha_{2} \right)}}=\xi_{n}.
 \label{eq:tH2kp2}
 \end{gather}
 This equation contains only $q_{n,l}$ and its shifts. From equation \eqref{eq:tH2kp2} it is very simple to obtain a discrete Riccati equation. Indeed the transformation
 \begin{gather}
 Q_{n,l} = \frac{q_{n, l}-q_{n-1, l}+\alpha_{2}}{ q_{n+1, l}-q_{n-1, l}} \label{eq:tH2Qdef}
 \end{gather}
 brings \eqref{eq:tH2kp2} into:
 \begin{gather}
 Q_{n+1, l} +\frac{1}{\xi_{n}Q_{n, l}}=1
 \label{eq:tH2kp3}
 \end{gather}
which is a discrete Riccati equation. Let us assume $a_{n}$ to be a particular solution of \eqref{eq:tH2kp3}, then we express $\xi_{n}$ as
 \begin{gather}
 \xi_{n} = \frac{1}{a_{n}( 1-a_{n+1})}. \label{eq:tH2lam}
 \end{gather}
 Using the standard linearization of the discrete Riccati equation
 \begin{gather}
 Q_{n,l} = a_{n} + \frac{1}{Z_{n,l}} \label{eq:tH2Zdef}
 \end{gather}
 from \eqref{eq:tH2lam} we obtain the following equation for $Z_{n,l}$
 \begin{gather*}
 Z_{n+1,l} = \frac{a_{n}Z_{n, l}+1}{1-a_{n+1}}. 
 \end{gather*}
 Introducing
 \begin{gather}
 a_{n} = \frac{b_{n-1}}{b_{n}+b_{n-1}} \label{eq:tH2adef}
 \end{gather}
 we obtain
 \begin{gather}
 Z_{n+1,l}-\frac{b_{n-1} b_{n+1}+b_{n-1} b_{n}}{b_{n} b_{n+1}+b_{n-1} b_{n+1}} Z_{n,l}
 =\frac{b_{n-1} b_{n+1}+b_{n-1} b_{n}+b_{n} b_{n+1}+b_{n}^2}{b_{n} b_{n+1}+b_{n-1} b_{n+1}}. \label{eq:tH2kp5}
 \end{gather}
 If we assume that \eqref{eq:tH2kp5} can be written as a total dif\/ference, i.e.,
 \begin{gather*}
( T_{n}-\Id ) ( d_{n}Z_{n,l} - c_{n} ) =0, 
 \end{gather*}
 we obtain
 \begin{gather}
 b_{n}=c_{n+1}-c_{n}, \qquad d_{n} = \frac{(c_{n+1}-c_{n})(c_{n}-c_{n-1})}{c_{n+1}-c_{n-1}}. \label{eq:tH2bddef}
 \end{gather}
 So $b_{n}$ must be a total dif\/ference and therefore we can represent $Z_{n,l}$ as
 \begin{gather*}
 Z_{n,l} = \frac{(c_{n+1}-c_{n-1})(c_{n}+\zeta_{l})}{ (c_{n+1}-c_{n})(c_{n}-c_{n-1})}. 
 \end{gather*}
 From \eqref{eq:tH2Zdef} and \eqref{eq:tH2adef} we obtain the form of $Q_{n,l}$
 \begin{gather}
 Q_{n,l}=\frac{ (c_{n}-c_{n-1}) (c_{n+1}+\zeta_{l})}{ (c_{n}+\zeta_{l}) (c_{n+1}-c_{n-1})}. \label{eq:tH2Qsol}
 \end{gather}
 Introducing the value of $Q_{n,l}$ from \eqref{eq:tH2Qsol} into \eqref{eq:tH2Qdef} we obtain the following equation for $q_{n,l}$
 \begin{gather*}
 \frac{q_{n+1, l}-q_{n-1, l}}{q_{n, l}-q_{n-1, l}+\alpha_{2}} =
 \frac{(c_{n}+\zeta_{l}) (c_{n+1}-c_{n-1})}{(c_{n}-c_{n-1}) (c_{n+1}+\zeta_{l})}. 
 \end{gather*}
 Performing the transformation
 \begin{gather}
 R_{n,l} = ( c_{n}+\zeta_{l} )q_{n,l} \label{eq:tH2rdef}
 \end{gather}
 we obtain the following second-order ordinary dif\/ference equation for the f\/ield $R_{n,l}$
 \begin{gather}
 R_{n+1,l} -\frac{c_{n+1}-c_{n-1}}{c_{n}-c_{n-1}} R_{n,l}+\frac{c_{n+1}-c_{n}}{c_{n}-c_{n-1}}R_{n-1,l} -
 \alpha_2 (c_{n}+\zeta_{l}) \frac{c_{n+1}-c_{n-1}}{c_{n}-c_{n-1}} =0. \label{eq:tH2req}
 \end{gather}
 Then we can represent the solutions of the equation \eqref{eq:tH2req} as
 \begin{gather}
 R_{n,l} = P_{n,l} + \zeta_{l}e_{n} + f_{n}, \label{eq:tH2Pdef}
 \end{gather}
 where $e_{n}$ and $f_{n}$ are particular solutions of
 \begin{subequations} \label{eq:tH2deeq}
 \begin{gather}
 e_{n+1} -\frac{c_{n+1}-c_{n-1}}{c_{n}-c_{n-1}} e_{n} +\frac{c_{n+1}-c_{n}}{c_{n}-c_{n-1}}e_{n-1} - \alpha_2 \frac{c_{n+1}-c_{n-1}}{c_{n}-c_{n-1}} =0, \label{eq:tH2deq} \\
 f_{n+1} -\frac{c_{n+1}-c_{n-1}}{c_{n}-c_{n-1}} f_{n} +\frac{c_{n+1}-c_{n}}{c_{n}-c_{n-1}}f_{n-1} - \alpha_2 c_{n}\frac{c_{n+1}-c_{n-1}}{c_{n}-c_{n-1}}=0.
 \label{eq:tH2eeq}
 \end{gather}
 \end{subequations}
 $P_{n,l}$ will be then solve the following equation
 \begin{gather}
 P_{n+1,l} -\frac{c_{n+1}-c_{n-1}}{c_{n}-c_{n-1}} P_{n,l} +\frac{c_{n+1}-c_{n}}{c_{n}-c_{n-1}}P_{n-1,l} =0. \label{eq:tH2Peq}
 \end{gather}

 The equations \eqref{eq:tH2deq} and \eqref{eq:tH2eeq} are not independent. Indeed def\/ining
 \begin{gather}
 A_{n} = \frac{e_{n}c_{n-1}-e_{n-1}c_{n}-f_{n}+f_{n-1}}{c_{n}-c_{n-1}}, \label{eq:tH2Adef}
 \end{gather}
 and using \eqref{eq:tH2deeq} it is possible to show that the function $A_{n}$ lies in the kernel of the operator $T_{n}-\Id$. This implies that $A_{n}=A_{0}=\text{const}$. We can without loss of generality assume the constant $A_{0}$ to be zero, since if we perform the transformation
 \begin{gather}
 e_{n} = \tilde{e}_{n}-A_{0}, \label{eq:tH2dtdef}
 \end{gather}
 the equation \eqref{eq:tH2Adef} is mapped into
 \begin{gather}
 \frac{\tilde{e}_{n}c_{n-1}-\tilde{e}_{n-1}c_{n}-f_{n}+f_{n-1}}{c_{n}-c_{n-1}}=0. \label{eq:tH2Adef2}
 \end{gather}
 Furthermore since \eqref{eq:tH2deq} is invariant under the transformation \eqref{eq:tH2dtdef} we can safely drop the tilde in \eqref{eq:tH2Adef2} and assume that the functions $e_{n}$ and $f_{n}$ are solutions of the equations{\samepage
 \begin{subequations} \label{eq:tH2deeq2}
 \begin{gather}
 e_{n+1} -\frac{c_{n+1}-c_{n-1}}{c_{n}-c_{n-1}} e_{n} +\frac{c_{n+1}-c_{n}}{c_{n}-c_{n-1}}e_{n-1} - \alpha_2 \frac{c_{n+1}-c_{n-1}}{c_{n}-c_{n-1}} =0, \label{eq:tH2deq2} \\
 f_{n}-f_{n-1}=e_{n}c_{n-1}-c_{n} e_{n-1}. \label{eq:tH2eeq2}
 \end{gather}
 \end{subequations}
 The system \eqref{eq:tH2deeq2} is just gives the constraints expressed in formulas \eqref{eq:tH2eeq0} and \eqref{eq:tH2feq0}.}

 Now we turn to the solution of the homogeneous equation \eqref{eq:tH2Peq}. We can reduce \eqref{eq:tH2Peq} to a total dif\/ference using the potential substitution $T_{n,l}=P_{n,l}-P_{n-1,l}$
 \begin{gather*}
 \frac{T_{n+1,l}}{c_{n+1}-c_{n}} -\frac{T_{n,l}}{c_{n}-c_{n-1}} =0. 
 \end{gather*}
 This clearly implies
 \begin{gather*}
 \frac{P_{n,l}-P_{n-1,l}}{c_{n}-c_{n-1}} = \beta_{l},
 \end{gather*}
 where $\beta_{l}$ is an arbitrary function. The solution to this equation is given by\footnote{The arbitrary functions are taken in a convenient way.}
 \begin{gather}
 P_{n,l} = ( c_{n}+\zeta_{l} )\beta_{l}+\gamma_{l}, \label{eq:tH2Psol}
 \end{gather}
 where $\gamma_{l}$ is an arbitrary function. Using (\ref{eq:tH2rdef}), (\ref{eq:tH2Pdef}), (\ref{eq:tH2Psol}) we obtain then the following expression for $q_{n,l}$
 \begin{gather}
 q_{n,l} = \beta_{l} + \frac{\gamma_{l}+\zeta_{l}e_{n} + f_{n}}{c_{n}+\zeta_{l}}, \label{eq:tH2qsol}
 \end{gather}
 where $e_{n}$ and $f_{n}$ are solutions of \eqref{eq:tH2deeq2}.

 Equation \eqref{eq:tH2qsol} is formally the solution presented in \eqref{eq:tH2qnlsoldefgen}, but depends on three arbitrary functions in the $l$ direction, namely $\zeta_{l}$, $\beta_{l}$ and $\gamma_{l}$. This means that there is a constraint between these functions, which can be recovered by plugging \eqref{eq:tH2qsol} into \eqref{eq:tH2qnleq}. At this point we have a second bifurcation, depending on the value of $\varepsilon$.

{\bf Case $\boldsymbol{\varepsilon\neq0}$.} If $\varepsilon\neq0$ inserting \eqref{eq:tH2qsol} into \eqref{eq:tH2qnleq} and factorizing out the $n$ dependent part away we are left with
 \begin{gather*}
 \gamma_{l+1}-\gamma_{l} = - (\zeta_{l+1}-\zeta_{l}) \left(\alpha_2+\beta_{l+1}+\beta_{l}+2\alpha_3-\frac{1}{\epsilon}\right).
 \end{gather*}
 This equation tells us that the function $\gamma_{l}$ can be expressed after a discrete integration in terms of the two arbitrary functions $\zeta_{l}$ and $\beta_{l}$. This is just the f\/inal constraint expressed in formula~\eqref{eq:tH2gameq0}. This yields the general solution of the \tHeq{2}\ equation~\eqref{eq:tH2e}.

{\bf Case $\boldsymbol{\varepsilon=0}$.} If $\varepsilon=0$ inserting \eqref{eq:tH2qsol} into \eqref{eq:tH2qnleq} we obtain the compatibility condition
$\zeta_{l+1}-\zeta_{l}=0$, i.e., $\zeta_{l}=\zeta_{0}=\text{const}$. It is easy to check that the obtained value of $q_{n,l}$ through formula \eqref{eq:tH2qsol} is consistent with the substitution of $\varepsilon=0$ in~\eqref{eq:tH2sys}. This means that in the case $\varepsilon=0$ the value of $q_{n,l}$ is given by
 \begin{gather}
 q_{n,l} = \beta_{l} + \frac{\gamma_{l}+\zeta_{0}e_{n} + f_{n}}{c_{n}+\zeta_{0}}, \label{eq:tH2qsoleps0}
 \end{gather}
 where the functions $e_{n}$ and $f_{n}$ are def\/ined implicitly and can be found by discrete integration from \eqref{eq:tH2deeq2}, i.e., from formula~\eqref{eq:tH2qnlsoldef0}. Since formula~\eqref{eq:tH2pnlsoldefgen} is not singular with respect to $\varepsilon$ the value of $p_{n,l}$ can be recovered just by substituting $\varepsilon=0$ and the form of $q_{n,l}$ found in~\eqref{eq:tH2qsoleps0}. This yields equation~\eqref{eq:tH2pnlsoldef0}.
 This concludes the procedure of solution of the \tHeq{2} equation \eqref{eq:tH2e} in the case when $\varepsilon=0$.
\end{proof}

\begin{Proposition} \label{prop:tH3e} The \tHeq{3}\ equation \eqref{eq:tH3e} is exactly solvable.
 If $\delta\neq0$ and the field $q_{n,l}$ do not satisfy
 the equation
 \begin{gather}
 q_{{n+1,l+1}}q_{{n,l}}=q_{{n+1,l}}q_{{n,l+1}} \label{eq:tH3qsing}
 \end{gather}
 then the solution of the \tHeq{3}\ equation \eqref{eq:tH3e}
 is given by
 \begin{subequations} \label{eq:tH3soldefgen}
 \begin{gather}
 q_{n,l} = \gamma_{l}e_{n}\frac{f_{n}+\beta_{l}}{c_{n}+\zeta_{l}}, \label{eq:tH3qnlsoldefgen} \\
 p_{{n,l}} ={\frac {
 \left[ \begin{gathered}
 { \alpha_2} ( q_{{n+1,l}}-q_{{n+1,l-1}})
 \big( {\epsilon}^{2}q_{{n,l-1}}q_{{n,l}}+{\delta}^{2}{{\alpha_3}}^{2} \big) \\
 {}+{\delta}^{2} \alpha_2^{2}\alpha_3^{2} ( q_{{n,l-1}}-q_{{n,l}} ) +{\epsilon}^{2}q_{{n+1,l}}q_{{n+1,l-1}} ( q_{{n,l-1}}-q_{{n,l}} )
 \end{gathered}\right]}{ ( q_{{n+1,l}}q_{{n,l-1}}-q_{{n+1,l-1}}q_{{n,l}} ) {\alpha_3} { \alpha_2}}}, \label{eq:tH3pnlsoldefgen}
 \end{gather}
 \end{subequations}
 where $e_{n}$, $\beta_{l}$ and $\gamma_{l}$ are arbitrary functions of their arguments and~$c_{n}$ is a solution of the equation
 \begin{gather}
 \frac {c_{{n+1}}-c_{{n}}}{c_{{n}}-c_{{n-1}}} = {\frac {e_{{n+1}}-{\alpha_2}e_{{n}}}{{\alpha_2}e_{{n}}-e_{{n-1}}}}, \label{eq:tH3eeq0}
 \end{gather}
 while $f_{n}$ and $\zeta_{l}$ are given by the discrete integrations
 \begin{subequations} \label{eq:tH3discrint0}
 \begin{gather}
 f_{n}-f_{n-1}=\frac{c_{n-1}-c_{n}}{e_{n}e_{n-1}}, \label{eq:tH3feq0} \\
 \zeta_{l+1} - \zeta_{l}= \frac{\epsilon^2}{\alpha_2 \delta^2 \alpha_3^2} \gamma_{l+1}\gamma_{l} (\beta_{l+1}-\beta_{l}). \label{eq:tH3gameq0}
 \end{gather}
 \end{subequations}
 If $\varepsilon=0$, but the field $q_{n,l}$ do not satisfy the equation \eqref{eq:tH3singcc} then the solution of the \tHeq{3}\ equation~\eqref{eq:tH3e} is given by
 \begin{subequations} \label{eq:tH3soldef0}
 \begin{gather}
 q_{n,l} = \gamma_{l}e_{n}\frac{f_{n}+\beta_{0}}{c_{n}+\zeta_{l}}, \label{eq:tH3qnlsoldef0} \\
 p_{{n,l}} =\frac{\varepsilon^{2}}{\alpha_{2}\alpha_{3}}\frac { { \alpha_2} ( q_{{n+1,l}}-q_{{n+1,l-1}})q_{{n,l-1}}q_{{n,l}}
 +q_{{n+1,l}}q_{{n+1,l-1}} ( q_{{n,l-1}}-q_{{n,l}} )}{ q_{{n+1,l}}q_{{n,l-1}}-q_{{n+1,l-1}}q_{{n,l}}}, \label{eq:tH3pnlsoldef0}
 \end{gather}
 \end{subequations}
 where $e_{n}$, $\zeta_{l}$ and $\gamma_{l}$ are arbitrary functions of their arguments $\beta_{0}$ is a constant and $c_{n}$ is a~solution of~\eqref{eq:tH3eeq0} and $f_{n}$ is a solution of~\eqref{eq:tH3feq0}. If the field $q_{n,l}$ satisfies the equation \eqref{eq:tH3qsing} regardless of the value of the parameter $\varepsilon$ the solution of the \tHeq{3}\ equation \eqref{eq:tH3e} is given by
 \begin{subequations} \label{eq:tH3singdef}
 \begin{gather}
 q_{n,l} = \zeta_{0} a_{n}, \label{eq:tH3qnlsingdef} \\
 p_{n,l} = b_{n}( \beta_{l}+c_{n} ), \label{eq:tH3pnlsingdef}
 \end{gather}
 \end{subequations}
 where $b_{n}$ and $\beta_{l}$ are arbitrary functions of their arguments, $\zeta_{0}$ is a constant and $a_{n}$ and $c_{n}$ are given by the discrete integration
 \begin{subequations}\label{eq:tH3discrintsing}
 \begin{gather}
 \frac{\alpha_2 a_{n+1}-a_{n}}{a_{n+1}-\alpha_2 a_{n}} = \frac{b_{n+1}}{b_{n}}, \label{eq:tH3aeq0} \\
 c_{n+1} - c_{n} = {\frac {{\delta}^{2}\alpha_3^{2}\alpha_2^{2}b_{{n}}-b_{{n+1}}
 \big( {\delta}^{2}\alpha_3^{2}+{\epsilon}^{2}a_{{n}}^{2}\zeta_{0}^{2} \big) { \alpha_2}+{\epsilon}^{2}a_{{n}}^{2}\zeta_{0}^{2}b_{{n}}}{%
 b_{{n}}a_{{n}}\zeta_{0}{ \alpha_2} { \alpha_3} b_{{n+1}}}}. \label{eq:tH3ceq0}
 \end{gather}
 \end{subequations}
\end{Proposition}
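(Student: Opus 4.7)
The plan is to mirror the strategy developed for Proposition~\ref{prop:tH2e}, adapted to the multiplicative structure of the \tHeq{3}\ equation. First I would apply the simplified transformation \eqref{eq:h4aut} to \eqref{eq:tH3e} to obtain a system of two coupled equations analogous to~\eqref{eq:tH2sys}, one coming from $m=2l$ and one from $m=2l+1$. Shifting the first equation in $l$ and solving the resulting pair linearly for $p_{n,l+1}$ and $p_{n+1,l+1}$ expresses both as rational functions of $q_{n,l}$, $q_{n+1,l}$, $q_{n,l+1}$, $q_{n+1,l+1}$. Inspection of the common denominator will identify the singular locus, which by analogy with~\eqref{eq:tH2qsing} should be precisely the multiplicative wave equation~\eqref{eq:tH3qsing}, so the analysis bifurcates here.

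In the singular case I would use the d'Alembert-type factorization $q_{n,l}=\zeta_{0}a_{n}\alpha_{l}$ (which, upon imposing consistency with the original system, collapses~$\alpha_{l}$ to a constant that can be absorbed into~$\zeta_{0}$), obtaining~\eqref{eq:tH3qnlsingdef}. Substituting back leaves a single first-order linear equation for $p_{n,l}$ in $n$ with parameter~$l$; introducing the integrating-factor function $b_{n}$ by the discrete integration~\eqref{eq:tH3aeq0} converts it into a total difference, whose solution is~\eqref{eq:tH3pnlsingdef} with $c_{n}$ given by~\eqref{eq:tH3ceq0}. In the generic case, demanding that the two expressions for $p_{n,l+1}$ and $T_{l}^{-1}p_{n+1,l+1}$ agree produces a six-point partial difference equation for $q_{n,l}$ on the lattice of Fig.~\ref{fig:6pointslattice}, and~\eqref{eq:tH3pnlsoldefgen} then follows by reading off~$p_{n,l}$ from the $l$-shifted equation.

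To solve for $q_{n,l}$ I would invoke the $W_{1}$ first integral of \tHeq{3}\ derived in~\cite{GSY_DarbouxI}, restrict $W_{1}=\xi_{n}$ to $m=2l+1$, and, by analogy with~\eqref{eq:tH2kp2}--\eqref{eq:tH2kp3}, introduce a suitable multiplicative Möbius substitution $Q_{n,l}$ that converts the resulting equation in $q_{n,l}$ into a discrete Riccati equation. Picking a particular solution $a_{n}$, performing the standard linearization, and parametrizing by an auxiliary function $e_{n}$ will reduce the problem to a linear first-order equation that becomes a total difference once $c_{n}$ is defined through~\eqref{eq:tH3eeq0} and $f_{n}$ through~\eqref{eq:tH3feq0}, yielding the ansatz~\eqref{eq:tH3qnlsoldefgen} with three arbitrary functions $\beta_{l}$, $\gamma_{l}$, $\zeta_{l}$ in the $l$ direction.

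Finally I would substitute the ansatz~\eqref{eq:tH3qnlsoldefgen} into the six-point compatibility equation; factoring out the $n$-dependent part should leave a single constraint relating $\beta_{l}$, $\gamma_{l}$ and $\zeta_{l}$, which for $\delta\neq 0$ reduces to the discrete integration~\eqref{eq:tH3gameq0} expressing $\zeta_{l}$ in terms of $\beta_{l}$ and $\gamma_{l}$. The subcase $\varepsilon=0$ is handled separately: here the compatibility condition forces $\beta_{l}=\beta_{0}$ to be constant, producing~\eqref{eq:tH3qnlsoldef0}, and the formula~\eqref{eq:tH3pnlsoldefgen} is non-singular in $\varepsilon$ so~\eqref{eq:tH3pnlsoldef0} follows by direct substitution. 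The main obstacle I expect is the Riccati linearization step: the multiplicative rather than additive structure of \tHeq{3}\ means the Möbius substitution analogous to~\eqref{eq:tH2Qdef} is not immediate, and identifying the right change of variables that both linearizes the first integral and makes the coefficients reducible to a total difference (thus forcing the implicit definitions~\eqref{eq:tH3eeq0}--\eqref{eq:tH3feq0}) will require careful bookkeeping of the $\alpha_{2}$-dependent factors.
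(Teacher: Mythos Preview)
Your plan is essentially the paper's own argument, and each of the main stages you list (the system from~\eqref{eq:h4aut}, the singular/generic bifurcation at~\eqref{eq:tH3qsing}, the six-point compatibility equation for $q_{n,l}$, and the use of the $W_{1}$ integral at $m=2l+1$) matches what the paper does in Appendix~\ref{app:remsols}.

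One correction worth making: you have located the main technical obstacle in the wrong place. The multiplicative analogue of~\eqref{eq:tH2Qdef}, namely $Q_{n,l}=(\alpha_{2}q_{n,l}-q_{n-1,l})/(q_{n+1,l}-q_{n-1,l})$, turns out to be straightforward and yields \emph{exactly} the same Riccati equation~\eqref{eq:tH2kp3} as in the \tHeq{2}\ case, so $Q_{n,l}$ is literally given by~\eqref{eq:tH2Qsol}. The genuine new difficulty appears one step later: after substituting back and setting $P_{n,l}=(c_{n}+\zeta_{l})q_{n,l}$, the resulting second-order linear equation for $P_{n,l}$ is \emph{homogeneous} but carries an extra factor of $\alpha_{2}$ on the middle coefficient (compare with~\eqref{eq:tH2req}), and the potential substitution that reduced~\eqref{eq:tH2Peq} to first order no longer works. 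The paper instead passes to the ratio $R_{n,l}=P_{n,l}/P_{n-1,l}$, obtaining a \emph{second} discrete Riccati equation; taking a particular solution $d_{n}=e_{n}/e_{n-1}$ of this second Riccati is what forces the implicit relation~\eqref{eq:tH3eeq0}, and only after the standard linearization $R_{n,l}=d_{n}+1/S_{n,l}$ does one reach the first-order linear equation whose reduction to a total difference produces~\eqref{eq:tH3feq0}. So the chain is Riccati $\to$ linear second order $\to$ Riccati $\to$ linear first order, one Riccati longer than your sketch (and the \tHeq{2}\ proof) suggests. A minor note: the degenerate subcase in the proof is governed by $\delta=0$, not $\varepsilon=0$; the condition $\beta_{l}=\beta_{0}$ you state is correct for $\delta=0$.
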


\begin{Remark} We remark that the function $c_{n}$ can be obtained from \eqref{eq:tH3eeq0} as the result of \emph{two} discrete integrations. Indeed def\/ining
 \begin{gather}
 z_{n} = c_{n+1}-c_{n}, \label{eq:tH3Dndef}
 \end{gather}
 and substituting in \eqref{eq:tH3eeq0} we obtain that $z_{n}$ must solve the equation
 \begin{gather}
 \frac{z_{n}}{z_{n-1}}= \frac {e_{{n+1}}-{\alpha_2}e_{{n}}}{{\alpha_2}e_{{n}}-e_{{n-1}}}. \label{eq:tH3Dneq}
 \end{gather}
 Note that the right-hand side of \eqref{eq:tH3Dneq} is not a total dif\/ference. So the function $c_{n}$ can be obtained by integrating \eqref{eq:tH3Dneq} and subsequently integrating~\eqref{eq:tH3Dndef}. This provides the value of~$c_{n}$. The obtained value can be plugged in~\eqref{eq:tH3feq0} to give~$f_{n}$ after discrete integration. This reasoning shows that we can obtain the non-arbitrary functions~$c_{n}$ and~$f_{n}$ as result of a~f\/inite number of
 discrete integrations. Therefore we can conclude that the solution of the \tHeq{3}\ equation~\eqref{eq:tH3e} in the general case is given in terms of four discrete integrations. If $\varepsilon=0$ the general solution is given in terms of three discrete integration and f\/inally in the singular case, when $q_{n,l}$ solves the equation \eqref{eq:tH3qsing}, we need only two discrete integrations.
\end{Remark}

\begin{proof} The proof of the solution of the \tHeq{3}\ equation \eqref{eq:tH3e} proceeds as the one outlined in Proposition~\ref{prop:tH2e}. The interested reader can f\/ind the details in Appendix~\ref{app:remsols}.
\end{proof}

\section{Conclusions}\label{sec:concl}

In this paper we presented a detailed procedure to construct the general solutions for all the~\Hvier\ and \Hsechs~equations. As stated in the introduction, these general solutions were obtained in three dif\/ferent ways, but the common feature is that they can be found through some \emph{linear or linearizable
$($discrete Riccati$)$ equations}. This is the great advantage of the f\/irst integral approach over the direct one which was pursued in~\cite{GSL_general}. The Darboux integrability therefore yields extra information that it is useful to get the f\/inal result, i.e., the general solutions. Moreover linearization arises very naturally from f\/irst integrals also in the most complicated cases, whereas in the direct approach can be quite tricky, see, e.g., the examples in~\cite{GSL_general}. The linearization of the f\/irst integrals is another proof of the deep linear nature of the \Hvier\ and~\Hsechs~equations. This result is even stronger than Darboux integrability alone, since \emph{a~priori} the f\/irst integrals do not need to def\/ine linearizable equations.

We also note that our procedure of construction of the general solution, based on the ideas from \cite{GarifullinYamilov2015}, is likely to be the discrete version of the procedure of linearization and solutions for continuous Darboux integrable equations presented in \cite{ZhiberSokolov2011}. The preeminent r\^ole of the discrete Riccati equation in the solutions is reminiscent of the importance of the usual Riccati equation in the continuous case. Recall, e.g., that the f\/irst integrals of the Liouville equation~\cite{Liouville1853}
\begin{gather*}
 u_{xt} = e^{u}, \label{eq:liouville}
\end{gather*}
which is the most famous Darboux integrable system, are Riccati equations. Many other examples of solutions presented in \cite{ZhiberSokolov2011} use the reduction of higher-order dif\/ferential equations to Riccati-like equation in order to obtain the solution, as we have done in the discrete case.

In this paper we constructed the general solutions of the trapezoidal \Hvier~and of the \Hsechs~equations. Therefore we possess an almost complete theory about these equations ranging from the geometrical background to their analytic properties. For a discussion of the open problems in this f\/ield we refer to our previous paper \cite{GSY_DarbouxI}.

\appendix

\section{Procedure to f\/ind the general solution in the remaining cases}\label{app:remsols}

\subsection[$_{2}D_{2}$ equation (\ref{eq:2D2})]{$\boldsymbol{{}_{2}D_{2}}$ equation (\ref{eq:2D2})}

From \cite{GSY_DarbouxI} we know that the f\/irst integrals of the $_{2}D_{2}$ equation \eqref{eq:2D2} can be dif\/ferent depending on the value of the parameter~$\delta_{1}$. For this reason we treat separately the various cases.

{\bf Case $\boldsymbol{\delta_{1}\neq0}$.} If $\delta_{1}\neq0$ the $W_{1}$ f\/irst integrals of the $_{2}D_{2}$ equation \eqref{eq:2D2} is given by~\cite{GSY_DarbouxI}
 \begin{gather*}
 W_{1} ={ \Fppp}{\alpha} {\frac {{ \delta_{2}}+u_{{n+1,m}}}{{ \delta_{2}}+u_{{n-1,m}}}}
 +{ \Fpmm}{\alpha} {\frac { [ 1- ( 1+{ \delta_{2}} ) {
 \delta_{1}} ] u_{{n,m}}+u_{{n+1,m}}}{ [ 1 - ( 1+{ \delta_{2}} ) { \delta_{1}}] u_{{n,m}}+u_{{n-1,m}}}} \\
\hphantom{W_{1} =}{} +{ \Fmpm}{\beta} {\frac { (u_{{n+1,m}} -u_{{n-1,m}} ) ( u_{{n,m}}+{ \delta_{2}}) }{1+ ( u_{{n,m}}-1 ) { \delta_{1}}}}
-{ \Fmmp}{\beta} ( u_{{n+1,m}} -u_{{n-1,m}}).
 \end{gather*}
As stated in the introduction, from the relation $W_{1}=\xi_{n}$ this f\/irst integral def\/ines a three-point, second-order ordinary dif\/ference equation in the $n$ direction which depends parametrically on~$m$. From this parametric dependence we f\/ind two dif\/ferent three-point non-autonomous ordinary dif\/ference equations corresponding to $m$ even and $m$ odd. We treat them separately.

{\bf Case $\boldsymbol{m=2l}$.} If $m=2l$ we have the following non-autonomous nonlinear ordinary dif\/ference equation
 \begin{gather}
{ \Fp{n}}{\alpha}{\frac {{ \delta_{2}}+u_{{n+1,2l}}}{{ \delta_{2}}+u_{{n-1,2l}}}}-{ \Fm{n}}{\beta}{\frac {( u_{{n-1,2l}}-u_{{n+1,2l}} )
( u_{{n,2l}}+{ \delta_{2}} ) }{1+ ( u_{{n,2l}}-1 ) { \delta_{1}}}} =\xi_{n}. \label{eq:2D2eQm}
 \end{gather}
 Without loss of generality we set $\alpha=1$ and $\beta=\delta_{1}$. Then making the transformation
 \begin{gather}
 u_{n,2l} = U_{n,2l} - \delta_{2} \label{eq:2D2eQmtransf}
 \end{gather}
 and putting
 \begin{gather}
 \delta=\frac{1-\delta_{1}-\delta_{1}\delta_{2}}{\delta_{1}} \label{eq:deltaconstver2}
 \end{gather}
 equation \eqref{eq:2D2eQm} is mapped to
 \begin{gather}
 \Fp{n}\frac {U_{n+1,2l}}{U_{n-1,2l}} -\Fm{n}\frac { ( U_{{n-1,2l}}-U_{{n+1,2l}} ) U_{{n,2l}}}{ U_{n,2l}+\delta}=\xi_{n}. \label{eq:2D2eQm2}
 \end{gather}
From the def\/inition \eqref{eq:genautsubl} applied to $U_{n,2l}$ instead of $u_{n,2l}$\footnote{We will denote the corresponding f\/ields with capital letters.} we can separate again the even and the odd part in~\eqref{eq:2D2eQm2}. We obtain the following system of two coupled \emph{first-order} ordinary dif\/ference equations
 \begin{subequations} \label{eq:2D2eQm2sys}
 \begin{gather}
 W_{k,l} - \xi_{2k} W_{k-1,l} =0, \label{eq:2D2eQm2k2} \\
 V_{k+1,l} - V_{k,l} = \xi_{2k+1} \left( 1+\frac{\delta}{W_{k,l}} \right). \label{eq:2D2eQm2kp2}
 \end{gather}
 \end{subequations}
 Putting $\xi_{2k}=a_{k}/a_{k-1}$ the solution to \eqref{eq:2D2eQm2k2} is given by
 \begin{gather}
 W_{k,l} = a_{k}\alpha_{l}. \label{eq:2D2Wklsol}
 \end{gather}
 Inserting the value of $W_{k,l}$ from \eqref{eq:2D2Wklsol} into \eqref{eq:2D2eQm2kp2} we obtain
 \begin{gather}
 V_{k+1,l} - V_{k,l} = \xi_{2k+1} \left( 1+\frac{\delta}{a_{k}\alpha_{l}} \right). \label{eq:2D2eQm2kp3}
 \end{gather}
 If we def\/ine
 \begin{gather}
 \xi_{2k+1}=b_{k+1}-b_{k}, \qquad a_{k} = \frac{b_{k+1}-b_{k}}{c_{k+1}-c_{k}}, \label{eq:2D2akdef}
 \end{gather}
 then
 \eqref{eq:2D2eQm2kp3} becomes a total dif\/ference. So we obtain the following solutions for the~$W_{k,l}$ and the~$V_{k,l}$ f\/ields
 \begin{gather*}
 W_{k,l} = \alpha_{l}\frac{b_{k+1}-b_{k}}{c_{k+1}-c_{k}}, \qquad 
 V_{k,l} = b_{k}+\beta_{l} + \delta\frac{c_{k}}{\alpha_{l}}. 
 \end{gather*}
 Inverting the transformation \eqref{eq:2D2eQmtransf} we obtain for the f\/ields $w_{k,l}$ and $v_{k,l}$
 \begin{subequations} \label{eq:2D2wvklsol2}
 \begin{gather}
 w_{k,l} = \alpha_{l}\frac{b_{k+1}-b_{k}}{c_{k+1}-c_{k}} + \frac{1}{\delta_{1}}-1-\delta, \label{eq:2D2wklsol2} \\
 v_{k,l}= b_{k}+\beta_{l} + \delta\frac{c_{k}}{\alpha_{l}} +\frac{1}{\delta_{1}}-1-\delta. \label{eq:2D2vklsol2}
 \end{gather}
 \end{subequations}

{\bf Case $\boldsymbol{m=2l+1}$.} If $m=2l+1$ we have the following non-autonomous ordinary dif\/ference equation
 \begin{gather*}
 \Fp{n}\frac { \delta\delta_1 u_{{n,2l+1}}+u_{{n+1,2l+1}}}{\delta \delta_1 u_{{n,2l+1}}+u_{{n-1,2l+1}}}
 +\Fm{n}\delta_1 ( u_{{n-1,2l+1}}-u_{{n+1,2l+1}} ) =\xi_{n}, 
 \end{gather*}
 where we already substituted $\delta$ as def\/ined in \eqref{eq:deltaconstver2}. Using the standard transformation \eqref{eq:genautsublp} to get rid of the two-periodic factors we obtain
 \begin{subequations} \label{eq:2D2ePmsys}
 \begin{gather}
 \frac { \delta\delta_1 y_{k,l}+z_{k,l}}{ \delta \delta_1 y_{k,l}+z_{k-1,l}} =\xi_{2k},\label{eq:2D2ePmk} \\
 \delta_{1} ( y_{k,l}-y_{k+1,l} ) = \xi_{2k+1}. \label{eq:2D2ePmkp}
 \end{gather}
 \end{subequations}
 Both equations in \eqref{eq:2D2ePmsys} are linear in $z_{k,l}$, $y_{k,l}$ and their shifts. As $\xi_{2k+1}=b_{k+1}-b_{k}$ we have that the solution of equation \eqref{eq:2D2ePmkp} is given by
 \begin{gather}
 y_{k,l} = \gamma_{l} - \frac{b_{k}}{\delta_{1}}. \label{eq:2D2yklsol}
 \end{gather}
 As $\xi_{2k}=a_{k}/a_{k-1}$ and $y_{k,l}$ is given by \eqref{eq:2D2yklsol} we obtain
 \begin{gather*}
 \frac{z_{k, l}}{a_{k}}-\frac{z_{k-1, l}}{a_{k-1}}= \left( \frac{1}{a_{k}}-\frac{1}{a_{k-1}} \right) (\delta b_{k}- \delta\delta_1\gamma_{l} ).
 \end{gather*}
 Recalling the def\/inition of $a_{k}$ in \eqref{eq:2D2akdef} we represent $z_{k,l}$ as
 \begin{gather}
 z_{k,l} = \delta b_{k} +\frac{b_{k+1}-b_{k}}{c_{k+1}-c_{k}}(\zeta_{l} -\delta c_{k}) -\delta\delta_{1}\gamma_{l}. \label{eq:2D2zklsol}
 \end{gather}

Equations (\ref{eq:2D2wvklsol2}), (\ref{eq:2D2yklsol}), (\ref{eq:2D2zklsol}) provide the value of the four f\/ields, but we have too many arbitrary functions in the $l$ direction, namely $\alpha_{l}$, $\beta_{l}$, $\gamma_{l}$ and $\zeta_{l}$. Inserting (\ref{eq:2D2wvklsol2}), (\ref{eq:2D2yklsol}), (\ref{eq:2D2zklsol}) into \eqref{eq:2D2} and separating the terms even and odd in $n$ and $m$ we obtain two independent equations
\begin{subequations} \label{eq:2D2deltagammaeq}
 \begin{gather}
 { \delta_{1}} \zeta_{l}+\alpha_{{l}}\delta_{1}^{2}\gamma _{{l}} -{{ \delta_{1}}}^{2}\lambda \alpha_{{l}}+\beta_{{l}}\alpha_{{l}}{\delta_{1}} +\delta \alpha_{{l}}{ \delta_{1}}-\alpha_{{l}}+\alpha_{{l}}{\delta_{1}}=0, \\
 { \delta_{1}} \zeta_{l}+\alpha_{{l+1}}\delta_{1}^{2}\gamma_{{l}} -\delta_{1}^{2}\lambda \alpha_{{l+1}}+\beta_{{l+1}}\alpha_{{l+1}}\delta_{1} +\delta \alpha_{{l+1}}{ \delta_{1}}-\alpha_{{l+1}}+\alpha_{{l+1}}{ \delta_{1}}=0,
 \end{gather}
\end{subequations}
which allow us to reduce by two the number of independent functions in the $l$ direction. Sol\-ving~\eqref{eq:2D2deltagammaeq} with respect to $\gamma_{l}$ and $\zeta_{l}$ we f\/ind
\begin{subequations} \label{eq:2D2deltagammadef}
 \begin{gather}
 \gamma_{l} =-{\frac {\beta_{{l+1}}{ \delta_{1}}-1-\delta_{1}^{2} \lambda+\delta { \delta_{1}}+{ \delta_{1}}}{\delta_{1}^{2}}}
 -{\frac {\alpha_{{l}} ( \beta_{{l+1}}-\beta_{{l}} ) }{ ( \alpha_{{l+1}} -\alpha_{{l}}) { \delta_{1}}}}, \label{eq:2D2gammadef} \\
 \zeta_{l} =\alpha_{{l}} ( \beta_{{l+1}}-\beta_{{l}} ) + {\frac {\alpha_{l}^{2} ( \beta_{{l+1}}-\beta_{{l}} )}{ \alpha_{{l+1}}-\alpha_{{l}} }}.\label{eq:2D2deltadef}
 \end{gather}
\end{subequations}

Inserting \eqref{eq:2D2deltagammadef} into equations (\ref{eq:2D2wvklsol2}), (\ref{eq:2D2yklsol}), (\ref{eq:2D2zklsol}) we have the general solution~\eqref{eq:2D2solfin} of the~$_{2}D_{2}$ equation~\eqref{eq:2D2} provided that $\delta_{1}\neq0$. Indeed the solution of the $_{2}D_{2}$ equation~\eqref{eq:2D2} given by~\eqref{eq:2D2solfin} is ill-def\/ined if $\delta_{1}=0$. Therefore we now discuss this case separately.

{\bf Case $\boldsymbol{\delta_{1}=0}$.} Following~\cite{GSY_DarbouxI} we have the $_{2}D_{2}$ equation \eqref{eq:2D2} with $\delta_{1}=0$ possesses the following two-point, f\/irst-order f\/irst integral in the direction $n$
\begin{gather}
 W_{1}^{(0,\delta_{2})}= \Fppp \alpha ( { \delta_2}+u_{{n+1,m}}) u_{{n,m}} -\Fpmm \alpha ( u_{{n+1,m}}+u_{{n,m}} )\nonumber \\
\hphantom{W_{1}^{(0,\delta_{2})}=}{} + \Fmpm \beta (\delta_2+ u_{{n,m}} )u_{{n+1,m}} -\Fmmp \beta ( u_{{n+1,m}}+u_{{n,m}}). \label{eq:W12D20d2}
\end{gather}
To solve the $_{2}D_{2}$ equation \eqref{eq:2D2} with $\delta_{1}=0$ we use the f\/irst integral \eqref{eq:W12D20d2}. Again we start separating the cases $m$ even and odd in~\eqref{eq:W12D20d2}.

{\bf Case $\boldsymbol{m=2l}$.} If $m=2l$ we obtain from the f\/irst integral~\eqref{eq:W12D20d2}
 \begin{gather}
 \Fp{n} ( { \delta_2}+u_{{n+1,2l}} ) u_{{n,2l}} + \Fm{n} (\delta_2+ u_{{n,2l}} )u_{{n+1,2l}}=\xi_{n}, \label{eq:2D20d2eQm}
 \end{gather}
 where we have chosen without loss of generality $\alpha=\beta=1$. Applying the transformation \eqref{eq:genautsubl} equation \eqref{eq:2D20d2eQm} becomes the system
 \begin{subequations} \label{eq:2D20d2eQmsys}
 \begin{gather}
 v_{k,l} (\delta_2+w_{k,l}) = \xi_{2k}, \label{eq:2D20d2eQmk} \\
 v_{k+1,l}(\delta_2+w_{k,l}) = \xi_{2k+1}. \label{eq:2D20d2eQmkp}
 \end{gather}
 \end{subequations}
 In this case the system \eqref{eq:2D20d2eQmsys} do not consist of purely dif\/ference equations. Indeed from \eqref{eq:2D20d2eQmk} we can derive immediately the value of the f\/ield~ $w_{k,l}$
 \begin{gather}
 w_{k,l} = -\delta_{2} +\frac{\xi_{2k}}{v_{k,l}}. \label{eq:2D20d2wsol}
 \end{gather}
 Inserting \eqref{eq:2D20d2wsol} into \eqref{eq:2D20d2eQmkp} we obtain that $v_{k,l}$ solves the equation
 \begin{gather}
 v_{k+1,l} - \frac{\xi_{2k+1}}{\xi_{2k}}v_{k,l} = 0. \label{eq:2D20d2eQmkp2}
 \end{gather}
 Def\/ining
 \begin{gather}
 \xi_{2k+1} = \frac{a_{k+1}}{a_{k}}\xi_{2k}, \label{eq:2D20d2xikp}
 \end{gather}
 we have that \eqref{eq:2D20d2eQmkp2} becomes a total dif\/ference. So we have that the system \eqref{eq:2D20d2eQmsys} is solved by
 \begin{subequations} \label{eq:2D20d2wvsol}
 \begin{gather}
 v_{k,l}= a_{k}\alpha_{l}, \label{eq:2D20d2vsol} \\
 w_{k,l} = -\delta_{2} +\frac{\xi_{2k}}{a_{k}\alpha_{l}}. \label{eq:2D20d2wsol2}
 \end{gather}
 \end{subequations}

{\bf Case $\boldsymbol{m=2l+1}$.} If $m=2l+1$ we obtain from the f\/irst integral \eqref{eq:W12D20d2}
 \begin{gather}
 \Fp{n}( u_{{n+1,2l+1}}+u_{{n,2l+1}}) +\Fm{n} ( u_{{n+1,2l+1}}+u_{{n,2l+1}} )=-\xi_{n}. \label{eq:2D20d2ePm}
 \end{gather} Applying the transformation \eqref{eq:genautsublp} equation \eqref{eq:2D20d2ePm} becomes the system
 \begin{subequations} \label{eq:2D20d2ePmsys}
 \begin{gather}
 y_{k,l}+z_{k,l}= -\xi_{2k}, \label{eq:2D20d2ePmk} \\
 z_{k,l} + y_{k+1,l} = -\frac{a_{k+1}}{a_{k}}\xi_{2k}, \label{eq:2D20d2ePmkp}
 \end{gather}
 \end{subequations}
 where $\xi_{2k+1}$ is given by~\eqref{eq:2D20d2xikp}. Equation \eqref{eq:2D20d2ePmk} is not a dif\/ference equation and can be solved to give
 \begin{gather*}
 z_{k,l} = -\xi_{2k}-y_{k,l},
 \end{gather*}
 which inserted in \eqref{eq:2D20d2ePmkp} gives
 \begin{gather}
 y_{k+1,l} - y_{k,l} = \left( 1-\frac{a_{k+1}}{a_{k}} \right) \xi_{2k}. \label{eq:2D20d2ePmkp2}
 \end{gather}
 Def\/ining
 \begin{gather*}
 \xi_{2k} = - a_{k}\frac{b_{k+1}-b_{k}}{a_{k+1}-a_{k}}
 \end{gather*}
 equation \eqref{eq:2D20d2ePmkp2} becomes a total dif\/ference. Therefore we can write the solution of the system~\eqref{eq:2D20d2ePmsys} as
 \begin{subequations} \label{eq:2D20d2zysol}
 \begin{gather}
 y_{k,l}= b_{k} + \beta_{l}, \label{eq:2D20d2ysol} \\
 z_{k,l} = a_{k}\frac{b_{k+1}-b_{k}}{a_{k+1}-a_{k}} -b_{k} - \beta_{l}. \label{eq:2D20d2zsol2}
 \end{gather}
 \end{subequations}

In this case we have the right number of arbitrary functions in both directions. So the solution of the $_{2}D_{2}$ equation with $\delta_{1}=0$ is just given by combining \eqref{eq:2D20d2wvsol} and \eqref{eq:2D20d2zysol}, gi\-ving~\eqref{eq:2D20d2sol}. It can be directly checked that \eqref{eq:2D20d2sol} is the general solution by inserting it into the~$_{2}D_{2}$ equation~\eqref{eq:2D2} with $\delta_{1}=0$.

\subsection[$_{3}D_{2}$ equation (\ref{eq:3D2})]{$\boldsymbol{{}_{3}D_{2}}$ equation (\ref{eq:3D2})}

From \cite{GSY_DarbouxI} we know that the $_{3}D_{2}$ equation \eqref{eq:3D2} is Darboux integrable, and that the form of the f\/irst integral depends on the value of the parameter $\delta$. We will begin with the general case when $\delta_{1}\neq0$ and $\delta\neq0$ and then consider the particular cases.

{\bf Case $\boldsymbol{\delta_{1}\neq0}$ and $\boldsymbol{\delta\neq0}$.} In this case we know that the $W_{1}$ f\/irst integrals of the $_{3}D_{2}$ equation~\eqref{eq:3D2} is given by~\cite{GSY_DarbouxI}
\begin{gather}
 W_{1} ={ \Fppp}{\alpha } {\frac { ( u_{{n-1,m}}+{ \delta_{2}})
[ 1+ ( u_{{n+1,m}}-1 ) { \delta_{1}} ] }{
 ( u_{{n+1,m}}+{ \delta_{2}} ) [ 1+ ( u_{{n-1,m}}-1 ) { \delta_{1}} ] }}\nonumber \\
 \hphantom{W_{1} =}{} +{\Fpmm}{ \alpha}\frac {u_{{n,m}}+( 1-\delta_{1}-\delta_{1}\delta_{2})u_{{n-1,m}}}{%
 u_{{n,m}}+( 1-\delta_{1}-\delta_{1}\delta_{2})u_{{n+1,m}}}\nonumber \\
\hphantom{W_{1} =}{} +{\Fmpm}{ \beta} (u_{{n+1,m}}- u_{{n-1,m}}) ( { \delta_{2}}+u_{{n,m}} )-{ \Fmmp}{ \beta}(u_{{n+1,m}} -u_{{n-1,m}}). \label{eq:W13D2}
\end{gather}
As stated in the introduction, from the relation $W_{1}=\xi_{n}$ this f\/irst integral def\/ines a three-point, second-order ordinary dif\/ference equation in the~$n$ direction which depends parametrically on~$m$. From this parametric dependence we f\/ind two dif\/ferent three-point non-autonomous ordinary dif\/ference equations corresponding to $m$ even and $m$ odd. We treat them separately.

{\bf Case $\boldsymbol{m=2l}$.} If $m=2l$ we have the following non-autonomous nonlinear ordinary dif\/ference equation
 \begin{gather}
{ \Fp{n}} {\frac { ( u_{{n-1,2l}}+{ \delta_{2}})[ 1+ ( u_{{n+1,2l}}-1 ) { \delta_{1}} ] }{ ( u_{{n+1,2l}}+{ \delta_{2}} )
[ 1+ ( u_{{n-1,2l}}-1) { \delta_{1}} ] }}+{\Fm{n}} (u_{{n+1,2l}}- u_{{n-1,2l}}) ( { \delta_{2}}+u_{{n,2l}} ) = \xi_{n}, \label{eq:3D2eQm}
 \end{gather}
 where we have chosen without loss of generality $\alpha=\beta=1$. We can apply the usual transformation \eqref{eq:genautsubl} in order to separate the even and odd part in \eqref{eq:3D2eQm}
 \begin{subequations} \label{eq:3D2eQmsys}
 \begin{gather}
 \frac {1+ ( w_{k,l}-1 ) { \delta_{1}}}{ w_{k,l}+{ \delta_{2}}} =\xi_{2k} \frac{1+ ( w_{k-1,l}-1 ) { \delta_{1}}}{ w_{k-1,l}+{ \delta_{2}}},\label{eq:3D2eQmk}\\
 v_{{k+1,l}}- v_{{k,l}} =\frac{ \xi_{2k+1}}{( { \delta_{2}}+w_{{k,l}} )}. \label{eq:3D2eQmkp}
 \end{gather}
 \end{subequations}
 This system of equations is still nonlinear, but the equation \eqref{eq:3D2eQmk} is uncoupled from \eqref{eq:3D2eQmkp}. Moreover equation \eqref{eq:3D2eQmk} is a discrete Riccati equation which can be linearized through the M\"obius transformation
 \begin{gather}
 w_{k,l} = -\delta_{2} +\frac{1}{W_{k,l}}, \label{eq:3D2mob}
 \end{gather}
 into
 \begin{subequations} \label{eq:3D2eQmsys2}
 \begin{gather}
 W_{k, l}-\xi_{2k} W_{k-1, l}= \frac{\delta_1}{\delta}(\xi_{2k}-1), \label{eq:3D2eQmk2} \\
 v_{{k+1,l}}- v_{{k,l}}=\xi_{2k+1}W_{k,l}, \label{eq:3D2eQmkp2}
 \end{gather}
\end{subequations}
 where $\delta$ is given by equation~\eqref{eq:deltaconstdef}. Putting $\xi_{2k}=a_{k}/a_{k-1}$ we have the following solution for~\eqref{eq:3D2eQmk2}
 \begin{gather}
 W_{k,l} = a_{k}\alpha_{l} - \frac{\delta_{1}}{\delta}. \label{eq:3D2Wklsol}
 \end{gather}
 Plugging \eqref{eq:3D2Wklsol} into equation \eqref{eq:3D2eQmkp2} and def\/ining
 \begin{gather}
 \xi_{2k+1}=b_{k+1}-b_{k}, \qquad a_{k} =\frac{c_{k+1}-c_{k}}{b_{k+1}-b_{k}}, \label{eq:3D2akdef}
 \end{gather}
 we have that equation \eqref{eq:3D2eQmkp2} becomes a total dif\/ference. Then the solution of \eqref{eq:3D2eQmkp2} can be written as
 \begin{gather*}
 v_{k,l} = -\frac{\delta_{1}}{\delta}b_{k} + c_{k}\alpha_{l} +\beta_{l}.
 \end{gather*}
 So using \eqref{eq:3D2mob} we obtain the following
 solution for the original system \eqref{eq:3D2eQmsys}:
 \begin{subequations} \label{eq:3D2eQmsol}
 \begin{gather}
 w_{k,l} = \frac{\delta_{1}-1+\delta}{\delta_{1}}+ \frac{\delta(b_{k+1}-b_{k})}{ \delta\alpha_{m}(c_{k+1}-c_{k})-(b_{k+1}-b_{k})\delta_1}, \\
 v_{k,l} = -\frac{\delta_{1}}{\delta}b_{k} + c_{k}\alpha_{l} +\beta_{l}.
 \end{gather}
 \end{subequations}

{\bf Case $\boldsymbol{m=2l+1}$.} If $m=2l+1$ we have the following non-autonomous ordinary dif\/ference equation
 \begin{gather*}
 {\Fp{n}}\frac {u_{{n,2l+1}}+\delta u_{{n-1,2l+1}}}{ u_{{n,2l+1}}+\delta u_{{n+1,2l+1}}} -{ \Fm{n}}\left(u_{{n+1,2l+1}} -u_{{n-1,2l+1}}\right)=\xi_{n}.
 \end{gather*}
where without loss of generality $\alpha=\beta=1$ and $\delta$ is given by \eqref{eq:deltaconstdef}. Solving with respect to~$u_{n+1,2l+1}$ it is immediate to see that the resulting equation is linear. Then separating the even and the odd part using the transformation~\eqref{eq:genautsublp} we obtain the following system of linear, f\/irst-order ordinary dif\/ference equations
 \begin{subequations} \label{eq:3D2ePmsys}
 \begin{gather}
 z_{k,l} - \frac{1}{\xi_{2k}}z_{k-1,l} = \frac{1}{\delta}\left( 1-\frac{1}{\xi_{2k}} \right)y_{k,l}, \label{eq:3D2ePmk} \\
 y_{k+1}-y_{k}=-\xi_{2k+1}. \label{eq:3D2ePmkp}
 \end{gather}
 \end{subequations}
 As $\xi_{2k+1}=b_{k+1}-b_{k}$ we obtain the solution of equation \eqref{eq:3D2ePmkp}
 \begin{gather}
 y_{k,l} = -b_{k}+\gamma_{l}. \label{eq:3D2yklsol}
 \end{gather}
 Substituting $y_{k,l}$ given by~\eqref{eq:3D2yklsol} into equation~\eqref{eq:3D2ePmk} being $\xi_{2k}=a_{k}/a_{k-1}$, we obtain
 \begin{gather*}
 a_{k}z_{k,l}-a_{k-1}z_{k-1,l} = \frac{a_{k}-a_{k-1}}{\delta}( b_{k}-\gamma_{l}).
 \end{gather*}
 Then, in the usual way, we can represent the solution as
 \begin{gather}
 z_{k,l} = \frac{b_{k}-\gamma_{l}}{\delta} +\frac{b_{k+1}-b_{k}}{c_{k+1}-c_{k}} \left(\zeta_{l}-\frac{c_{k}}{\delta} \right), \label{eq:3D2zklsol}
 \end{gather}
 where we have used the explicit def\/inition of~$a_{k}$ given in~\eqref{eq:3D2akdef}. So we have the explicit expression for both f\/ields~$y_{k,l}$ and~$z_{k,l}$.

\begin{subequations} \label{eq:3D2deltagammaeq}
Equations (\ref{eq:3D2eQmsol}), (\ref{eq:3D2yklsol}), (\ref{eq:3D2zklsol}) provide the value of the four f\/ields, but we have too many arbitrary functions
in the $l$ direction, namely $\alpha_{l}$, $\beta_{l}$, $\gamma_{l}$ and~$\zeta_{l}$. Inserting (\ref{eq:3D2eQmsol}), (\ref{eq:3D2yklsol}), (\ref{eq:3D2zklsol}) into \eqref{eq:3D2} and separating the terms even and odd in $n$ and $m$ we obtain we obtain two equations
 \begin{gather}
 \zeta_{l}{\delta}^{2}\alpha_{{l}}+ ( \beta_{{l}} -{ \delta_{1}} \lambda ) \delta-{ \delta_{1}} \gamma _{{l}} =0, \\
 \zeta_{l}{\delta}^{2}\alpha_{{l+1}}+ ( \beta_{{l+1}} -{ \delta_{1}} \lambda ) \delta-{ \delta_{1}} \gamma _{{l}} =0,
 \end{gather}
\end{subequations}
which allow us to reduce by two the number of independent functions in the $l$ direction. Indeed solving \eqref{eq:3D2deltagammaeq} with respect to~$\gamma_{l}$ and~$\zeta_{l}$ we f\/ind
\begin{subequations} \label{eq:3D2deltagammadef}
 \begin{gather}
 \gamma_{l} = \frac{\delta}{\delta_{1}}\left( \beta_{l}-\lambda\delta_{1} - \alpha_{l}\frac{ \beta_{l+1}-\beta_{l} }{\alpha_{l+1}-\alpha_{l} }\right),
 \label{eq:3D2gammadef} \\
 \zeta_{l} =-\frac{1}{\delta}\frac{ \beta_{l+1}-\beta_{l} }{\alpha_{l+1}-\alpha_{l} }. \label{eq:3D2deltadef}
 \end{gather}
\end{subequations}

Inserting \eqref{eq:3D2deltagammadef} into (\ref{eq:3D2eQmsol}), (\ref{eq:3D2yklsol}), (\ref{eq:3D2zklsol}) we obtain the general solution~\eqref{eq:3D2solfin} of the $_{3}D_{2}$ equation~\eqref{eq:3D2} provided that $\delta_{1}\neq0$ and $\delta\neq0$. It is easy to see that the solution~\eqref{eq:3D2solfin} is ill-def\/ined if $\delta_{1}=0$ and if $\delta=0$. We will treat these two particular cases separately.

{\bf Case $\boldsymbol{\delta=0}$.} If $\delta=0$ we have that $\delta_{1}$ is given by equation \eqref{eq:deltasolved}. In this case the f\/irst integral~\eqref{eq:W13D2} is singular since the coef\/f\/icient of $\alpha$ goes to a constant. Following \cite{GSY_DarbouxI} we have that the $_{3}D_{2}$
equation with $\delta_{1}$ given by \eqref{eq:deltasolved} possesses the following f\/irst integral in the direction~$n$
\begin{gather}
 W_{1}^{( ( 1+\delta_{2})^{-1},\delta_{2})}= {\Fppp} { \alpha} {\frac {u_{{n+1,m}}-u_{{n-1,m}}}{
( {\delta_2}+u_{{n+1,m}} ) ( { \delta_{2}}+u_{{n-1,m}}) }} +{\Fpmm} { \alpha} {\frac {u_{{n+1,m}}-u_{{n-1,m}}}{( { \delta_{2}}+1 ) u_{{n,m}}}}\nonumber \\
\hphantom{ W_{1}^{( ( 1+\delta_{2})^{-1},\delta_{2})}=}{} -{\Fmpm} { \beta} ( u_{{n-1,m}}-u_{{n+1,m}} )
( { \delta_{2}}+u_{{n,m}} )\nonumber \\
\hphantom{ W_{1}^{( ( 1+\delta_{2})^{-1},\delta_{2})}=}{}-{ \Fmmp} { \beta} ( u_{{n+1,m}}-u_{{n-1,m}} ). \label{eq:W13D2d1d2}
\end{gather}
This f\/irst integral is a three-point, second-order f\/irst integral. As in the general case we consider separately the $m$ even and odd cases.

{\bf Case $\boldsymbol{m=2l}$.} If $m=2l$ then the f\/irst integral \eqref{eq:W13D2d1d2} becomes the following nonlinear three-point, second-order dif\/ference equation
 \begin{gather*}
Fp{n} {\frac {u_{{n+1,2l}}-u_{{n-1,2l}}}{ ( {\delta_2}+u_{{n+1,2l}}) ( { \delta_{2}}+u_{{n-1,2l}} ) }}
 -\Fm{n} ( u_{{n-1,2l}}-u_{{n+1,2l}} )( { \delta_{2}}+u_{{n,2l}}) =\xi_{n}, \label{eq:3D2d1d2eQm}
 \end{gather*}
 where without loss of generality $\alpha=\beta=1$. If we separate the even and the odd part using the general transformation given by \eqref{eq:genautsubl} we obtain the system
 \begin{subequations} \label{eq:3D2d1d2eQmsys}
 \begin{gather}
 \frac{w_{k,l}-w_{k-1,l}}{(w_{k,l}+\delta_2)(w_{k-1,l}+\delta_2)}=\xi_{2k}, \label{eq:3D2d1d2eQmk} \\
 (v_{k+1,l}-v_{k,l})(w_{k,l}+\delta_2)=\xi_{2k+1}. \label{eq:3D2d1d2eQmkp}
 \end{gather}
 \end{subequations}
 This is a system of f\/irst-order nonlinear dif\/ference equations. However \eqref{eq:3D2d1d2eQmk} is uncoupled from \eqref{eq:3D2eQmkp}, and it is a discrete Riccati equation which can be linearized through the M\"obius transformation \eqref{eq:3D2mob}. This linearize the system \eqref{eq:3D2d1d2eQmsys} to
 \begin{subequations} \label{eq:3D2d1d2eQmsys2}
 \begin{gather}
 W_{k,l}-W_{k-1,l}=\xi_{2k}, \label{eq:3D2d1d2eQmk2} \\
 v_{k+1,l}-v_{k,l}=\xi_{2k+1} W_{k,l}. \label{eq:3D2d1d2eQmkp2}
 \end{gather}
 \end{subequations}
 Def\/ining $\xi_{2k}=a_{k}-a_{k-1}$ equation \eqref{eq:3D2d1d2eQmk2} is solved by
 \begin{gather}
 W_{k,l} = a_{k}+\beta_{l}. \label{eq:3D2d1d2Wsol}
 \end{gather}
 Introducing \eqref{eq:3D2d1d2Wsol} into equation \eqref{eq:3D2d1d2eQmkp2} we have
 \begin{gather}
 v_{k+1,l}-v_{k,l} = \xi_{2k+1} ( a_{k}+\alpha_{l}). \label{eq:3D2d1d2eQmkp3}
 \end{gather}
 Equation \eqref{eq:3D2d1d2eQmkp3} becomes a total dif\/ference if
 \begin{gather}
 \xi_{2k+1} = b_{k+1}-b_{k}, \qquad a_{k} = \frac{c_{k+1}-c_{k}}{b_{k+1}-b_{k}}. \label{eq:3D2d1d2lkpadef}
 \end{gather}
 This yields the following solution of the system~\eqref{eq:3D2d1d2eQmsys}
 \begin{subequations} \label{eq:3D2d1d2wvsol}
 \begin{gather}
 v_{k,l} = c_{k}+ b_{k}\alpha_{l} + \beta_{l}, \label{eq:3D2d1d2vsol} \\
 w_{k,l} = -\delta_{2} + \frac{b_{k+1}-b_{k}}{ c_{k+1}-c_{k} + \alpha_{l}( b_{k+1} -b_{k} )}. \label{eq:3D2d1d2wsol}
 \end{gather}
 \end{subequations}

{\bf Case $\boldsymbol{m=2l+1}$.} If $m=2l+1$ the f\/irst integral \eqref{eq:W13D2d1d2} becomes the following nonlinear, three-point, second-order dif\/ference equation
 \begin{gather}
 {\Fp{n}} {\frac {u_{{n+1,2l+1}}-u_{{n-1,2l+1}}}{( { \delta_{2}}+1) u_{{n,2l+1}}}} -{ \Fm{n}} ( u_{{n+1,2l+1}}-u_{{n-1,2l+1}} )=\xi_{n}, \label{eq:3D2d1d2ePm}
 \end{gather}
 where without loss of generality $\alpha=\beta=1$. As usual we can separate the even and odd part in $n$ using the transformation \eqref{eq:genautsublp}. This transformation brings equation \eqref{eq:3D2d1d2ePm} into the following linear system
 \begin{subequations} \label{eq:3D2d1d2ePmsys}
 \begin{gather}
 z_{k,l}-z_{k-1,l} = (\delta_2+1)y_{k,l} \left( \frac{c_{k}-c_{k-1}}{b_{k}-b_{k-1}}-\frac{c_{k+1}-c_{k}}{b_{k+1}-b_{k}} \right), \label{eq:3D2d1d2ePmk} \\
 y_{k+1,l}-y_{k,l} =-b_{k+1}+b_{k}, \label{eq:3D2d1d2ePmkp}
 \end{gather}
 \end{subequations}
 where we used \eqref{eq:3D2d1d2lkpadef} and the def\/inition $\xi_{2k+1} = a_{k+1}-a_{k}$. Equation \eqref{eq:3D2d1d2ePmkp} is readily solved and gives
 \begin{gather}
 y_{k,l} = \gamma_{l}-b_{k}. \label{eq:3D2d1d2ysol}
 \end{gather}
 Inserting \eqref{eq:3D2d1d2ysol} into \eqref{eq:3D2d1d2ePmk} we obtain
 \begin{gather*}
 z_{k,l}-z_{k-1,l} = (\delta_2+1)( \gamma_{l}-b_{k}) \left( \frac{c_{k}-c_{k-1}}{b_{k}-b_{k-1}}-\frac{c_{k+1}-c_{k}}{b_{k+1}-b_{k}} \right).
 \end{gather*}
 We can then write for $z_{k,l}$ the following expression
 \begin{gather*}
 z_{k,l} = -(\delta_2+1)\left( \gamma_{l} \frac{c_{k+1}-c_{k}}{b_{k+1}-b_{k}} +d_{k}\right)+\zeta_{l},
 \end{gather*}
 where $d_{k}$ solves the equation
 \begin{gather}
 d_{k}-d_{k-1} = -b_{k} \left( \frac{c_{k}-c_{k-1}}{b_{k}-b_{k-1}}-\frac{c_{k+1}-c_{k}}{b_{k+1}-b_{k}} \right). \label{eq:3D2d1d2dkeq}
 \end{gather}
 Equation \eqref{eq:3D2d1d2dkeq} is a total dif\/ference with $d_{k}$ given by
 \begin{gather*}
 d_{k} = b_{k+1} \frac{c_{k+1}-c_{k}}{b_{k+1}-b_{k}} -c_{k+1}. 
 \end{gather*}
 Therefore we have the following solution to the system \eqref{eq:3D2d1d2ePmsys}
 \begin{subequations} \label{eq:3D2d1d2zysol}
 \begin{gather}
 y_{k,l} = \gamma_{l}-b_{k}, \label{eq:3D2d1d2ysol2} \\
 z_{k,l}= -(\delta_2+1)\left[ ( \gamma_{l}+b_{k+1} ) \frac{c_{k+1}-c_{k}}{b_{k+1}-b_{k}} -c_{k+1}\right]+\zeta_{l}. \label{eq:3D2d1d2zsol2}
 \end{gather}
 \end{subequations}

Equations (\ref{eq:3D2d1d2wvsol}), (\ref{eq:3D2d1d2zysol}) provide the value of the four f\/ields, but we have too many arbitrary functions in the~$l$ direction, namely $\alpha_{l}$, $\beta_{l}$, $\gamma_{l}$ and $\zeta_{l}$. Inserting (\ref{eq:3D2d1d2wvsol}), (\ref{eq:3D2d1d2zysol}) into~\eqref{eq:3D2} with~$\delta_{1}$ given by~\eqref{eq:deltasolved} and separating the terms even and odd in~$n$ and~$m$ we obtain we obtain two equations
 \begin{gather*}
 \gamma_{l} (\delta_2+1) \alpha_{l}-\lambda+\zeta_{l}+\beta_{l} (\delta_2+1) =0, \\
 \gamma_{l} (\delta_2+1) \alpha_{l+1}-\lambda+\zeta_{l}+\beta_{l+1} (\delta_2+1) =0.
 \end{gather*}
Solving this compatibility condition with respect to $\gamma_{l}$ and $\zeta_{l}$ we obtain
\begin{subequations} \label{eq:3D2d1d2gamdel}
 \begin{gather}
 \gamma_{l} = -\frac{\beta_{l+1}-\beta_{l}}{\alpha_{l+1}-\alpha_{l}}, \label{eq:3D2d1d2gam} \\
 \zeta_{l} = ( \delta_{2}+1 ) \frac{\beta_{l+1}\alpha_{l}-\alpha_{l+1}\beta_{l}}{\alpha_{l+1}-\alpha_{l}}+\lambda . \label{eq:3D2d1d2del}
 \end{gather}
\end{subequations}

Inserting then \eqref{eq:3D2d1d2gamdel} into (\ref{eq:3D2d1d2wvsol}), (\ref{eq:3D2d1d2zysol}) we obtain general solution~\eqref{eq:3D2d1d1sol} of the $_{3}D_{2}$ equation~\eqref{eq:3D2} provided that $\delta=0$.

{\bf Case $\boldsymbol{\delta_{1}=0}$.} The f\/irst integral \eqref{eq:W13D2} is non-singular when $\delta_{1}=0$. Therefore the procedure of solution becomes dif\/ferent only when we arrive to the systems of ordinary dif\/ference equations~\eqref{eq:3D2eQmsys} and~\eqref{eq:3D2ePmsys}. So we present the solution of the systems in this case.

{\bf Case $\boldsymbol{m=2l}$.} If $\delta_{1}=0$ the system~\eqref{eq:3D2eQmsys} becomes
 \begin{subequations} \label{eq:3D20d2eQmsys}
 \begin{gather}
 w_{k,l}+{ \delta_{2}}= \frac{w_{k-1,l}+{ \delta_{2}}}{\xi_{2k}}, \label{eq:3D20d2eQmk} \\
 v_{{k+1,l}}- v_{{k,l}}=\frac{ \xi_{2k+1}}{( { \delta_{2}}+w_{{k,l}})}. \label{eq:3D20d2eQmkp}
 \end{gather}
 \end{subequations}
 The system \eqref{eq:3D20d2eQmsys} is nonlinear, but equation \eqref{eq:3D20d2eQmk} is uncoupled from equation \eqref{eq:3D20d2eQmk}. Def\/ining $\xi_{2k}=a_{k-1}/a_{k}$ equation \eqref{eq:3D20d2eQmk} is solved by
 \begin{gather}
 w_{k,l} = -\delta_{2} + a_{k}\alpha_{l}. \label{eq:3D20d2wsol}
 \end{gather}
 Substituting $w_{k,l}$ given by \eqref{eq:3D20d2wsol} into equation \eqref{eq:3D20d2eQmkp} we obtain
 \begin{gather}
 v_{{k+1,l}}- v_{{k,l}} =\frac{ \xi_{2k+1}}{ a_{k}\alpha_{l}}. \label{eq:3D20d2eQmkp2}
 \end{gather}
 Def\/ining
 \begin{gather}
 \xi_{2k+1} = - a_{k}( b_{k+1}-b_{_k} ), \label{eq:3D20d2xikp}
 \end{gather}
 we have that equation \eqref{eq:3D20d2eQmkp2} is a total dif\/ference. Therefore we have the following solution of the system \eqref{eq:3D20d2eQmsys}
 \begin{subequations} \label{eq:3D20d2wvsol}
 \begin{gather}
 v_{k,l} = \beta_{l} + \frac{b_{k}}{\alpha_{l}}, \label{eq:3D20d2vsol} \\
 w_{k,l} = -\delta_{2} + a_{k}\alpha_{l}. \label{eq:3D20d2wsol2}
 \end{gather}
 \end{subequations}

{\bf Case $\boldsymbol{m=2l+1}$.} If $\delta_{1}=0$ the system \eqref{eq:1D2ePmsys} becomes
 \begin{subequations} \label{eq:3D20d2ePmsys}
 \begin{gather}
 z_{k,l}-\frac{a_{k}}{a_{k-1}}z_{k-1,l} =\left(\frac{a_{k}}{a_{k-1}}-1\right)y_{k,l}, \label{eq:3D20d2ePmk} \\
 y_{k+1,l}-y_{k,l} =-a_{k}( b_{k+1}-b_{k} ), \label{eq:3D20d2ePmkp}
 \end{gather}
 \end{subequations}
 where we used \eqref{eq:3D20d2xikp} and $\xi_{2k}=a_{k-1}/a_{k}$. The system is linear and equation~\eqref{eq:3D20d2ePmkp} is uncoupled from~\eqref{eq:3D20d2ePmk}. If we put
 \begin{gather*}
 a_{k} = -\frac{c_{k+1}-c_{k}}{b_{k+1}-b_{k}},
 \end{gather*}
 then equation \eqref{eq:3D20d2ePmk} becomes a~total dif\/ference whose solution is
 \begin{gather}
 y_{k,l} = c_{k}+\gamma_{l}. \label{eq:3D20d2ysol}
 \end{gather}
Substituting $y_{k,l}$ given by \eqref{eq:3D20d2ysol} into equation \eqref{eq:3D20d2ePmk} we obtain
 \begin{gather*}
 \frac{b_{k+1}-b_{k}}{c_{k+1}-c_{k}} z_{k,l} -\frac{b_{k}-b_{k-1}}{c_{k}-c_{k-1}} z_{k-1,l} =\left( \frac{b_{k}-b_{k-1}}{c_{k}-c_{k-1}} -\frac{b_{k+1}-b_{k}}{c_{k+1}-c_{k}} \right) ( c_{k}+\gamma_{l} ).
 \end{gather*}
 We can therefore represent the solution as
 \begin{gather*}
 z_{k,l} = \frac{c_{k+1}-c_{k}}{b_{k+1}-b_{k}} ( d_{k}+\zeta_{l} ) -\gamma_{l},
 \end{gather*}
 where $d_{k}$ solves the equation
 \begin{gather}
 d_{k}-d_{k-1}=b_{k+1} - \frac{b_{k+1}-b_{k}}{c_{k+1}-c_{k}}c_{k+1} -b_{k-1} + \frac{b_{k}-b_{k-1}}{c_{k}-c_{k-1}}c_{k-1}. \label{eq:3D20d2dkeq}
 \end{gather}
 Equation \eqref{eq:3D20d2dkeq} is a total dif\/ference and $d_{k}$ is given by
 \begin{gather*}
 d_{k} = b_{k} - \frac{b_{k+1}-b_{k}}{c_{k+1}-c_{k}}c_{k}.
 \end{gather*}
 Therefore we have that the solution of the system \eqref{eq:3D20d2ePmsys} is given by
 \begin{subequations} \label{eq:3D20d2zysol}
 \begin{gather}
 y_{k,l} = c_{k}+\gamma_{l}, \label{eq:3D20d2ysol2} \\
 z_{k,l} = \frac{c_{k+1}-c_{k}}{b_{k+1}-b_{k}}\zeta_{l}-\gamma_{l} +\frac{b_{k}c_{k+1}-c_{k}b_{k+1}}{b_{k+1}-b_{k}}. \label{eq:3D20d2zsol2}
 \end{gather}
 \end{subequations}

From equations (\ref{eq:3D20d2wvsol}), (\ref{eq:3D20d2zysol}) we have the value of the four f\/ields, but we have too many arbitrary functions in the $l$ direction, namely $\alpha_{l}$, $\beta_{l}$, $\gamma_{l}$ and $\zeta_{l}$. Inserting (\ref{eq:3D20d2wvsol}), (\ref{eq:3D20d2zysol}) into \eqref{eq:3D2} with $\delta_{1}=0$ and separating the terms even and odd in $n$ and $m$ we obtain we obtain two equations
 \begin{gather*}
 \beta_{l}\alpha_{l}-\zeta_{l}=0, \qquad \beta_{l+1}\alpha_{l+1}-\zeta_{l}=0.
 \end{gather*}
We can solve this compatibility conditions with respect to $\beta_{l}$ and $\zeta_{l}$ we obtain
 \begin{gather}\label{eq:3D20d2bedel}
 \beta_{l} = \frac{\zeta_{0}}{\alpha_{l}}, \qquad \zeta_{l}=\zeta_{0},
 \end{gather}
where $\zeta_{0}$ is a constant.

Inserting then \eqref{eq:3D20d2bedel} into (\ref{eq:3D20d2wvsol}), (\ref{eq:3D20d2zysol}) we obtain general solution~\eqref{eq:3D20d2sol} of the~$_{3}D_{2}$ equation~\eqref{eq:3D2} provided that $\delta_{1}=0$.

This discussion exhausts the possible cases. So for any value of the parameters we have the general solution of the $_{3}D_{2}$ equation \eqref{eq:3D2}.

\subsection[$_{1}D_{4}$ equation (\ref{eq:1D4})]{$\boldsymbol{{}_{1}D_{4}}$ equation (\ref{eq:1D4})}

To f\/ind the general solution we start from the $_{1}D_{4}$ equation \eqref{eq:1D4} itself. Applying the general transformation \eqref{eq:genautsub} we transform the $_{1}D_{4}$ into the following system
\begin{subequations} \label{eq:1D4sys}
 \begin{gather}
 v_{k,l} z_{k,l}+w_{k,l} y_{k,l}+\delta_{1} w_{k,l} z_{k,l} +\delta_{2} y_{k,l} z_{k,l}+\delta_{3} =0, \label{eq:1D4a} \\
 y_{k,l} w_{k,l+1}+z_{k,l} v_{k,l+1} +\delta_{1} z_{k,l} w_{k,l+1}+\delta_{2} y_{k,l} z_{k,l}+\delta_{3} =0, \label{eq:1D4b} \\
 w_{k,l} y_{k+1,l}+v_{k+1,l} z_{k,l} +\delta_{1} w_{k,l} z_{k,l}+\delta_{2} z_{k,l} y_{k+1,l}+\delta_{3} =0, \label{eq:1D4c} \\
 z_{k,l} v_{k+1,l+1}+y_{k+1,l} w_{k,l+1} +\delta_{1} z_{k,l} w_{k,l+1}+\delta_{2} z_{k,l} y_{k+1,l}+\delta_{3} =0. \label{eq:1D4d}
 \end{gather}
\end{subequations}

From the system \eqref{eq:1D4sys} we have four dif\/ferent way for calculating $z_{k,l}$. This means that we have some compatibility conditions. Indeed from~\eqref{eq:1D4a} and \eqref{eq:1D4c} we obtain the following equation for~$v_{k+1,l}$
\begin{gather}
 v_{k+1,l} = \frac{\delta_{3}+w_{k,l} y_{k+1,l}}{ \delta_{3}+w_{k,l} y_{k,l}}v_{k,l} +\frac{(y_{k+1,l}-y_{k,l}) \big(\delta_{1} w_{k,l}^2-\delta_{2} \delta_{3}\big)}{ \delta_{3}+w_{k,l} y_{k,l}}, \label{eq:1D4evk}
\end{gather}
while from \eqref{eq:1D4b} and \eqref{eq:1D4d} we obtain the following equation for $v_{k+1,l+1}$
\begin{gather}
 v_{k+1,l+1} = \frac{\delta_{3}+y_{k+1,l} w_{k,l+1}}{ \delta_{3}+y_{k,l} w_{k,l+1}}v_{k,l+1} +\frac{(y_{k+1,l}-y_{k,l}) \big(\delta_{1} w_{k,l+1}^2-\delta_{2} \delta_{3}\big)}{y_{k,l} w_{k,l+1}+\delta_{3}}. \label{eq:1D4evkp}
\end{gather}
Equations \eqref{eq:1D4evk} and \eqref{eq:1D4evkp} give rise to a compatibility condition between $v_{k+1,l}$ and its shift in the $l$ direction $v_{k+1,l+1}$ which is given by
\begin{gather*}
 \left[
 \begin{aligned}
(y_{k+1,l+1} y_{k,l} -y_{k+1,l} y_{k,l+1} )w_{k,l+1} \\
{} +\delta_{3} (y_{k+1,l+1}+ y_{k,l}
 - y_{k,l+1}- y_{k+1,l} ) \end{aligned}
 \right]
\big(v_{k,l+1} w_{k,l+1}-\delta_{2} \delta_{3}+\delta_{1} w_{k,l+1}^2\big) =0. 
\end{gather*}
Discarding the trivial solution
\begin{gather*}
 v_{k,l} = -\delta_{1} w_{k,l} +\frac{\delta_{2} \delta_{3}}{w_{k,l}}
\end{gather*}
we obtain the following value for the f\/ield $w_{k,l}$
\begin{gather}
 w_{k,l} = \delta_{3} \frac{y_{k+1,l-1}-y_{k+1,l}-y_{k,l-1}+y_{k,l}}{ y_{k+1,l} y_{k,l-1}-y_{k+1,l-1} y_{k,l}}, \label{eq:1D4wkldef}
\end{gather}
which makes \eqref{eq:1D4evk} and \eqref{eq:1D4evkp} compatible. This gives us the f\/irst piece of the solution in~\eqref{eq:1D4wsolfin}. Then we have to solve the following equation for~$v_{k,l}$
\begin{gather*}
 v_{k+1,l} =\frac{y_{k+1,l}-y_{k+1,l-1}}{y_{k,l}-y_{k,l-1}}v_{k,l}+ \frac{\delta_{1}\delta_{3} (y_{k+1,l-1}-y_{k,l-1}-y_{k+1,l}+y_{k,l})^2 }{ (y_{k+1,l-1} y_{k,l}-y_{k+1,l} y_{k,l-1}) (y_{k,l}-y_{k,l-1})} \\
\hphantom{v_{k+1,l} =}{} -\frac{\delta_{2} (y_{k+1,l-1} y_{k,l}-y_{k+1,l} y_{k,l-1})}{ y_{k,l}-y_{k,l-1}}.
\end{gather*}
Making the transformation
\begin{gather}
 v_{k,l} = ( y_{k,l}-y_{k,l-1} )V_{k,l} +\frac{\delta_{1}\delta_{3}}{y_{k,l-1}}-\delta_{2}y_{k,l-1} \label{eq:1D4Vkldef}
\end{gather}
we obtain that $V_{k,l}$ satisf\/ies the dif\/ference equation
\begin{gather}
V_{k+1,l}=V_{k,l}+ \frac{\delta_{1} \delta_{3} (y_{k,l-1}-y_{k+1,l-1})^2}{ y_{k,l-1} y_{k+1,l-1}(y_{k+1,l-1} y_{k,l} -y_{k+1,l} y_{k,l-1}) }. \label{eq:1D4evk3}
\end{gather}

To go further we need to specify the form of the f\/ield $y_{k,l}$. This can be obtained from the Darboux integrability of the $_{1}D_{4}$ equation~\eqref{eq:1D4}. From \cite{GSY_DarbouxI} we know that the $_{1}D_{4}$ equation~\eqref{eq:1D4} we have the following four-point, third-order
$W_{1}$ integral
\begin{gather*}
 W_{1} ={ \Fppp}{\alpha } \frac {u_{n+1,m}^{2}{ \delta_{1}}+u_{{n+1,m}}u_{{n+2,m}}+u_{{n-1,m}}
( u_{{n,m}}-u_{{n+2,m}} ) -{ \delta_{2}}{ \delta_{3}}}{ u_{n+1,m}( \delta_{1}+u_{n,m}) -\delta_{2}\delta_{3}} \\
 \hphantom{W_{1} =}{} +{ \Fpmm}{\alpha} \frac { ( u_{{n,m}}-u_{{n+2,m}}+{\delta_{1}}u_{{n+1,m}} )
 u_{{n-1,m}}+u_{{n+1,m}}u_{{n+2,m}}}{ ( u_{{n,m}}+{ \delta_{1}}u_{{n-1,m}} ) u_{{n+1,m}}} \\
 \hphantom{W_{1} =}{} +{ \Fmpm}{\beta} \frac { (u_{{n+1,m}} -u_{{n-1,m}} )
 (u_{{n+2,m}}- u_{{n,m}} ) }{ u_{n,m}^{2}{ \delta_{1}}+u_{{n+1,m}}u_{{n,m}} -{ \delta_{2}}{ \delta_{3}}}\\
 \hphantom{W_{1} =}{} +{ \Fmmp}{ \beta} {\frac {( u_{{n+1,m}}-u_{{n-1,m}} )
( u_{{n+2,m}}-u_{{n,m}}) }{
 u_{{n,m}} ( u_{{n+2,m}}{ \delta_{1}}+u_{{n+1,m}}) }},
\end{gather*}
This f\/irst integral def\/ines as always the relation $W_{1}=\xi_{n}$ which is a third-order, four-point ordinary dif\/ference equation in the $n$ direction depending parametrically on $m$. In particular when $m=2l+1$ we have the equation
\begin{gather}
{\Fp{n}}\frac{( u_{{n,2l+1}}-u_{{n+2,2l+1}} +{\delta_{1}}u_{{n+1,2l+1}}) u_{{n-1,2l+1}}+u_{{n+1,2l+1}}u_{{n+2,2l+1}}}{
( u_{{n,2l+1}}+{ \delta_{1}}u_{{n-1,2l+1}} ) u_{{n+1,2l+1}}}\nonumber \\
\qquad{} +\Fm{n}{\frac { ( u_{{n+1,2l+1}}-u_{{n-1,2l+1}} ) ( u_{{n+2,2l+1}}-u_{{n,2l+1}} ) }{
 u_{{n,2l+1}} ( u_{{n+2,2l+1}}{ \delta_{1}}+u_{{n+1,2l+1}}) }} =\xi_{n}. \label{eq:1D4ePm}
\end{gather}
where we have chosen without loss of generality $\alpha=\beta=1$. Using the transformation \eqref{eq:genautsublp} then~\eqref{eq:1D4ePm} is converted into the system
\begin{subequations} \label{eq:1D4ePmsys}
 \begin{gather}
 (y_{k,l}-y_{k+1,l}+\delta_{1} z_{k,l}) z_{k-1,l} +y_{k+1,l} z_{k,l}=\xi_{2 k}(y_{k,l}+\delta_{1} z_{k-1,l}) z_{k,l}, \label{eq:1D4ePmk} \\
 (y_{k,l}-y_{k+1,l}) (z_{k,l}-z_{k+1,l})=\xi_{2 k+1}z_{k,l} (z_{k+1,l} \delta_{1}+y_{k+1,l}). \label{eq:1D4ePmkp}
 \end{gather}
\end{subequations}
If we solve \eqref{eq:1D4ePmkp} with respect to $z_{k+1,l}$ and then substitute into \eqref{eq:1D4ePmk} we obtain a \emph{linear}, second-order ordinary dif\/ference equation for $y_{k,l}$
\begin{gather}
 \xi_{2 k-1}y_{k+1,l} +(1-\xi_{2k}-\xi_{2k} \xi_{2 k-1}) y_{k,l} -(1-\xi_{2k}) y_{k-1,l}=0. \label{eq:1D4ePmk2}
\end{gather}
We can f\/ind the solution to this equation in a similar manner than in the case of the $D_{3}$ equation. First of all let us consider $Y_{k,l}=a_{k}y_{k,l}+b_{k}y_{k-1,l}$ such that $Y_{k+1,l}-Y_{k,l}$ is equal to the left-hand side of \eqref{eq:1D4ePmk2}. To this end we def\/ine
\begin{gather*}
 \xi_{2k} = -\frac{b_{k+1}-b_{k}-a_{k}}{a_{k+1}}, \qquad \xi_{2k-1} = \frac{a_{k+1}-a_{k}+b_{k+1}-b_{k}}{b_{k}}. 
\end{gather*}
Therefore $y_{k,l}$ must solve the following equation
\begin{gather}
 a_{k}y_{k,l}+b_{k}y_{k-1,l}=\alpha_{l}. \label{eq:1D4ePmk3}
\end{gather}
Equation \eqref{eq:1D4ePmk3} is reduced to a total dif\/ference if we impose
\begin{gather*}
 a_{k} = \frac{1}{c_{k}}\frac{1}{d_{k}-d_{k-1}}, \qquad b_{k} =-\frac{1}{c_{k-1}}\frac{1}{d_{k}-d_{k-1}}. 
\end{gather*}
Then the solution of \eqref{eq:1D4ePmk3} is then given by
\begin{gather}
 y_{k,l}=c_{k}( \alpha_{l}d_{k}+\beta_{l} ). \label{eq:1D4yklsol}
\end{gather}
This is just equation \eqref{eq:1D4ysolfin}.

Inserting \eqref{eq:1D4yklsol} into \eqref{eq:1D4evk3} we obtain
\begin{gather*}
 V_{k+1,l} = V_{k,l} +\frac{\delta_{1}\delta_{3}\alpha_{l-1}}{ \beta_{l-1} \alpha_{l}-\beta_{l} \alpha_{l-1}}
 \left[ \frac{1}{(\alpha_{l-1} d_{k+1}+\beta_{l-1}) c_{k+1}^2} -\frac{1}{(\alpha_{l-1} d_{k}+\beta_{l-1}) c_{k}^2} \right] \\
\hphantom{V_{k+1,l} =}{} -\frac{\delta_{1}\delta_{3}}{\beta_{l-1} \alpha_{l}-\beta_{l} \alpha_{l-1}}
 \frac{(c_{k}-c_{k+1})^2}{ c_{k}^2 c_{k+1}^2 (d_{k+1}-d_{k})}.
\end{gather*}
This means that the solution of $V_{k,l}$ can be represented as
\begin{gather}
 V_{k,l}= \gamma_{l} +\frac{ \delta_{1} \delta_{3}}{ \beta_{l-1} \alpha_{l}-\beta_{l} \alpha_{l-1} }
 \left[\frac{\alpha_{l-1}}{c_{k}^2 (\alpha_{l-1} d_{k}+\beta_{l-1})} +e_{k}\right], \label{eq:1D4Vklsol}
\end{gather}
where $e_{k}$ is def\/ined by the discrete integration
\begin{gather*}
 e_{k+1}-e_{k}= -\frac{(c_{k}-c_{k+1})^2}{ c_{k}^2 c_{k+1}^2 (d_{k+1}-d_{k})}, 
\end{gather*}
which is just equation \eqref{eq:1D4ekeqdef0}. Inserting the value of $V_{k,l}$ from \eqref{eq:1D4Vklsol} and the value of~$y_{k,l}$~\eqref{eq:1D4yklsol} into equation \eqref{eq:1D4Vkldef} we obtain equation \eqref{eq:1D4vsolfin}. From the obtained value of~$v_{k,l}$ we can compute~$w_{k,l}$ using \eqref{eq:1D4wkldef}. So f\/inally we can compute $z_{k,l}$ from the original system~\eqref{eq:1D4sys}, and we obtain a \emph{single} compatibility condition given by
\begin{gather*}
 \left( \beta_{{l}}\alpha_{{l+1}}-\beta_{{l+1}}\alpha_{{l}} \right) \gamma _{{l+1}} - \left( \beta_{{l-1}}\alpha_{{l}}-\beta_{{l}}\alpha_{{l-1}} \right) \gamma _{{l}} = \left( \beta_{{l-1}}\alpha_{{l}}-\beta_{{l}}\alpha_{{l-1}} \right) {\delta_2}, 
\end{gather*}
which is just equation \eqref{eq:1D4gammleq0}. Since now the system \eqref{eq:1D4sys} is compatible we can use any of its equations to compute $z_{k,l}$. E.g., using equation \eqref{eq:1D4a} and the value of $w_{k,l}$ from equation \eqref{eq:1D4wkldef} we obtain equation~\eqref{eq:1D4wsolfin}. This concludes the procedure of solution of the $_{1}D_{4}$ equation~\eqref{eq:1D4}.

\subsection[$_{2}D_{4}$ equation (\ref{eq:2D4})]{$\boldsymbol{{}_{2}D_{4}}$ equation (\ref{eq:2D4})}

To f\/ind the general solution of the $_{2}D_{4}$ equation \eqref{eq:2D4} we start from the equation itself. Applying the general transformation \eqref{eq:genautsub} the $_{2}D_{4}$ equation we obtain the following system
\begin{subequations} \label{eq:2D4sys}
 \begin{gather}
 v_{k,l} w_{k,l}+\delta_{2} w_{k,l} y_{k,l} +\delta_{1} w_{k,l} z_{k,l}+y_{k,l} z_{k,l}+\delta_{3} =0, \label{eq:2D4a} \\
 v_{k,l+1} w_{k,l+1}+\delta_{2} y_{k,l} w_{k,l+1} +\delta_{1} z_{k,l} w_{k,l+1}+y_{k,l} z_{k,l}+\delta_{3} =0, \label{eq:2D4b} \\
 w_{k,l} v_{k+1,l}+\delta_{2} w_{k,l} y_{k+1,l} +\delta_{1} w_{k,l} z_{k,l}+z_{k,l} y_{k+1,l}+\delta_{3} =0, \label{eq:2D4c} \\
 w_{k,l+1} v_{k+1,l+1}+\delta_{2} y_{k+1,l} w_{k,l+1} +\delta_{1} z_{k,l} w_{k,l+1}+z_{k,l} y_{k+1,l}+\delta_{3} =0. \label{eq:2D4d}
 \end{gather}
\end{subequations}

From the system \eqref{eq:2D4sys} we have four dif\/ferent ways to compute $z_{k,l}$. This means that we have some compatibility conditions. Indeed from \eqref{eq:2D4a} and \eqref{eq:2D4c} we obtain the following equation for~$v_{k+1,l}$
\begin{gather}
 v_{k+1,l} = \frac{\delta_{1} w_{k,l}+y_{k+1,l}}{ \delta_{1} w_{k,l}+y_{k,l}} v_{k,l} -\frac{(y_{k+1,l}-y_{k,l}) (\delta_{2} w_{k,l}^2 \delta_{1}-\delta_{3})}{ (\delta_{1} w_{k,l}+y_{k,l}) w_{k,l}}, \label{eq:2D4evk}
\end{gather}
while from \eqref{eq:2D4b} and \eqref{eq:2D4d} we obtain the following equation for $v_{k+1,l+1}$
\begin{gather}
 v_{k+1,l+1} = \frac{\delta_{1} w_{k,l+1}+y_{k+1,l}}{ \delta_{1} w_{k,l+1}+y_{k,l}} v_{k,l+1} -\frac{(y_{k+1,l}-y_{k,l}) (\delta_{2} w_{k,l+1}^2 \delta_{1}-\delta_{3})}{ (\delta_{1} w_{k,l+1}+y_{k,l}) w_{k,l+1}}. \label{eq:2D4evkp}
\end{gather}
Equations \eqref{eq:2D4evk} and \eqref{eq:2D4evkp} give rise to
a compatibility condition between $v_{k+1,l}$ and its shift
in the $l$ direction $v_{k+1,l+1}$ which is given by
\begin{gather*}
 \left[
 \begin{aligned}
y_{{k+1,l+1}}y_{{k,l}} +{\delta_1}(y_{{k,l}} w_{{k,l+1}} +y_{{k+1,l+1}} w_{{k,l+1}}) \\
{} -y_{{k+1,l}}y_{{k,l+1}} -{\delta_1} (y_{{k+1,l}} w_{{k,l+1}} -y_{{k,l+1}} w_{{k,l+1}} )
 \end{aligned}
 \right]
 \big(v_{{k,l+1}}w_{{k,l+1}}+{\delta_3}-\delta_1\delta_2 w_{{k,l+1}}^{2}\big)=0.
\end{gather*}
Discarding the trivial solution
\begin{gather*}
 v_{k,l} = \delta_{1}\delta_{2} w_{k,l} -\frac{\delta_{3}}{w_{k,l}}
\end{gather*}
we obtain the following value for the f\/ield $w_{k,l}$
\begin{gather}
 w_{k,l} = \frac{1}{\delta_{1}} \frac{y_{k+1,l-1} y_{k,l}-y_{k+1,l} y_{k,l-1}}{ y_{k,l-1}+y_{k+1,l}-y_{k,l}-y_{k+1,l-1}}, \label{eq:2D4wkldef}
\end{gather}
which makes \eqref{eq:2D4evk} and \eqref{eq:2D4evkp} compatible. This gives us the f\/irst part of our solution, as displayed in equation~\eqref{eq:2D4wsolfin}. Inserting \eqref{eq:2D4wkldef} into \eqref{eq:2D4evk} we are left to solve the following equation for $v_{k,l}$
\begin{gather*}
 v_{k+1,l}=\frac{y_{k+1,l}-y_{k+1,l-1}}{y_{k,l}-y_{k,l-1}}v_{k,l}
 -\frac{\delta_{1}\delta_{3} (y_{k,l-1}-y_{k+1,l-1})^2 (y_{k,l}-y_{k,l-1})}{ (y_{k+1,l} y_{k,l-1}-y_{k+1,l-1} y_{k,l}) y_{k,l-1}^2} \\
\hphantom{v_{k+1,l}=}{} + \frac{y_{k,l-1}^2 y_{k+1,l} \delta_{2}-\delta_{1}\delta_{3} y_{k+1,l}
 +\delta_{1} \delta_{3} y_{k+1,l-1}- \delta_{2} y_{k+1,l-1} y_{k,l-1}^2}{ (y_{k,l}-y_{k,l-1}) y_{k,l-1}} \\
\hphantom{v_{k+1,l}=}{}-\frac{\delta_{2} y_{k+1,l-1} y_{k,l-1}^2 +\delta_{1} \delta_{3} y_{k+1,l-1}-2\delta_{3} \delta_{1} y_{k,l-1} }{y_{k,l-1}^2}.
\end{gather*}
Making the transformation
\begin{gather}
 v_{k,l} = ( y_{k,l}-y_{k,l-1})V_{k,l} +\frac{\delta_{1}\delta_{3}}{y_{k,l}}-\delta_{2}y_{k,l} \label{eq:2D4Vkldef}
\end{gather}
we obtain that $V_{k,l}$ satisf\/ies the dif\/ference equation
\begin{gather}
 V_{k+1,l}=V_{k,l} -\frac{\delta_{1} \delta_{3} (y_{k,l}-y_{k+1,l})^2}{ y_{k,l} y_{k+1,l} (y_{k+1,l-1} y_{k,l}-y_{k+1,l} y_{k,l-1})}. \label{eq:2D4evk3}
\end{gather}

To go further we need to specify the form of the f\/ield $y_{k,l}$. This can be obtained from the Darboux integrability of the $_2D_4$ equation \eqref{eq:2D4}. From~\cite{GSY_DarbouxI} we know that in the case of the $_{2}D_{4}$ equation~\eqref{eq:2D4} we have the following four-point, third-order $W_{1}$ f\/irst integral
\begin{gather*}
 W_{1} ={ \Fppp} \alpha {\frac { \left[
 \begin{gathered}
( u_{{n,m}}-u_{{n+2,m}} -{ \delta_{1}}{ \delta_{2}}u_{{n-1,m}} )u_{n+1,m}^{2} \\
 {} +u_{{n+1,m}}u_{{n+2,m}}u_{{n-1,m}} +{ \delta_{3}}u_{{n-1,m}}
 \end{gathered}\right]
 }{ ( { \delta_{2}}u_{n+1,m}^{2}{ \delta_{1}}- { \delta_{3}}-u_{{n,m}}u_{{n+1,m}} ) u_{{n-1,m}}}} \\
\hphantom{W_{1}=}{} -{ \Fpmm}\alpha {\frac {u_{{n+2,m}}u_{{n-1,m}}+ ( -u_{{n+2,m}}+u_{{n,m}} )
 u_{{n+1,m}}+{ \delta_{3}}}{u_{{n-1,m}}u_{{n,m}}+{ \delta_{3}}}} \\
\hphantom{W_{1}=}{} -{\Fmpm}\beta {\frac { ( u_{{n+1,m}}-u_{{n-1,m}} )
( u_{{n+2,m}}-u_{{n,m}} ) u_{{n,m}}}{u_{{n+2,m}} ( { \delta_{2}}{ \delta_{1}}{u_{{n,m}}}^{2}-u_{{n,m}}u_{{n+1,m}}-
 { \delta_{3}} ) }} \\
\hphantom{W_{1}=}{}+{ \Fmmp}\beta {\frac { ( u_{{n+1,m}}-u_{{n-1,m}} )( u_{{n+2,m}}-u_{{n,m}} ) }{u_{{n+1,m}}u_{{n+2,m}}+{ \delta_{3}}}}.
\end{gather*}
This f\/irst integral implies the relation $W_{1}=\xi_{n}$ which is a third-order, four-point ordinary dif\/ference equation in the $n$ direction depending parametrically on $m$. In particular if we choose the case when $m=2l+1$ we have the equation
\begin{gather}
-{ \Fp{n}} {\frac {u_{{n+2,2l+1}}u_{{n-1,2l+1}}-( u_{{n+2,2l+1}}-u_{{n,2l+1}}) u_{{n+1,2l+1}}+{ \delta_{3}}}{u_{{n-1,2l+1}}u_{{n,2l+1}}+{ \delta_{3}}}}
 \nonumber\\
\qquad{} +{ \Fm{n}} {\frac { ( u_{{n+1,2l+1}}-u_{{n-1,2l+1}} ) ( u_{{n+2,2l+1}}-u_{{n,2l+1}} ) }{ u_{{n+1,2l+1}}u_{{n+2,2l+1}}+{ \delta_{3}}}}=\xi_{n}.
 \label{eq:2D4ePm}
\end{gather}
where we have chosen without loss of generality $\alpha=\beta=1$. Using the transformation \eqref{eq:genautsublp} then equation \eqref{eq:2D4ePm} is converted into the system
\begin{subequations} \label{eq:2D4ePmsys}
 \begin{gather}
 (y_{k+1,l}-y_{k,l}) z_{k,l}-y_{k+1,l} z_{k+1,l}-\delta_{3} =\xi_{2k}(z_{k+1,l} y_{k,l}+\delta_{3}), \label{eq:2D4ePmk} \\
 (y_{k+1,l}-y_{k,l}) (z_{k+1,l}-z_{k,l}) =\xi_{2k+1}(y_{k+1,l} z_{k+1,l}+\delta_{3}). \label{eq:2D4ePmkp}
 \end{gather}
\end{subequations}
If we solve \eqref{eq:2D4ePmkp} with respect to $z_{k+1,l}$ and substitute into \eqref{eq:2D4ePmk} we obtain a \emph{linear}, second-order ordinary dif\/ference equation for $y_{k,l}$
\begin{gather}
 \xi_{2 k-1}y_{k+1,l} -(1+\xi_{2k}-\xi_{2k} \xi_{2 k-1}) y_{k,l} +(1+\xi_{2k}) y_{k-1,l}=0. \label{eq:2D4ePmk2}
\end{gather}
We can f\/ind the solution of equation \eqref{eq:2D4ePmk2} exploiting the arbitrariness of the functions $\xi_{2k}$ and $\xi_{2k+1}$ as in the case of the $_{1}D_{4}$ equation~\eqref{eq:1D4}. Let us introduce the f\/ield $Y_{k,l} = a_{k}y_{k,l}+b_{k}y_{k-1,l}$ and assume that $Y_{k+1,l}-Y_{k,l}$ equals the left-hand side of \eqref{eq:2D4ePmk2}. Then we must have
\begin{gather*}
 \xi_{2 k} = \frac{b_{k+1}-b_{k}+-a_{k}}{a_{k+1}}, \qquad \xi_{2 k-1} = -\frac{b_{k+1}-b_{k}+a_{k+1}-a_{k}}{b_{k}}. 
\end{gather*}
This implies that $y_{k,l}$ will solve the equation
\begin{gather}
 a_{k}y_{k,l}+b_{k}y_{k-1,l}=\alpha_{l}. \label{eq:2D4ePmk4}
\end{gather}
If we def\/ine
\begin{gather*}
 a_{k}=\frac{1}{c_{k}}\frac{1}{d_{k}-d_{k-1}}, \qquad b_{k}=-\frac{1}{c_{k-1}}\frac{1}{d_{k}-d_{k-1}}, 
\end{gather*}
equation \eqref{eq:2D4ePmk4} is solved by
\begin{gather}
 y_{k,l} = c_{k} ( \alpha_{l}d_{k}+\beta_{l}). \label{eq:2D4yklsol}
\end{gather}
This gives the second part of our solution displayed in equation \eqref{eq:2D4ysolfin}.

Inserting the value of $y_{k,l}$ from \eqref{eq:2D4yklsol} into equation \eqref{eq:2D4evk3} we obtain
\begin{gather}
V_{k+1,l}=V_{k,l} +\frac{\delta_{3} \delta_{1} (c_{k+1}-c_{k})^2}{ (d_{k+1}-d_{k}) (\beta_{l} \alpha_{l-1}-\beta_{l-1} \alpha_{l}) c_{k}^2 c_{k+1}^2}\nonumber \\
\hphantom{V_{k+1,l}=}{} -\frac{\alpha_{l} \delta_{1} \delta_{3}}{ \beta_{l} \alpha_{l-1}-\beta_{l-1} \alpha_{l}} \left[ \frac{1}{(\alpha_{l} d_{k+1}+\beta_{l}) c_{k+1}^2}
 -\frac{1}{(\alpha_{l} d_{k}+\beta_{l}) c_{k}^2} \right]. \label{eq:2D4evk4}
\end{gather}
Therefore we can represent the solution to this equation as
\begin{gather}
 V_{k,l} = \gamma_{l} -\frac{\alpha_{l} \delta_{1} \delta_{3}}{ (\alpha_{l} d_{k}+\beta_{l}) c_{k}^2 (\beta_{l} \alpha_{l-1}-\beta_{l-1} \alpha_{l})}
 +\frac{\delta_{3} \delta_{1} e_{k}}{ \beta_{l} \alpha_{l-1}-\beta_{l-1} \alpha_{l}}, \label{eq:2D4Vklsol}
\end{gather}
where $e_{k}$ is given by the discrete integration
\begin{gather*}
 e_{k+1}-e_{k} =\frac{(c_{k+1}-c_{k})^2}{ (d_{k+1}-d_{k}) c_{k}^2 c_{k+1}^2}, 
\end{gather*}
which is just equation \eqref{eq:2D4ekeqdef0}. Inserting the value of $V_{k,l}$ from equation \eqref{eq:2D4Vklsol} and the value of $y_{k,l}$ from equation \eqref{eq:2D4yklsol} into equation \eqref{eq:2D4Vkldef} we obtain equation \eqref{eq:2D4vsolfin} which is the third part of our solution. Using this value alongside with the value of $w_{k,l}$ from equation \eqref{eq:1D4wkldef} we can compute $z_{k,l}$ from the original system \eqref{eq:2D4sys}. We obtain a \emph{single} compatibility condition given by
\begin{gather*}
 (\beta_{l} \alpha_{l+1}-\beta_{l+1} \alpha_{l}) \gamma_{l+1} -(\beta_{l-1} \alpha_{l}-\beta_{l} \alpha_{l-1}) \gamma_{l} =(\beta_{l} \alpha_{l+1}-\beta_{l+1} \alpha_{l}) \delta_{2}, 
\end{gather*}
which is just equation \eqref{eq:2D4gammleq0}. This concludes the procedure of solution of the $_{2}D_{4}$ equation~\eqref{eq:2D4}.

\subsection[${}_tH_3^\varepsilon$ equation (\ref{eq:tH3e})]{$\boldsymbol{{}_tH_3^\varepsilon}$ equation (\ref{eq:tH3e})}

\begin{subequations} \label{eq:tH3sys}
To solve the \tHeq{3} equation \eqref{eq:tH3e} we start from the equation itself. We apply the transforma\-tion~\eqref{eq:h4aut} to write down the \tHeq{3} equation \eqref{eq:tH3e} as the following system of two coupled equations
 \begin{gather}
 \alpha_{2} (p_{n,l} q_{n+1,l}+p_{n+1,l} q_{n,l})-p_{n,l} q_{n,l}-p_{n+1,l} q_{n+1,l} \nonumber\\
\qquad {}-\alpha_{3} \big(\alpha_{2}^2-1\big)
 \left(\delta^2+\varepsilon^2 \frac{q_{n,l} q_{n+1,l}}{\alpha_{3}^2 \alpha_{2}}\right) =0, \label{eq:tH3l} \\
 \alpha_{2} (q_{n,l} p_{n+1,l+1}+q_{n+1,l} p_{n,l+1})-q_{n,l} p_{n,l+1}-q_{n+1,l} p_{n+1,l+1}\nonumber\\
 \qquad{} -\alpha_{3} \big(\alpha_{2}^2-1\big)
 \left(\delta^2+\varepsilon^2 \frac{q_{n,l} q_{n+1,l}}{\alpha_{3}^2 \alpha_{2}}\right) =0. \label{eq:tH3lp}
 \end{gather}
\end{subequations}
As in the case of the \tHeq{2} equation \eqref{eq:tH2e} we have that equation \eqref{eq:tH3l} depends on~$p_{n,l}$ and~$p_{n+1,l}$ and that equation~\eqref{eq:tH3lp} depends on $p_{n,l+1}$ and~$p_{n+1,l+1}$. So we can apply the translation operator $T_{l}$ to~\eqref{eq:tH3l} to obtain two equations in terms of $p_{n,l+1}$ and $p_{n+1,l+1}$
\begin{subequations} \label{eq:tH3sys2}
 \begin{gather}
 \alpha_{2} (p_{n,l+1} q_{n+1,l+1}+p_{n+1,l+1} q_{n,l+1})-p_{n,l+1} q_{n,l+1}-p_{n+1,l+1} q_{n+1,l+1}\nonumber \\
\qquad{} -\alpha_{3}\big(\alpha_{2}^2-1\big) \left(\delta^2+\varepsilon^2 \frac{q_{n,l+1} q_{n+1,l+1}}{\alpha_{3}^2 \alpha_{2}}\right) =0, \label{eq:tH3l2} \\
 \alpha_{2} (q_{n,l} p_{n+1,l+1}+q_{n+1,l} p_{n,l+1})-q_{n,l} p_{n,l+1}-q_{n+1,l} p_{n+1,l+1}\nonumber \\
\qquad{} -\alpha_{3} \big(\alpha_{2}^2-1\big) \left(\delta^2+\varepsilon^2 \frac{q_{n,l} q_{n+1,l}}{\alpha_{3}^2 \alpha_{2}}\right) =0. \label{eq:tH3lp2}
 \end{gather}
\end{subequations}
The system \eqref{eq:tH3sys2} is equivalent to the original system \eqref{eq:tH3sys}. Then since we can assume $\alpha_{2},\alpha_{3}\neq0$\footnote{If $\alpha_{2}=0$ or $\alpha_{3}=0$ in \eqref{eq:tH3sys} we have that the system becomes trivially equivalent to $q_{n,l}=0$ and $p_{n,l}$ is left unspecif\/ied. Therefore we can discard this trivial case.}
we can solve \eqref{eq:tH3sys2} with respect to $p_{n,l+1}$ and $p_{n+1,l+1}$:
\begin{subequations} \label{eq:tH3pnlppnplp}
 \begin{gather}
 p_{{n,l+1}} ={\frac {
 \left[
 \begin{gathered}
 { \alpha_2}
( q_{{n+1,l+1}}-q_{{n+1,l}})
 \big( {\epsilon}^{2}q_{{n,l}}q_{{n,l+1}}+{\delta}^{2}{{\alpha_3}}^{2} \big)
 \\
 {} +{\delta}^{2} \alpha_2^{2}\alpha_3^{2}
 ( q_{{n,l}}-q_{{n,l+1}} )
 +{\epsilon}^{2}q_{{n+1,l+1}}q_{{n+1,l}}
 ( q_{{n,l}}-q_{{n,l+1}} )
 \end{gathered}\right]}{ ( q_{{n+1,l+1}}q_{{n,l}}-q_{{n+1,l}}q_{{n,l+1}} ) {\alpha_3} { \alpha_2}}}, \label{eq:tH3pnlp} \\
 p_{{n+1,l+1}}={\frac {\left[
 \begin{gathered}
 { \alpha_2} \big( {\epsilon}^{2}q_{{n+1,l+1}}q_{{n+1,l}}+{\delta}^{2}{{ \alpha_3}}^{2} \big) ( q_{{n,l}}-q_{{n,l+1}} ) \\
 {} +{\delta}^{2} \alpha_2^{2}\alpha_3^{2} ( q_{{n+1,l+1}}-q_{{n+1,l}} )
 +{\epsilon}^{2}q_{{n,l}}q_{{n,l+1}} ( q_{{n+1,l+1}}-q_{{n+1,l}} )
\end{gathered}\right]
 }{ ( q_{{n+1,l+1}}q_{{n,l}}-q_{{
 n+1,l}}q_{{n,l+1}} ) { \alpha_3} { \alpha_2}}}. \label{eq:tH3pnplp}
 \end{gather}
\end{subequations}
We see that the right-hand sides of \eqref{eq:tH3pnlppnplp} are functions only of $q_{n,l}$, $q_{n+1,l}$, $q_{n,l+1}$ and $q_{n+1,l+1}$ and are well def\/ined as long as $q_{n,l}$ is not a solution of equation \eqref{eq:tH3qsing}, which is therefore a~\emph{singular case}. Therefore at this point the procedure of solution bifurcates into two cases. We treat them separately.

{\bf Singular case: $\boldsymbol{q_{n,l}}$ solve (\ref{eq:tH3qsing}).} If $q_{n,l}$ solves equation \eqref{eq:tH3qsing} we f\/irst solve this equation with respect to $q_{n,l}$ and then use the system \eqref{eq:tH3sys} to specify $p_{n,l}$. Indeed equation \eqref{eq:tH3qsing} is a~trivial Darboux integrable equation, since it possesses the following two-point, f\/irst-order f\/irst integrals
\begin{subequations} \label{eq:tH3qsingW1W2}
 \begin{gather}
 W_{1}= \frac{q_{n+1,l}}{q_{n,l}}, \label{eq:tH3qsingW1} \\
 W_{2} = \frac{q_{n,l+1}}{q_{n,l}}. \label{eq:tH3qsingW2}
 \end{gather}
\end{subequations}
As remarked in the introduction the existence of a two-point, f\/irst-order f\/irst integral means that the equation is itself a f\/irst integral. Therefore the equation \eqref{eq:tH3qsing} can be alternatively written as $\left( T_{l}-\Id \right)W_{1}$ or $\left( T_{n}-\Id \right)W_{2}$ with $W_{1}$ and $W_{2}$ given by \eqref{eq:tH3qsingW1W2}. From \eqref{eq:tH3qsingW1} we obtain
\begin{gather}
 q_{n+1,l}=\xi_{n}q_{n,l} \label{eq:tH2singsol0}
\end{gather}
being $\xi_{n}$ an arbitrary function of its argument. Equation \eqref{eq:tH2singsol0} is casted into total dif\/ference form by def\/ining $\xi_{n}= a_{n+1}/a_{n}$, with $\alpha_{n}$ a new arbitrary function of its argument. Then we obtain that the general solution of equation~\eqref{eq:tH3qsing} is
\begin{gather}
 q_{n,l} = a_{n}\zeta_{l}, \label{eq:tH3qsingsol}
\end{gather}
where $\zeta_{l}$ is an arbitrary function of its argument.

Let us remark that equation \eqref{eq:tH3qsing} is the \emph{logarithmic discrete wave equation}, since it can be mapped into the discrete wave equation~\eqref{eq:tH2qsing} exponentiating~\eqref{eq:tH2qsing} and then taking $q_{n,l}\to e^{q_{n,l}}$, and it is a discretization of the hyperbolic
partial dif\/ferential equation
\begin{gather*}
 u u_{xt} - u_{x} u_{t} =0, 
\end{gather*}
which is obtained from the wave equation $v_{xt}=0$ using the transformation $v=\log u$. This fact is worth to note since being the transformation connecting \eqref{eq:tH2qsing} and \eqref{eq:tH3qsing} not bi-rational, integrability properties, in this case linearization and Darboux integrability, are not \emph{a priori} preserved~\cite{Grammaticos2005}.

Substituting \eqref{eq:tH3qsingsol} into \eqref{eq:tH3sys2} we obtain the compatibility condition
\begin{gather}
 \zeta_{l+1} - \zeta_{l} =0, \label{eq:tH3singcc}
\end{gather}
i.e., $\zeta_{l}=\zeta_{0}=\text{const}$ and the system \eqref{eq:tH3sys} is now consistent. With this we f\/ind the f\/irst piece of the general solution in this case given by \eqref{eq:tH3qnlsingdef}. Therefore we are left with one equation for~$p_{n,l}$, e.g.,~\eqref{eq:tH3l}. Therefore inserting \eqref{eq:tH3qsingsol} with $\zeta_{l}=\zeta_{0}$ in \eqref{eq:tH3l} and solving with respect to $p_{n+1,l}$ we obtain
\begin{gather*}
 p_{n+1, l} - \frac{\alpha_2 a_{n+1}-a_{n}}{a_{n+1}-\alpha_2 a_{n}} p_{n, l} =\big(\alpha_2^{2}-1\big)
 \frac{\delta^2 \alpha_3^2 \alpha_2+\epsilon^2 a_{n} \zeta_{0}^2 a_{n+1}}{ \alpha_3 \alpha_2 \zeta_{0} (\alpha_2 a_{n}-a_{n+1})}. 
\end{gather*}
Def\/ining
\begin{gather}
 \frac{\alpha_2 a_{n+1}-a_{n}}{a_{n+1}-\alpha_2 a_{n}} = \frac{b_{n+1}}{b_{n}}, \label{eq:tH3bsingdef}
\end{gather}
we have that $p_{n,l}$ solves the equation
\begin{gather}
 \frac{p_{n+1, l}}{b_{n+1}} - \frac{p_{n, l}}{b_{n}} ={\frac {{\delta}^{2}\alpha_3^{2}\alpha_2^{2}b_{{n}}-b_{{n+1}}
\big( {\delta}^{2}\alpha_3^{2}+{\epsilon}^{2}a_{{n}}^{2}\zeta_{{0}}^{2} \big)
{ \alpha_2}+{\epsilon}^{2}a_{{n}}^{2}\zeta_{{0}}^{2}b_{{n}}}{ b_{{n}}a_{{n}}\zeta_{{0}}{ \alpha_2} { \alpha_3} b_{{n+1}}}}. \label{eq:tH3psingeq2}
\end{gather}
Note that $b_{n}$ in \eqref{eq:tH3bsingdef} is given in terms of $a_{n}$ and $a_{n+1}$ through discrete integration and it is the constraint given in equation~\eqref{eq:tH3aeq0}. Equation \eqref{eq:tH3psingeq2} is solved by
\begin{gather*}
 p_{n,l} = b_{n} ( \beta_{l}+c_{n}), 
\end{gather*}
where $c_{n}$ is given by the discrete integration
\begin{gather*}
 c_{n+1} = c_{n} +{\frac {{\delta}^{2}\alpha_3^{2}\alpha_2^{2}b_{{n}}-b_{{n+1}}\big( {\delta}^{2}\alpha_3^{2}+{\epsilon}^{2}a_{{n}}^{2}\zeta_{{0}}^{2} \big){ \alpha_2}+{\epsilon}^{2}a_{{n}}^{2}\zeta_{{0}}^{2}b_{{n}}}{ b_{{n}}a_{{n}}\zeta_{{0}}{ \alpha_2} { \alpha_3} b_{{n+1}}}}, 
\end{gather*}
i.e., as in equation \eqref{eq:tH3ceq0}. This yields the solution of the \tHeq{3} equation \eqref{eq:tH3e} when $q_{n,l}$ satisfy equation~\eqref{eq:tH3qsing}.

{\bf General case: $\boldsymbol{q_{n,l}}$ do not solve \eqref{eq:tH3qsing}.} If the f\/ield $q_{n,l}$ do not solve \eqref{eq:tH3qsing}
we have that we can def\/ine $p_{n,l+1}$ and $p_{n+1,l+1}$ as in \eqref{eq:tH3pnlp} and \eqref{eq:tH3pnplp} respectively. Furthermore these two equations must be compatible. The compatibility condition is obtained applying $T_{l}^{-1}$ to~\eqref{eq:tH3pnplp} and imposing to the obtained expression to be equal to~\eqref{eq:tH3pnlp}. We then f\/ind that $q_{n,l}$ must solve the following equation
\begin{gather}
 \delta^{2}\alpha_2^{2}\alpha_3^{2} \big[ q_{{n+1,l+1}}q_{{n,l}}-q_{{n,l}}q_{{n-1,l+1}} +q_{{n,l+1}} ( q_{{n-1,l}}-q_{{n+1,l}} ) \big]\nonumber \\
\qquad{} - \alpha_2\big( {\epsilon}^{2}q_{{n,l}}q_{{n,l+1}}+\delta^{2}\alpha_3^{2} \big) ( q_{{n+1,l+1}}q_{{n-1,l}}-q_{{n-1,l+1}}q_{{n+1,l}} )\nonumber \\
\qquad{} +{\epsilon}^{2} \big[ q_{{n,l}} q_{{n+1,l+1}} ( q_{{n-1,l}}-q_{{n+1,l}} ) -q_{{n,l+1}}q_{{n-1,l}}q_{{n+1,l}} \big] q_{{n-1,l+1}}\nonumber \\
\qquad{} +{\epsilon}^{2}q_{{n+1,l+1}}q_{{n+1,l}}q_{{n,l+1}}q_{{n-1,l}}=0. \label{eq:tH3qnleq}
\end{gather}
As in the case of the \tHeq{2} equation \eqref{eq:tH2e} the partial dif\/ference equation for $q_{n,l}$ is not def\/ined on the square quad graph of Fig.~\ref{fig:geomquad}, but it is def\/ined on the six-point lattice shown in Fig.~\ref{fig:6pointslattice}. Moreover once equation \eqref{eq:tH3qnleq} is solved we can use indif\/ferently \eqref{eq:tH3pnlp} or \eqref{eq:tH3pnplp} to obtain the value of the f\/ield $p_{n,l}$ since these two merely def\/ines $p_{n,l+1}$ in terms of $q_{n,l}$ and its shifts. Therefore if we f\/ind the general solution of~\eqref{eq:tH3qnleq} the value of $p_{n,l}$ will follow. E.g., if we solve~\eqref{eq:tH3qnleq} applying $T_{l}^{-1}$ to \eqref{eq:tH3pnlp} we will obtain \eqref{eq:tH3pnlsoldefgen} which is then the f\/irst part of the general solution. To f\/ind the solution of equation~\eqref{eq:tH3qnleq} we turn to the f\/irst integrals. Like in the case of the \tHeq{2} equation~\eqref{eq:tH2e} we will f\/ind an expression for $q_{n,l}$ using the f\/irst integrals, and then we will insert it into~\eqref{eq:tH3qnleq} to reduce the number of arbitrary functions to the right one. From~\cite{GSY_DarbouxI} we know that the \tHeq{3} equation \eqref{eq:tH3e} possesses a four-point, third-order integral in the $n$ direction
\begin{gather}
 W_{1} =\Fp{m} \frac {(u_{{n+1,m}}- u_{{n-1,m}})( u_{{n+2,m}}-u_{{n,m}} ) }{ \alpha_{2}^{4}{\epsilon}^{2}{\delta}^{2}
 - \alpha_{2}^{3}u_{{n+1,m}}u_{{n,m}} +\alpha_{2}^{2} \big( u_{{n,m}}^{2}+u_{n+1,m}^{2} -2{\epsilon}^{2}{\delta}^{2}\big)
 -{ \alpha_{2}}u_{{n,m}}u_{{n+1,m}}+{\epsilon}^{2}{\delta}^{2}} \nonumber\\
\hphantom{W_{1} =}{} -\Fm{m} {\frac { (u_{{n+1,m}} -u_{{n-1,m}})( u_{{n+2,m}}-u_{{n,m}} ) }{{
 \alpha_{2}} ( -u_{{n-1,m}}+{ \alpha_{2}}u_{{n,m}} ) ( -u_{ {n+2,m}}+u_{{n+1,m}}{ \alpha_{2}}) }}. \label{eq:W1tH3e}
\end{gather}
We consider the equation $W_{1}=\xi_{n}/\alpha_{2}$\footnote{The extra $\alpha_{2}$ is due to the arbitrariness of $\xi_{n}$ and is inserted in order to simplify the formulas.}, where $W_{1}$ is given by \eqref{eq:W1tH3e}, with $m=2l+1$
\begin{gather*}
 \frac {(u_{n+1,2l+1} -u_{n-1,2l+1} ) ( u_{n+2,2l+1}-u_{n,2l+1} ) }{ ( \alpha_{2}u_{n,2l+1}-u_{n-1,2l+1} ) ( u_{n+2,2l+1}-\alpha_{2}u_{n+1,2l+1} ) }
 =\xi_{n}.
\end{gather*}
Using the substitutions \eqref{eq:h4aut} we have
\begin{gather}
 \frac {(q_{n+1,l} -q_{n-1,l} ) ( q_{n+2,l}-q_{n,l} ) }{( \alpha_{2}q_{n,l}-q_{n-1,l} )( q_{n+2,l}-\alpha_{2}q_{n+1,l}) } =\xi_{n}. \label{eq:tH3kp2}
\end{gather}
This equation contains only $q_{n,l}$ and its shifts. By the transformation
\begin{gather}
 Q_{n,l} = \frac{\alpha_2 q_{n, l}-q_{n-1, l}}{ q_{n+1, l}-q_{n-1, l}} \label{eq:tH3Qdef}
\end{gather}
equation \eqref{eq:tH3kp2} becomes
\begin{gather}
 Q_{n+1, l} +\frac{1}{\xi_{n}Q_{n, l}}=1, \label{eq:tH3kp3}
\end{gather}
which is the same discrete Riccati equation as in \eqref{eq:tH2kp3}. This means that the solution of \eqref{eq:tH3kp3} is given by \eqref{eq:tH2Qsol} with the appropriate def\/initions (\ref{eq:tH2lam}), (\ref{eq:tH2adef}), (\ref{eq:tH2bddef}). We can substitute into~\eqref{eq:tH3Qdef} the solution~\eqref{eq:tH2Qsol}
\begin{gather*}
 \frac{q_{n+1, l}-q_{n-1, l}}{ \alpha_2 q_{n, l}-q_{n-1, l}} = \frac{(c_{n}+\zeta_{l}) (c_{n+1}-c_{n-1})}{ (c_{n+1}+\zeta_{l})(c_{n}-c_{n-1})}
\end{gather*}
and we obtain an equation for $q_{n,l}$. Introducing
\begin{gather}
 P_{n,l} = ( c_{n}+\zeta_{l})q_{n,l} \label{eq:tH3Pdef}
\end{gather}
we obtain that $P_{n,l}$ solves the equation
\begin{gather}
 P_{n+1,l} -\alpha_{2}\frac{c_{n+1}-c_{n-1}}{c_{n}-c_{n-1}} P_{n,l} +\frac{c_{n+1}-c_{n}}{c_{n}-c_{n-1}}P_{n-1,l} =0. \label{eq:tH3Peq}
\end{gather}
Using the transformation
\begin{gather}
 P_{n,l} = \frac{R_{n,l}}{R_{n-1,l}}, \label{eq:tH3Rdef}
\end{gather}
we can cast equation \eqref{eq:tH3Peq} in discrete Riccati equation form
\begin{gather}
 R_{n+1, l}+ \frac{c_{n+1}-c_{n}}{c_{n}-c_{n-1}}\frac{1}{R_{n, l}} = \alpha_2\frac{c_{n+1}-c_{n-1}}{c_{n}-c_{n-1}}. \label{eq:tH3Req}
\end{gather}
Let $d_{n}$ be a particular solution of equation \eqref{eq:tH3Req}
\begin{gather}
 d_{n+1}+ \frac{c_{n+1}-c_{n}}{c_{n}-c_{n-1}}\frac{1}{d_{n}} = \alpha_2\frac{c_{n+1}-c_{n-1}}{c_{n}-c_{n-1}}. \label{eq:tH3deq}
\end{gather}
Assuming $d_{n}$ as the new arbitrary function we can express $c_{n}$ as the result of \emph{two} discrete integrations. Indeed introducing $z_{n}=c_{n}-c_{n-1}$ in~\eqref{eq:tH3deq} we have
\begin{gather}
 \frac{z_{n+1}}{z_{n}} = \frac{(d_{n+1}-\alpha_2)d_{n}}{\alpha_2 d_{n}-1}. \label{eq:tH3zeq}
\end{gather}
Equation \eqref{eq:tH3zeq} represents the f\/irst discrete integration, whereas the second one is given by the def\/inition
\begin{gather}
 c_{n}-c_{n-1}=z_{n}. \label{eq:tH3cneq}
\end{gather}
Now we can linearize the discrete Riccati equation \eqref{eq:tH3Req} by the transformation
\begin{gather}
 R_{n,l}= d_{n} + \frac{1}{S_{n,l}} \label{eq:tH3Sdef}
\end{gather}
and we get the following \emph{linear} equation for $S_{n,l}$
\begin{gather}
 S_{n+1, l} -\frac{d_{n}^2(c_{n}-c_{n-1})}{c_{n+1}-c_{n}}S_{n, l}= \frac{d_{n}(c_{n}-c_{n-1})}{c_{n+1}-c_{n}}. \label{eq:tH3Seq}
\end{gather}
Def\/ining
\begin{subequations} \label{eq:tH3dfdef}
 \begin{gather}
 d_{n} =\frac{e_{n}}{e_{n-1}}, \label{eq:tH3ddef} \\
 f_{n} - f_{n-1} =\frac{c_{n}-c_{n-1}}{e_{n}e_{n-1}}, \label{eq:tH3fdef}
 \end{gather}
\end{subequations}
the solution of \eqref{eq:tH3Seq} is
\begin{gather}
 S_{n, l} = \frac{(f_{n-1}+\beta_{l}) e_{n-1}^2}{c_{n}-c_{n-1}}. \label{eq:tH3Ssol}
\end{gather}
Here we have the f\/inal form of the constraint $f_{n}$, which is the same as the one given in \eqref{eq:tH3feq0}. Inserting \eqref{eq:tH3Ssol} and \eqref{eq:tH3dfdef} into \eqref{eq:tH3Sdef} we obtain
\begin{gather}
 R_{n, l} = \frac{e_{n} (f_{n}+\beta_{l})}{e_{n-1}(f_{n-1}+\beta_{l})}. \label{eq:tH3Rsol}
\end{gather}
Inserting the def\/inition of $R_{n,l}$ \eqref{eq:tH3Rdef} into \eqref{eq:tH3Rsol} we obtain
\begin{gather*}
 \frac{P_{n,l}}{e_{n} (f_{n}+\beta_{l})} = \frac{P_{n-1,l}}{e_{n-1}(f_{n-1}+\beta_{l})}, 
\end{gather*}
i.e.,
\begin{gather}
 P_{n,l} = \gamma_{l}e_{n} (f_{n}+\beta_{l}). \label{eq:tH3Psol}
\end{gather}
Introducing \eqref{eq:tH3Psol} into \eqref{eq:tH3Pdef} we obtain
\begin{gather}
 q_{n,l}= \frac{\gamma_{l}e_{n} (f_{n}+\beta_{l})}{c_{n}+\zeta_{l}}, \label{eq:tH3qsol}
\end{gather}
where $f_{n}$ is def\/ined by \eqref{eq:tH3fdef}, and $c_{n}$ is given by \eqref{eq:tH3zeq} and~\eqref{eq:tH3cneq}, i.e., $c_{n}$ is the solution of the equation
\begin{gather}
 \frac {c_{{n+1}}-c_{{n}}}{c_{{n}}-c_{{n-1}}} = {\frac {e_{{n+1}}-{\alpha_2}e_{{n}}}{{\alpha_2}e_{{n}}-e_{{n-1}}}}, \label{eq:tH3cneq2}
\end{gather}
and $e_{n}$ is an arbitrary function, i.e., we have that $c_{n}$ must solve equation \eqref{eq:tH3eeq0}.

Formally equation \eqref{eq:tH3qsol} has the form of the solution presented in formula \eqref{eq:tH3qnlsoldefgen}, but it depends on three arbitrary functions in the $l$ direction, namely $\zeta_{l}$, $\beta_{l}$ and $\gamma_{l}$. Therefore there must be a constraint between these functions. This constraint can be obtained by plugging~\eqref{eq:tH3qsol} into~\eqref{eq:tH3qnleq}, but here we have another bifurcation depending on the value of parameter~$\delta$. Indeed it is easy to see that we must distinguish the cases when $\delta\neq0$ and when $\delta=0$.

{\bf Case $\boldsymbol{\delta\neq0}$.} Inserting \eqref{eq:tH3qsol} into \eqref{eq:tH3qnleq} if $\delta\neq0$ factorizing the $n$ dependent part away we are left with
\begin{gather*}
 \zeta_{l+1} - \zeta_{l}= \frac{\epsilon^2}{\alpha_2 \delta^2 \alpha_3^2} \gamma_{l+1}\gamma_{l} (\beta_{l+1}-\beta_{l}).
\end{gather*}
This equation tells us that the function $\zeta_{l}$ can be expressed after a discrete integration in terms of the two arbitrary functions $\beta_{l}$ and $\gamma_{l}$. This condition is just \eqref{eq:tH3gameq0}. This yields the general solution of the \tHeq{3} equation \eqref{eq:tH3e} when $\delta\neq0$ and the f\/ield $q_{n,l}$ do not satisfy equation \eqref{eq:tH3qsing}.

{\bf Case $\boldsymbol{\delta=0}$.} Inserting \eqref{eq:tH3qsol} into \eqref{eq:tH3qnleq} if $\delta\neq0$ factorizing the $n$ dependent part the compatibility condition is just $\beta_{l+1}-\beta_{l}=0$, i.e., $\beta_{l}=\beta_{0}=\text{const}$. It is easy to check that the obtained value of $q_{n,l}$ through formula \eqref{eq:tH3qsol} is consistent with the substitution of $\delta=0$ in~\eqref{eq:tH3sys} and therefore that in the case $\delta=0$ the value of $q_{n,l}$ is given by
\begin{gather*}
 q_{n,l}= \frac{\gamma_{l}e_{n} (f_{n}+\beta_{0})}{c_{n}+\zeta_{l}}, 
\end{gather*}
where the functions $c_{n}$ and $f_{n}$ are def\/ined implicitly and can be found by discrete integration from \eqref{eq:tH3fdef} and~\eqref{eq:tH3cneq2} respectively. This is just equation~\eqref{eq:tH3qnlsoldef0}. Since equation~\eqref{eq:tH3pnlsoldefgen} is not singular if $\delta=0$ we obtain that in this case the general solution for the f\/ield $p_{n,l}$ is given by substituting $\delta=0$ into~\eqref{eq:tH3pnlsoldefgen}, i.e., just by equation~\eqref{eq:tH3pnlsoldef0} where $q_{n,l}$ is simply given by~\eqref{eq:tH3qnlsoldef0}. This yields the general solution of the \tHeq{3}
equation~\eqref{eq:tH3e} in the case when $\delta=0$.

\subsection*{Acknowledgements}

We would like to thank Professor Decio Levi for the many interesting and fruitful discussion during the preparation of this paper. We are also grateful to the anonymous referees whose comments greatly helped us in improving the paper.

GG has been supported by INFN IS-CSN4 \emph{Mathematical Methods of Nonlinear Physics} and by the Australian Research Council through an Australian Laureate Fellowship grant FL120100094.

\pdfbookmark[1]{References}{ref}
\LastPageEnding

\end{document}